\DeclareMathOperator*{\argmax}{arg\,max}
\newtheorem{theorem}{Theorem}
\newtheorem{lemma}{Lemma}
\newtheorem{proposition}{Proposition}
\newenvironment{proof}[1][Proof]
{\textbf{#1.} }{\ \rule{0.5em}{0.5em}}
\begin{document}

\title{Gacha Game: \\ When Prospect Theory Meets Optimal Pricing}
\author{ Tan Gan\thanks{I thank Nicholas Barberis, Dirk Bergemann, Krishna Dasaratha, Johannes H\"{o}rner, Ryota Iijima, Jonathan Libgober, Weicheng Min, Philipp Strack, Kaihao Yang, and Xingtan Zhang for helpful suggestions. I would also like to thank conference participants at the 33rd Stony Brook conference, Econometric society meeting (China 2021 summer, NA 2022 summer), 92nd SEA conference, and 2022 YES conference for their helpful comments. The current draft has been edited by AJE.
		Tan Gan: Department of Economics, Yale University, New Haven, CT 06511, \texttt{%
			tan.gan@yale.edu}.}}

\date{\today }
\maketitle

\begin{abstract}
I study the optimal pricing process for selling a unit good to a buyer with prospect theory preferences. In the presence of probability weighting, the buyer is dynamically inconsistent and can be either sophisticated or naive about her own inconsistency.

If  the buyer is naive, the uniquely optimal mechanism  is to sell  a ``loot box'' that delivers the good with some constant probability in each period. 
In contrast, if the buyer is sophisticated, the uniquely optimal mechanism introduces worst-case insurance: after  successive failures in obtaining the good from all previous loot boxes, the buyer can purchase the good  at full price.

\end{abstract}

\medskip \noindent \textsc{Keywords: }prospect theory, dynamic pricing, probability weighting, dynamic inconsistency, loot boxes.

\medskip

\noindent \textsc{JEL Classification: D40, D42, D81, D90}.\newpage

\section{Introduction}

In various entertainment-related sectors, including card collection, penny auctions, crane machines, blind boxes, and video games, products are often sold using stochastic pricing processes rather than fixed posted prices. A prevailing justification for this approach is that consumers exhibit gambling preferences that diverge from expected utility maximization. Among the decision-theoretic models of risk attitudes, \cite{TVERSKYKAHNEMAN79}'s (cumulative) prospect theory is one of the most prominent alternatives that explains preference for gambling. This paper explore the pricing problem for an expected-revenue-maximizing seller who attempts to sell a unit good to a cumulative prospect theory (CPT) buyer, with a particular focus on probability weighting.\footnote{Other elements of prospect theory, like loss aversion, are not the primary focus of this paper. For instance, Section 4.1 demonstrates that my principal finding remains valid for rank-dependent expected utility, where only probability weighting is featured.} The optimal dynamic design provides a theoretical rational for pricing processes observed in reality.

As the leading example of this paper, gacha games are video games that employ loot boxes as their central monetization strategy. Selling loot boxes is becoming increasingly popular in the gaming industry, whose market size in the US reached \$85 billion  in 2021.\footnote{In comparison, video streaming apps (including Youtube, Netflix and Tiktok) generated 24.1 billion in the US in 2020. All North American sports had a market size of \$71 billion in 2018 (pre-pandemic).  The movie industry generated \$11.89 billion in box office revenue  in the US \& Canada in 2018 (pre-pandemic).  The music industry had a market size of \$9.8 billion in the US in 2021. These numbers come from \url{https://www.statista.com}.} In fact, 60\% of the leading games on both the Google Play store and iPhone App store incorporated loot boxes, as highlighted by \cite{Zendle2020}.
Consider the case of Genshin Impact. The gacha game achieved \$3 billion in revenue during its first year. To obtain valuable virtual weapons in this game, players must purchase multiple loot boxes. Priced at \$2, each loot box offers a 0.6\% chance of providing the desired product. If players do not get the product even after 90 purchases, the game ensures they receive it—a situation that arises with a 58\% likelihood. This mechanism, termed the "pity system," originated with Granblue Fantasy in 2016 and soon became a feature in many gacha games, albeit with variations in parameters.
In contrast, prior to the innovation of Granblue Fantasy,  the pricing processes in gacha games were stationary. For example, a stationary loot design, which leads to the same expected payment as the example before, would price each loot box at \$2 and delivers the product with probability 1.6\%.

These two different mechanisms in reality match the  theoretically optimal dynamic pricing process for sophisticated and naive buyers, respectively. This distinction between sophistication and naivety comes from the dynamic inconsistency caused by probability weighting, as noted by \cite{Barberis12}. Manipulating the dynamically inconsistent incentive constraints caused by probability weighting is also the key feature and challenge of the model.

Apart from probability weighting, the buyer in the model has a standard quasilinear utility function, and I assume that her valuation of the unit good is publicly known in most parts of this paper. In this simple setup, the pricing problem for the standard EUT buyer is trivial: the seller simply sells the good deterministically at a price equal to the buyer's valuation. Focusing on this simple setting highlights that introducing randomization has nothing to do with  screening the private types of the buyer.

I first present the  analysis with a sophisticated buyer. The main result considers the design problem of a continuous-time dynamic pricing process. The seller designs a deterministic cumulative payment function and determines a random stopping time when the good is delivered to the buyer. In the direct mechanism, the buyer always remains in the process and increases her cumulative payment over time according to the payment function  until the good is delivered according to the random stopping time. 
The incentive constraint for the design problem is dynamic individual rationality (DIR), which requires that, at each time before the buyer receives the good, she finds that remaining in the process is preferable to quitting and forgoing previous payment. At each time, the buyer updates her belief about the stopping time, conditional on the event that the good has not been delivered. Bayesian updating, in conjunction with nonlinear probability weighting, causes dynamic inconsistency. The designer must track all posteriors, instead of posterior means, to ensure that the DIR constraints are satisfied, which is the main technical challenge of the framework.

The optimal mechanism is unique in terms of the distribution of the total payment and can be approximated by the following discrete-time dynamic process. At the beginning of the process, the seller sells loot boxes at a fixed price, each of which delivers the good with a fixed probability. If the buyer is unlucky and does not obtain the good after purchasing $N$ boxes, the seller will sell the good deterministically to her at a price that equals her value. This pricing process resembles the current pricing process adopted in the gacha games industry in that it modifies the stationary loot box mechanism by providing worst-case insurance at the end of the process. The details of the insurance mechanisms at the end of the two processes appear different, but with optimally tuned parameters, the difference in profit is very small. This is because both mechanisms protect the buyer from the large tail payment, which the CPT buyer overevaluates, and attach an intermediate tail probability, which the CPT buyer underevaluates, to the majority of its payment realizations.\footnote{See Section 3.5 for detailed discussion. For readers who believe in market efficiency and entrepreneurship, this similar performance suggests the plausibility of an optimal mechanism. For others, the small difference raises an interesting empirical question of whether the  optimal mechanism can lead to higher revenue in real games, as the probability design of gacha systems is an ongoing innovation in the industry.} %%EDITOR'S NOTE: Please ensure that the intended meaning has been maintained in this edit.

Stationary pricing processes perform poorly when the buyer is sophisticated. In contrast, they are optimal when the buyer is naive about her dynamic inconsistency.  I consider a discrete-time dynamic design problem. In each period, the seller can provide one  menu that consists of an allocation probability and a random price, and the buyer decides whether to purchase. In the generically unique optimal solution, the seller sells a ``loot box'' that delivers the good with some constant probability in each period.  The naive buyer always believes that she will try her luck just one last time and quit thereafter. However, she ultimately continues purchasing in every period until she finally receives the good.

From a theoretical perspective, this paper is the first study to characterize a dynamic design problem where the agent—whether sophisticated or naive—exhibits dynamic inconsistency due to probability weighting. The methods I  develop might inspire future theoretical exercises in this class of problem. From a modeling standpoint, this paper introduced a crucial assumption: payment can only be one-sided. This constraint is suitable for discussing optimal pricing for product sales, as numerous nations have explicitly prohibited gaming companies from remunerating consumers. The typical argument is that setting negative prices is tantamount to gambling, which is either illegal or necessitates specific authorization in many jurisdictions. If the price can be arbitrarily negative, as it is in the gambling industry, then the seller can extract infinite profit from the buyer with reverse-S shape probability weighting (\cite{AzevedoGottlieb12}).

The discernible difference in profit, with and without the non-negativity constraint on price, suggests a theoretical distinction between stochastic pricing processes and gambling. This observation might contribute to the ongoing discussions surrounding the regulation of loot boxes,\footnote{For example, Spain is considering tighter restrictions on gacha games by requiring players to provide official proof of age and implementing mandatory spending limits. The UK government has called for the purchase of loot boxes to be made unavailable to children and young people unless they are approved by a parent or guardian. The Netherlands government is proposing a much harsher rule that would completely ban the use of loot boxes in the country.} by offering a perspective that supports a differentiated regulatory approach. However, a mere prohibition on sellers transferring money to buyers does not adequately ensure that the payment is one-sided. Buyers must also be restricted from profiting through the resale of these products; failing this, sellers may circumvent the constraint by introducing multiple items suitable for resale in secondary markets, essentially realizing a covert negative price. Given this, regulatory authorities should exhibit heightened vigilance when firms employ stochastic pricing, particularly for tangible goods or transferable virtual items that boast considerable secondary market value. This perspective aligns with extant legal paradigms. For instance, a 2018 study by the Netherlands Gaming Authority on ten undisclosed games found that four contravened national gambling regulations. The contention was that loot box prizes from these games held intrinsic market value, given their tradeability outside the game environment.

Beyond the gacha game industry, the theoretical model might shed light on other economic applications. As an illustration, the discussion session lists two applications where the main result characterizes the optimal solution in a equivalent model. The first concerns how  a firm could optimally design the random arrival time of a promotion to maximize the expected effort a CPT worker exerts, subject to his or her dynamic participation constraints. The second application concerns how a queuing system could optimally design the random arrival time of the service to minimize the expected burden to the system, subject to the dynamic constraint to keep the CPT agent in line.

\paragraph{Literature Review} 
This paper relates to the literature that explores the implication of the \cite{KoszegiRabin06,KoszegiRabin07} loss-aversion on pricing or mechanism design problems: \cite{HeidhuesKoszegi08},   \cite{KoszegiRabin09}, \cite{HerwegMuller10}, \cite{LangeRatan10}, \cite{CARBAJALEly13}, \cite{KoszegiHeidhues14}, \cite{ElyCarbajal16} and \cite{Rosato16}. 
My paper is orthogonal or complement to this line of research because the driving force of the paper is the reverse-S shape probability weighting, which leads to a preference for certain forms gambling. In contrast, as \cite{MR16} point out, the linear utility under \cite{KoszegiRabin06} loss-averse framework corresponds to a concave probability weighting function in my specification, which leads to risk aversion. 
For example, \cite{KoszegiHeidhues14} considers the static optimal  pricing problem with a loss-averse agent, who has known value on the product. They show a random price is optimal when the buyer features separate evaluation, which is driven by the buyer's aversion to lose the product that is expected to be purchased. The buyer in their model suffers from randomness of the payment and if the buyer features joint evaluation, a deterministic price is optimal. My paper consider a dynamic pricing problem with a buyer, who has known value and intrinsically seeks certain risk driven by probability weighting. My main result holds under both separate evaluation and joint evaluation, as it is also applicable to rank-dependent expected utility (RDEU) as in  \cite{QUIGGIN1982}.

There are a few papers that explore implications of other non-expected utility framework on design problems. For example,
\cite{GMSZ21} consider the auction design for bidders equipped with non-expected utility preferences that exhibit constant risk aversion (CRA), which in their framework includes \cite{KoszegiRabin06}'s loss aversion.  \cite{GMSZ23} further study a generalization of the classical monopoly insurance problem under adverse selection. In contrast, this paper studies agents who have reverse-S shape probability weighting and exhibit behavior patterns of risk-seeking.
For more detailed reviews on  behavioral mechanism design or behavioral contract theory, see \cite{Spiegler11} and \cite{2014Botond}.

More broadly, this paper contributes to the literature on optimal dynamic design problem with dynamically inconsistent buyers. The current literature has discussed inconsistency caused by temptation \cite{AmadorWerningAngeletos06} or from present bias \cite{RabinDonoghue01}, \cite{HeidhuesKoszegi10}, \cite{MarinaPierre14}, \cite{Galperti15},\cite{Bard20},\cite{GottliebZhang21}.  My paper is the first to explore this problem where the buyer's dynamic inconsistency comes from her belief updating about a future random payment, as highlighted by \cite{Barberis12} and \cite{StrackEbert15}. 
 This is a completely different channel because buyers with probability weighting will have dynamically consistent preferences if the future allocation is deterministic and the key of the design is to manipulate the randomness. There is a related paper in applied probability \cite{XYZ13} that explores the optimal precommitted stopping time for a geometric Brownian motion when the decision maker features probability weighting. Their precommitment precludes the impact of dynamic inconsistency and the random process in their framework is not a design object.

From the perspective of applications, by offering a profit-maximization rationale for the design of loot boxes, this paper contributes to the fast-growing and interdisciplinary literature on loot boxes in video games: \cite{DrummondS18},  \cite{CEHL21},
\cite{XHNN22}, \cite{LEMMENS2022100023}. \cite{CDW23}, or more broadly, the empirical studies on the implications of consumers' preferences for gambling  in certain industries. For example, the retail industry \cite{MSA17}; the  lottery industry, \cite{Clotfelter90}, \cite{KEARNEY2005}, and \cite{Benjamin21}.

Prospect theory is an important framework but not the only one used to empirically explain the preference for risk. Other candidates include RDEU (\cite{QUIGGIN1982}), risk-seeking EU or S-shaped utility, and salience theory \cite{Shleifer12,Shleifer13b}. My main results also hold under RDEU, as discussed in Section 4.1 and 4.3. The design problem is not well-defined under risk-seeking EU or S-shaped utility even with one-sided payment. I discuss the implications of salience theory in Section 6.2, where I show that the basic result remains qualitatively the same.

\section{Model: Sophisticated Buyer}
\paragraph{Environment}
A seller is selling a unit good to a buyer who behaves according to the cumulative prospect theory (CPT). Whenever the buyer is making a decision, she needs to evaluate $(X,T)$, a probability distribution $\Delta (\{0,1\} \times \mathbb{R}^+)$.
$X$ denotes a binary random variable that represents the random allocation of the unit good, where $X=1$ means the buyer gets the product. $T$ denotes the non-negative random price that the buyer has to pay. 
\paragraph{The Buyer's Preference}
The payoff function $V(X,T)$ of the buyer is defined as follows. 
\begin{gather*}
	V(X,T)
	=\theta w_+ (P(X  =1 )  ) - \lambda \int_{0}^{\infty} w_-( P(  T > y )  ) dy,
\end{gather*}
where $\theta$ is the buyer's value of the product and is commonly known. $w_{+}$ and $w_-: [0,1] \to [0,1]$ are two  probability weighting functions. $w_{\pm}(p)$ is strictly increasing with $w_{\pm}(0)=0$, $w_{\pm}(1)=1$. Moreover, as illustrated in Figure \ref{pic6}, the probability function takes a reverse S shape:  $w'_{\pm}(0),w'_{\pm}(1)>1$, and $w'''_{\pm}(p) > 0$.\footnote{This condition implies that there exists $p_{\pm} \in (0,1)$  such that $w_{\pm}$ is strictly concave in $[0,p_{\pm}]$ and strictly convex in $[p_{\pm},1]$, Strictness is not needed for all theorems about optimality. I only need it for uniqueness.}
I illustrate $w(p)$ in Figure \ref{pic6} using the classical parameterized function $\frac{p^{\gamma}}{(p^{\gamma} + (1-p)^{\gamma}  )^{1/\gamma}   }$
proposed by  \cite{TVERSKYKAHNEMAN92}. Throughout the paper, all plots and numerical calculations use this specific functional form, and I maintain $\gamma$ to denote its parameter. 

 Note that in the special case where $w_-(p) = w_+(p)=  p$ and $ \lambda = 1$, the payoff function $V$ reverts to the standard expression for the expected utility theory (EUT).
\begin{align*}
	V_E  &= \theta P(X = 1) - \mathbb{E} T 
	 = \theta P(X = 1) -  \int_{0}^{\infty} P(  T > y )   dy
\end{align*}
To better illustrate the difference between CPT and EUT, suppose that the random payment $T$ yields a density $f$. Then,
\begin{gather*}
	\int_{\mathbb{R}^+} w_- (P(T>y) dy= \int_0^{\infty}    y f(y) w_-'(P(T>y) dy.
\end{gather*}
The density form intuitively illustrates the idea that prospect theory consumers overevaluate the probability of extreme events since $w'_-(p)>1$ if and only if $p$ is close to 0 or 1.

\begin{figure}[h]
		\centering
		\includegraphics[width=8cm]{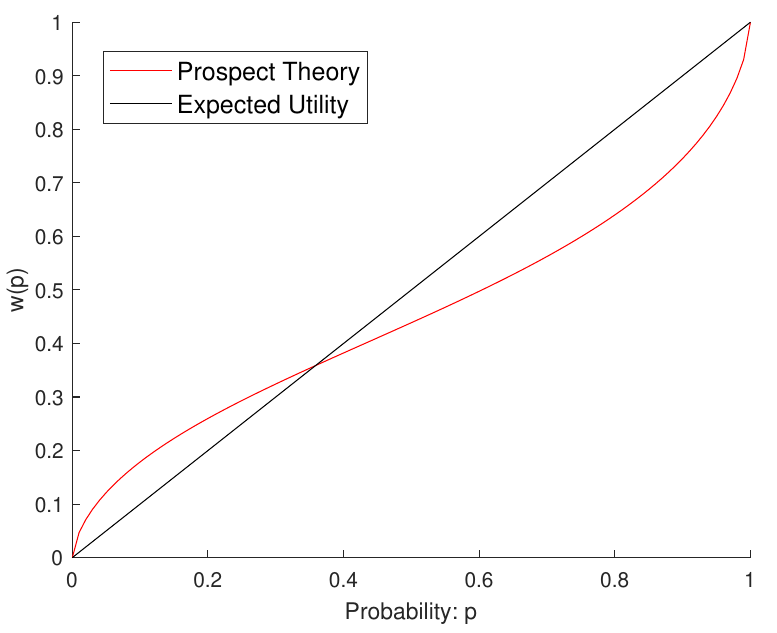}
		\hspace{0.1cm}
	\caption{Probability Weighting Function $
		\frac{p^{\gamma}}{(p^{\gamma} + (1-p)^{\gamma}  )^{1/\gamma}}$ with $\gamma=0.65$.}
			\label{pic6}
\end{figure}

There are several underling model choices behind the formulation of the payoff function $V$. In particular, I assume the buyer is risk neutral, the reference point is the initial condition at the time of decision, and separate evaluation. I elaborate the details  in Section 4.2, and describe the robustness of results to other specifications (joint accounting, rank-dependent expected utility, salience theory) in Sections 4.1, 4.3 and 6.2.

\paragraph{Design Objectives}
The seller designs a continuous time dynamic pricing process. In particular, the seller designs a stopping time $\tau$ over $\mathbb{R}^+ \cup +\infty$ and a deterministic cumulative payment function $T(t) : \mathbb{R}^+ \to \mathbb{R}^+$. The stopping time $\tau$ represents the time when the buyer obtains the good, and the cumulative payment $T(t)$ characterizes the payment the buyer makes if she remains in until (but not including) time $t$. 

At each time $t$, the buyer observes whether  she receives the good $\{\tau \leq t \}$. If not, she needs to decide whether to remain in the process. If she quits, she forgoes her cumulative payment $T(t)= T(t-) = \lim_{s <t \to t} T(s)$, whereas if she remains in the process, she needs to increase her payment according to $T(t)$. If  $T(t+)-T(t) = \lim_{s >t \to t} T(s) - T(t) > 0$ at $t$, she needs to pay a lump sum payment at that instance. 
$T(0) = 0$, and the cumulative payment $T(t)$ is weakly increasing over time.

\paragraph{Discussion about Pricing Processes}

The pricing processes is a generalization of static post price: to sell the product deterministically at a post price $p_0$ one can set $\tau=t_0>0$ and $T(t) = 1_{t>0} p_0$.  The rationale behind constraining the cumulative price 
$p(t)$ to be both deterministic and increasing stems from the fact that, within the relevant context, \emph{payments are one-sided, and consumers lack the commitment to future payments}. Ultimately, all transactions need deterministic execution. In order to implement an ex-ante random payment, the seller, endowed with commitment power, has to design a random allocation rule and a deterministic payment path that ensures it remains optimal for the buyer to complete payment at any given time in the process.\footnote{If $T(t)$ can be either random or decreasing, then the optimal solution coincides with the benchmark characterized in Proposition \ref{prop1}, where the buyer has commitment power. If $T(t)$ can be arbitrarily negative, then the profit is unbounded according to the construction of \cite{AzevedoGottlieb12}.}

\paragraph{No Discounting}
Because the dynamic pricing process is just a way to implement ex-ante random payment while addressing the buyer's lack of commitment, the seller can make the iterative process ``as fast'' as he wants. This
negates the need for discounting within the model. Should discounting be incorporated, the supremum attainable profit would equal the optimal profit characterized in my framework. It can not be exactly  achieved by any pricing processes, but can be approximated by infinitely accelerating the optimal process characterized in the paper.

\paragraph{Dynamic Individual Rationality}
The buyer is sophisticated, so that she can predict what herself does in the mechanism. Thus, I invoke revelation principle and focus on direct mechanisms in which the buyer remains in the process until she obtains the good. The DIR condition requires that at each time $s~ s.t. ~ P(\tau > s) > 0$, the consumer believes that remaining forever is weakly better than quitting immediately. At the time of the decision $s$, the buyer has not got the product $(\tau>s)$, and she will eventually get the product if $\tau<\infty$ and the additional payment is $T(\tau)-T(s)$, so the DIR is
\begin{gather*}
 \theta   w_+(P(\tau < \infty | \tau >s )) - \lambda \int_{0}^{\infty} w_-( P( T (\tau) - T(s) > y  | \tau >s )  ) dy \geq 0.
\end{gather*}

\paragraph{Regularity of $T(t)$}
$T(t)$ is left continuous by definition. $T(t)$ is also increasing because at any moment, the payment has to be nonnegative. In addition, I focus on $T(t)$ with the following ``regular decomposition'' $\mspace{-8.2mu}$.\footnote{According to the Lebesgue decomposition theorem, an increasing function $T(t)$ might have a third additive separable component: $T(t) = T_1(t) + T_2(t) + T_3(t)$. This last component $T_3(t)$ is ``irregular'' in the sense that it is continuous and its derivative is almost everywhere 0, but it is not a constant function. I ignore this third part in the design problem.}
\begin{gather*}
	T(t) = \int_0^t f(t) + \sum_{i=0}^{\infty} 1_{t_i < t} T_i,
\end{gather*}
where $f(t)$ is a nonnegative integrable function, $(t_i)_i$ and $(T_i)_i$ are  countable sequences of nonnegative numbers with $t_i$ increasing in $i$. The interpretation is that the buyer pays a flow payment of $f(t)$ at any time $t$ and pays a lump sum payment $T_i$ at time $t_i$ for $i=1,2,...,\infty$.

In the same spirit, I focus on $\tau$, whose cumulative density function, $F(t) = P(\tau \leq t)$, admits a regular decomposition: 
\begin{gather*}
	F(t) = F_1(t) + F_2(t),
\end{gather*}
where $F_1$ is absolutely continuous and $F_2$ is a jump function. Both $F_1$ and $F_2$ are increasing and nonnegative.

\paragraph{The Design Problem}
The seller chooses the optimal stopping time $\tau$ and cumulative payment $T(t)$ to maximize his expected profit:
\begin{gather}
	\max_{\tau,T(t)}  \int_0^{\infty} P(T(\tau ) > y)  dy,  \nonumber\\
	s.t.  \quad  \theta   w_+(P(\tau < \infty | \tau >s )) - \lambda \int_{0}^{\infty} w_-( P( T (\tau) - T(s) > y  | \tau >s )  ) dy \geq 0, \nonumber \\  \quad \forall s,~ s.t. ~ P(\tau > s) > 0  \qquad \text{(DIR)} \label{eq_IC}.
\end{gather}

\section{Analysis}
\subsection{Benchmark: Ex ante IR}
This subsection considers the design problem where only the ex ante IR of the buyer is respected. Equivalently, this is the design problem where the buyer can commit on the behavior of her future self. The purpose is twofold. First, it illustrates the basic driving force of the reverse-S shape probability weighting function, and helps to understand the shape of the optimal pricing process in the main result. Second, it illustrates that the dynamic process is beneficial only because it can solve the issue of participation caused by the dynamic inconsistency. The benchmark design problem with only ex ante IR is 
\begin{gather*}
	\max_{\tau,T(t)}  \int_0^{\infty} P(T(\tau ) > y)  dy,  \nonumber\\
	s.t.  \quad  \theta   w_+(P(\tau < \infty )) - \lambda \int_{0}^{\infty} w_-( P( T (\tau)  > y  )  ) dy \geq 0.
\end{gather*}

\begin{figure}[h]
	\centering
		\centering
		\includegraphics[width=8cm]{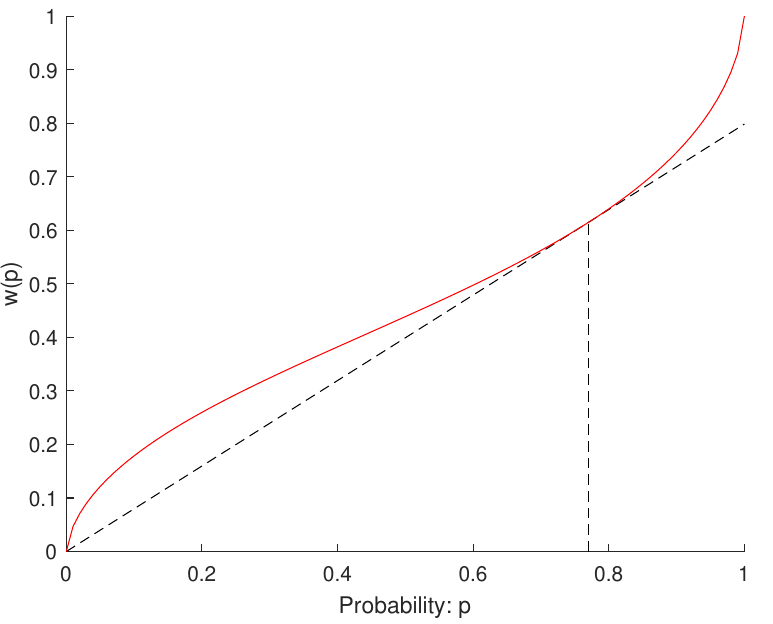}
		\caption{Determining the Multiplier} 	\label{pic2}
		\hspace{0.1cm}
\end{figure}

Clearly, it is optimal to set $P(\tau < \infty )=1$ or, equivalently, to always deliver the good to the buyer. Additionally, designing $T(\tau)$ is equivalent to designing the tail distribution function $P(y) = P(T(\tau)>y)$ such the design problem can be rewritten as:
\begin{gather}
	\label{program1}
	\Pi(\theta) = \max_{P(y)}   \int_0^{\infty} P(y) dy , \\
	s.t. \quad    \lambda \int_0^{+\infty} w_-( P( y )  ) dy \leq  \theta, \nonumber\\
	\nonumber P(y) \text{ is decreasing and right continuous}.  
\end{gather}
Before presenting the result, let me introduce the following  multiplier:
\begin{gather*}
	\mu^* = \max_{p\in (0,1]}  \frac{p}{w_-(p)}  
\end{gather*}

As illustrated in Figure \ref{pic2}, $\frac{1}{\mu^*}$ is the slope of the tangent line. The tangent point is clearly unique and interior. I summarize these observations in the following lemma.
\begin{lemma}
	\label{lemm1}
	The multiplier $\mu^*$ exceeds one, and there exists a unique $p^*\in(0,1)$ such that $\mu^*  w_-(p^*) = p^*$.
\end{lemma}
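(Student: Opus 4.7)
The plan is to study the scalar function $h(p) := p/w_-(p)$ directly on $(0,1]$, whose supremum is exactly $\mu^*$, and to pin down its unique maximizer using the reverse-S shape of $w_-$. First I would check the endpoints: $h$ extends continuously to $[0,1]$ with $h(0^+) = 1/w'_-(0) < 1$ (since $w'_-(0) > 1$) and $h(1) = 1$. This already ensures that a maximizer exists and lies in $(0,1]$; the remaining work is to locate it and show it is unique and interior.

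Next I would compute $h'(p) = [w_-(p) - p w'_-(p)]/w_-(p)^2$, so critical points of $h$ are exactly the zeros of $\phi(p) := w_-(p) - p w'_-(p)$. A one-line differentiation yields $\phi'(p) = -p w''_-(p)$, which is the place where the reverse-S assumption does the work. By the footnote's consequence of $w'''_- > 0$ and $w'_-(0), w'_-(1) > 1$, there is a unique inflection point $p_- \in (0,1)$ with $w''_- < 0$ on $(0, p_-)$ and $w''_- > 0$ on $(p_-, 1)$. Hence $\phi$ is strictly increasing on $[0, p_-]$ and strictly decreasing on $[p_-, 1]$, with $\phi(0) = 0$, $\phi(p_-) > 0$, and $\phi(1) = 1 - w'_-(1) < 0$. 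It follows that $\phi$ vanishes at a unique point $p^* \in (p_-, 1) \subset (0,1)$, and that $h' > 0$ on $(0, p^*)$ and $h' < 0$ on $(p^*, 1)$. Therefore $p^*$ is the unique maximizer of $h$ on $(0,1]$, so $\mu^* w_-(p^*) = p^*$.

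To obtain the strict inequality $\mu^* > 1$, I would simply observe that $h'(1) = 1 - w'_-(1) < 0$ while $h(1) = 1$, so $h(p) > 1$ for $p$ slightly below $1$, giving $\mu^* = h(p^*) \geq h(p) > 1$.

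I expect the only delicate step to be the passage from the hypotheses $w'''_- > 0$, $w'_-(0) > 1$, $w'_-(1) > 1$ to the single sign change of $w''_-$ (and therefore of $\phi'$); but this is precisely the content cited in the paper's footnote, so once invoked the rest of the argument reduces to monotonicity bookkeeping on $\phi$ and $h$.
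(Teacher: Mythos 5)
Your proof is correct. It reaches the same conclusion as the paper but by a different mechanism: you run a first-derivative analysis of $h(p)=p/w_-(p)$, reducing everything to the sign of $\phi(p)=w_-(p)-pw'_-(p)$ and using the identity $\phi'(p)=-p\,w''_-(p)$ together with the single sign change of $w''_-$ to show $\phi$ is single-crossing, hence $h$ is strictly increasing then strictly decreasing with a unique interior peak. The paper instead argues in two geometric steps: first, strict concavity of $w_-$ on $[0,p_-]$ with $w_-(0)=0$ forces $p/w_-(p)$ to be increasing there, confining any maximizer to $[p_-,1]$; second, uniqueness follows because the line $p/\mu^*$ is a supporting line of $w_-$ from below at $p^*$, and differentiability plus strict convexity on $[p_-,1]$ make that supporting line unique and every other point of the convex region lie strictly above it. The trade-off is mild: your route needs $w_-$ to be twice differentiable (which the standing assumption $w'''_->0$ grants), whereas the paper's supporting-line argument only uses first differentiability plus the concave/convex shape; conversely, your computation of $\phi$ and its endpoint values $\phi(0)=0$, $\phi(1)=1-w'_-(1)<0$ makes the location $p^*\in(p_-,1)$ and the inequality $\mu^*>1$ fall out mechanically, where the paper has to argue $\mu^*>1$ separately from $w'_-(1)>1$. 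Both arguments lean on exactly the same structural input (the unique inflection point guaranteed by $w'''_->0$ and the boundary slope conditions), so the proofs are of comparable depth.
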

Now it is straightforward to establish an upper bound for the expected revenue:
\begin{align*}
	\int_0^{\infty} P(y) dy  &=  \int_0^{\infty} w_-(P(y))  \frac{P(y)}{w_-(P(y))} dy\\
	&\leq  \int_0^{\infty} w_-(P(y))  \mu^* dy  \leq \frac{\mu^*\theta}{\lambda }.
\end{align*}
In addition, this upper bound can be achieved by the following binary random variable.
\begin{proposition}
	\label{prop1}
	The maximum profit in program \ref{program1} is $\mu^* \theta / \lambda $ and the unique optimal tail distribution of price $P^*(y)$  is a binary random distribution given by:
	\begin{gather*}
		P^*(y)  = \begin{cases}
			p^* \quad   &\text{ if }   y < \frac{\mu^* \theta}{ p^* \lambda},\\
			0 &\text{ if }   y\geq \frac{\mu^* \theta}{ p^* \lambda}.
		\end{cases}   
	\end{gather*}
\end{proposition}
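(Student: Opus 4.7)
The approach splits into two parts: verifying that the proposed $P^*(y)$ attains the upper bound $\mu^*\theta/\lambda$ that the excerpt has already derived, and then arguing uniqueness by tracing back through that chain of inequalities to identify all cases of equality.

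For the achievability part, I would just plug $P^*$ in. Since $P^*$ is constant $p^*$ on $[0,\mu^*\theta/(p^*\lambda))$ and $0$ afterward, the IR constraint evaluates to $\lambda w_-(p^*) \cdot \mu^*\theta/(p^*\lambda) = \theta\, w_-(p^*)\mu^*/p^* = \theta$ by the defining relation $\mu^* w_-(p^*)=p^*$ from Lemma \ref{lemm1}; thus IR holds with equality. The expected revenue is $p^* \cdot \mu^*\theta/(p^*\lambda) = \mu^*\theta/\lambda$, matching the upper bound, so $P^*$ is optimal. Right-continuity and monotonicity are immediate.

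For uniqueness, I would examine when both inequalities in the derivation $\int_0^\infty P(y) dy \leq \int_0^\infty \mu^* w_-(P(y)) dy \leq \mu^*\theta/\lambda$ become equalities. The second equality requires IR to bind. The first equality requires $P(y)/w_-(P(y)) = \mu^*$ at (almost) every $y$ where $P(y)>0$, and by Lemma \ref{lemm1} (the uniqueness of the maximizer $p^*$ of $p/w_-(p)$ over $(0,1]$) this forces $P(y) \in \{0,p^*\}$ for almost every $y$. Here I would be careful with the case $P(y)=0$: the ratio $P(y)/w_-(P(y))$ is not defined, but the product $w_-(P(y))\cdot \mu^*$ is $0$, so the inequality $P(y)\leq \mu^* w_-(P(y))$ holds with equality trivially and no extra restriction arises on the zero set.

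Combining these with the fact that $P$ is decreasing and right-continuous, any optimal $P$ must agree almost everywhere with $p^* \mathbf{1}_{[0,y^*)}$ for some threshold $y^*$; right-continuity and monotonicity then force $P$ to equal $p^* \mathbf{1}_{[0,y^*)}$ pointwise. Finally, the binding IR constraint $\lambda w_-(p^*) y^* = \theta$ pins down $y^* = \theta/(\lambda w_-(p^*)) = \mu^*\theta/(p^*\lambda)$, recovering $P^*$ exactly. The main conceptual obstacle is just handling the zero-set correctly in the "ratio" argument; the rest is bookkeeping on the monotonicity and the binding IR.
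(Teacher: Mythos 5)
Your proposal is correct and takes essentially the same approach as the paper: verify feasibility and achievability of the bound $\mu^*\theta/\lambda$ by direct computation, then trace the equality conditions in the chain $\int_0^\infty P(y)\,dy \leq \mu^*\int_0^\infty w_-(P(y))\,dy \leq \mu^*\theta/\lambda$ together with the uniqueness of $p^*$ from Lemma \ref{lemm1} and the monotonicity/right-continuity of $P$ to force the two-valued form. The only cosmetic difference is that you pin down the threshold $y^*$ directly from the binding IR constraint, whereas the paper rules out too-small and too-large thresholds by contradiction; the two arguments are equivalent.
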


The optimal binary random price equals $0$ with probability $1-p^*$ and equals $\frac{\mu^* \theta}{ p^* \lambda} > \frac{\theta}{\lambda}$ with probability $p^*$. Because $p^*$  minimizes the ratio $w_-(p)/p$, the large payment happens with the probability that the buyer under-evaluates the most.
In comparison, if the seller uses a deterministic price, his revenue is simply $\frac{\theta}{\lambda}$. The optimal binary random price simply multiplies the expected revenue by $\mu^*$.

To achieve the upper bound characterized in Proposition \ref{prop1} one can set a binary random stopping time $\tau_0$ such that $\mathbb{P} (\tau_0=1) = 1-p^*$ and $\mathbb{P} (\tau_0=3) = p^*$. Then setting a cumulative price path such that $p(t)= \frac{\mu^* \theta}{ p^* \lambda}1_{t>2} $.
When $\tau_0=1$, the buyer gets the product for free, while when $\tau_0=3$, the buyer pays a large payment $\frac{\mu^* \theta}{ p^* \lambda}$ that exceeds her willingness to pay. This pricing process does not satisfy DIR conditions, as the buyer will simply turn away after $t>1$.

\subsection{Make DIR Tractable: Exponential Decomposition}

The main difficulty of the optimal design problem comes from dynamic inconsistency, which forces us to track of all the posteriors simultaneously. This is particularly intractable since it is unclear how to reduce the posterior belief to some low-dimensional sufficient statistic, such as posterior means, in the presence of the rank-dependent probability weighting function. Furthermore, normalization of the belief is no longer feasible in my model. To see this more clearly, consider the following:
\begin{lemma}
	\label{Lemma_eventual}
	Any DIR-compatible $(\tau,T(t))$ such that
		$P(\tau < \infty) < 1$ is not optimal.
\end{lemma}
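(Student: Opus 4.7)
The plan is to argue by explicit construction: starting from any DIR-feasible $(\tau,T)$ with $q := P(\tau<\infty) < 1$, I would build $(\tau',T')$ with $P(\tau'<\infty)=1$ and strictly higher revenue. Before designing the construction, I would extract two structural facts from the DIR conditions. First, a uniform slack bound: since $P(\tau=\infty\mid \tau>s) \geq P(\tau=\infty)/P(\tau>s) \geq 1-q$, we get $P(\tau<\infty\mid\tau>s) \leq q < 1$, hence by strict monotonicity of $w_+$, $w_+(P(\tau<\infty\mid\tau>s)) \leq w_+(q) < 1$ uniformly in $s$. Second, a tail-vanishing bound: letting $s \to \infty$ in DIR, the positive term tends to zero, and since the conditional payment distribution on $\{\tau>s\}$ places weight approaching $1$ on the value $T(\infty)-T(s)$, one is forced to have $T(\infty)<\infty$ and $T(\infty)-T(s)\to 0$.

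The construction then sets $L := \theta(1-w_+(q))/\lambda > 0$, picks $t^*$ so large that $T(\infty)-T(t^*) < L/2$, and defines $\tau' = \tau$ on $\{\tau\leq t^*\}$ and $\tau'=t^*+1$ on $\{\tau>t^*\}$, with $T'(t)=T(t)$ for $t\leq t^*$ and a single lump-sum payment of size $L$ levied at time $t^*+1$. On the event $\{\tau>t^*\}$ the new total payment is $T(t^*)+L$, which exceeds the original $T(\tau)\in[T(t^*),T(\infty)]$ by at least $L/2$ thanks to the choice of $t^*$. Since $P(\tau>t^*)\geq 1-q > 0$, the revenue strictly increases by at least $(1-q)\cdot L/2$.

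The bulk of the argument is then to verify DIR for $(\tau',T')$. At $s \in (t^*,t^*+1)$, DIR collapses to $\theta \geq \lambda L$, which holds since $L<\theta/\lambda$. At $s < t^*$, the new positive term is $\theta w_+(1) = \theta$, and a direct computation shows that $P_{\mathrm{mod}}(y) - P_{\mathrm{orig}}(y)$ is supported on the interval $y \in [T(t^*)-T(s),\, T(t^*)+L-T(s)]$ of length $L$, with values in $[0,1]$; the crude bound $w_- \leq 1$ then yields $R'_s - R_s \leq L$. Combined with the original DIR and the uniform bound $w_+(P_s) \leq w_+(q)$, one obtains $\theta - \lambda R'_s \geq \theta(1-w_+(P_s)) - \lambda L \geq \theta(1-w_+(q)) - \lambda L = 0$, so modified DIR holds at every $s$.

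The main obstacle I anticipate is controlling $R'_s - R_s$ at intermediate times $s < t^*$, since $w_-$ is nonlinear and the payment distribution shifts nontrivially on the sub-event $\{\tau>t^*\}$. My strategy handles this by leveraging the uniform slack $1-w_+(q)>0$ generated by the observation that $P(\tau<\infty\mid\tau>s)$ is bounded away from $1$ by the very assumption $P(\tau<\infty)=q<1$; this slack is exactly what allows the added lump sum $L$ to be absorbed without breaking DIR. The crude estimate $w_- \leq 1$ suffices for the strict-improvement conclusion, though sharper Lipschitz bounds on $w_-$ near the endpoints would streamline the bookkeeping.
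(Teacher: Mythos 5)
Your proof is correct, and it reaches the conclusion by a genuinely different route from the paper's. Both arguments truncate $\tau$ at a finite time and add a lump sum of (roughly) $\theta(1-w_+(q))/\lambda$, exploiting the same source of slack: making delivery certain raises the gain term from $\theta w_+(\cdot)$ to $\theta$. The difference is where the lump sum is placed. The paper charges it at time $0$ and delivers the good \emph{for free} at the truncation time; since the lump sum is then sunk at every $s>0$ and the conditional payment $T(\tau_1)-T(s)$ only decreases (while the delivery probability rises to $1$), all DIR constraints for $s>0$ are weakly relaxed, and only the time-$0$ constraint needs the exact cancellation $\theta-\theta w_+(p) = \lambda\cdot(1-w_+(p))\theta/\lambda$. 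You instead charge the lump sum at the end, which forces you to re-verify DIR at every intermediate $s$; your two ingredients for doing so --- the uniform bound $P(\tau<\infty\mid\tau>s)\le q$ (so the original constraint always has slack at least $\theta(1-w_+(q))$) and the estimate $R'_s-R_s\le L$ from the fact that the two tail distributions differ only on an interval of length $L$ with $w_-\le 1$ --- are both sound, and your tail argument ($T(\infty)<\infty$ and $T(\infty)-T(s)\to 0$, forced by letting $s\to\infty$ in DIR) correctly guarantees that the foregone tail revenue is at most $L/2$ per unit of probability, so the net gain is at least $(1-q)L/2>0$. The paper's placement buys a shorter verification; yours buys a construction that never gives the good away below full marginal price and makes the role of the uniform slack $1-w_+(q)$ explicit. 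One trivial bookkeeping point: under the paper's convention that $T(\tau)$ is the payment made strictly before $\tau$ (jumps enter as $1_{t_i<t}T_i$), a lump sum levied exactly at $\tau'=t^*+1$ would not be collected; place it at, say, $t^*+\tfrac12$ or deliver at $t^*+2$.
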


Lemma \ref{Lemma_eventual} states it is without loss of generality to focus on mechanisms where the buyer eventually obtains the good for certain.
Then, DIR can be expressed  as
\begin{gather*}
	\theta - \lambda \int_0^{\infty} 	w_- \big(  \frac{P( T (\tau) - T(s) > y   )  }{ 1- P(\tau \leq s)}  \big) dy \geq 0, \quad \forall s,~ P(\tau > s) > 0.
\end{gather*}

In the standard EU framework where $w_-(p)=p$, one can multiply both sides by $(1- P(\tau \leq s) ) $ so that the expression becomes linear in $P(\tau <t)$. The nonlinearity of $w_-$ makes it difficult to track the impact of changing posteriors.

To restore tractability I proceed in two steps.
First, I show that  I can transform the joint design problem of  $(\tau,T(t))$ into a design problem of a random total payment $T$ in which the buyer can potentially choose to quit at any payment $t< T$. 
\begin{lemma}
	\label{lemma4}
	For any dynamic pricing process $(\tau,T(t))$, if $P(\tau<\infty)=1$ and  $T(\tau)$ solves the following auxiliary problem
	\begin{gather*}
		T(\tau) \in  \argmax_{T \in \mathcal{T}}  \int_0^{\infty} P(T>y) dy, \\
		s.t. \quad \lambda \int_0^{\infty} w_-(  P(T-s > y) | T> s)  dy \leq \theta, \quad \forall s, ~  s.t. ~ P(T > s) > 0,
	\end{gather*}
then $(\tau,T(t))$ is an optimal dynamic pricing process.
\end{lemma}

$\mathcal{T}$ is the set of all nonnegative random variables whose CDF admits a regular decomposition. Lemma \ref{lemma4} indicates that the two design objectives in the original problems are somewhat redundant: what matters is the composition $T(\tau)$. However, transforming the problem into this ``basic form'' per se does not simplify the problem at all, as the problem is still nonlinear in $P(T>y)$ and the intractability problem remains.

In fact, the \emph{key technical insight} lies in the second step: the problem becomes considerably more tractable once  I decompose $T$ in the auxiliary problem back into $(\tau_0,T(t))$, where $\tau_0$ is a fixed exponential stopping time,
\begin{gather*}
	P(\tau_0 > t) = e^{-t}.
\end{gather*}

\begin{proposition}
	\label{prop3}
	For any feasible $T$ in the auxiliary   problem, there exists a (regular) cumulative payment function $T(t)$ such that $(\tau_0,T(t))$ is DIR compatible in the original design problem and 
	\begin{gather*}
		P( T \leq t )  =  P( T(\tau_0) \leq t), \quad \forall t \in \mathbb{R}^+.
	\end{gather*}
\end{proposition}

Now, I show how the exponential stopping time $\tau_0$ can help me simplify the DIR constraint:
	\begin{gather*}
  \lambda \int_{0}^{\infty} w_-( P( T (\tau_0) - T(s) > y  | \tau_0 >s )  ) dy \leq \theta, \quad   \forall s \geq 0.
	\end{gather*}
 
	Define 
	\begin{gather*}
		T^{-1}(x) = \sup \{ t \in \mathbb{R}^+ |T(t) \leq x   \}.
	\end{gather*}
 
Since $T$ is left continuous, $T(t) \leq x \Leftrightarrow t \leq T^{-1}(x)$. Consequently,
	\begin{align*}
		&\lambda \int_{0}^{\infty} w_-( P( T (\tau_0) - T(s) > y  | \tau_0 >s )  ) dy \\
		&= \lambda \int_{0}^{\infty} w_-( P( \tau_0   > T^{-1}( y + T(s))  | \tau_0 >s )  ) dy \\
		&= \lambda \int_{0}^{\infty} w_-( e^{-T^{-1}( y + T(s))+s }    ) dy .
	\end{align*}
 
	For intuition, suppose that $T$ has  positive derivatives everywhere and so does $T^{-1}$. Then, I can further transform the equation as
	\begin{gather*}
		 \lambda \int_0^{\infty} w_-(  e^{-T^{-1}( y + T(s))+s }    )  \frac{1}{T^{-1'}( y + T(s))  }dT^{-1}( y + T(s)),\\
		= \lambda \int_{s}^{\infty} w_-( e^{-t +s }) T'(t) dt.
	\end{gather*}
 
	In general, $T(t)$ might have some jump points, and I prove in the Online Appendix that:
	\begin{lemma}
		\label{lemma5}
		\begin{gather*}
			\lambda \int_{0}^{\infty} w_-( e^{-T^{-1}( y + T(s))+s }    ) dy  
			= \lambda \sum_{i,t_i \geq s} T_i w_-( e^{- t_i + s} ) + \lambda 	\int_{s}^{\infty} w_-( e^{-t +s }) f(t) dt , \quad \forall s\geq 0.
		\end{gather*}
	\end{lemma}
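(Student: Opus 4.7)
The strategy is to interpret the left-hand side as an integral of $w_-(e^{-\,\cdot\,+s})$ against the pushforward of Lebesgue measure by the generalized inverse $T^{-1}$, and then convert it via change of variables into a Lebesgue--Stieltjes integral against $T$ itself, at which point the regular decomposition of $T$ gives the two terms on the right.

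First, I would apply the linear substitution $x = y + T(s)$ to rewrite the left-hand side as $\lambda \int_{T(s)}^{\infty} w_-(e^{-T^{-1}(x)+s})\,dx$. The key identity I aim to establish is then
\begin{gather*}
\int_{T(s)}^{\infty} \phi(T^{-1}(x))\,dx \;=\; \int_{[s,\infty)} \phi(t)\,dT(t)
\end{gather*}
for every nonnegative Borel $\phi$, where $dT$ is the Lebesgue--Stieltjes measure determined by the left-continuous, increasing function $T$. Given this identity, applied to $\phi(t) = w_-(e^{-t+s})$, the regular decomposition $dT(t) = f(t)\,dt + \sum_i T_i \,\delta_{t_i}(dt)$ splits the right-hand side into $\int_s^{\infty} w_-(e^{-t+s}) f(t)\,dt + \sum_{i:\,t_i \geq s} T_i\, w_-(e^{-t_i+s})$, which after multiplication by $\lambda$ is exactly the claimed expression.

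To prove the identity, I would argue piece by piece using the structure of $T^{-1}$. For each jump point $t_i \geq s$, left-continuity of $T$ together with $T^{-1}(x) = \sup\{t : T(t) \leq x\}$ implies $T^{-1}(x) = t_i$ on the half-open interval $x \in [T(t_i), T(t_i+))$, which has Lebesgue length $T_i$; this region contributes $T_i \,\phi(t_i)$, matching the atom $T_i \,\delta_{t_i}$ in $dT$. On every maximal open interval $(a,b) \subset [s,\infty)$ on which $T$ is absolutely continuous and strictly increasing with $T' = f$, the standard absolutely continuous change of variables $t = T^{-1}(x)$, $dx = f(t)\,dt$ turns $\int_{T(a)}^{T(b)} \phi(T^{-1}(x))\,dx$ into $\int_a^b \phi(t) f(t)\,dt$, matching the absolutely continuous part of $dT$. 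Intervals on which $T$ is constant (with $f = 0$ and no atom) are mapped to Lebesgue-null sets in the $x$-variable and contribute nothing to either side.

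The main obstacle is the bookkeeping at the boundaries between the atomic, absolutely continuous, and constancy regions of $T$: one has to verify that the exceptional endpoints (e.g., the single point $x = T(t_i+)$ when $f$ vanishes in a right-neighborhood of $t_i$) form a Lebesgue-null set in the image, so that the summation of local pieces indeed recovers the full integral over $[T(s),\infty)$. Once this is checked, summing the jump contributions with the integrated absolutely continuous contributions establishes the identity, and the lemma follows.
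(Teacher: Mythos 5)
Your proposal is correct and follows essentially the same route as the paper's proof: both reduce the left-hand side to a change-of-variables identity for the generalized inverse $T^{-1}$, decompose the domain into jump intervals $[T(t_i),T(t_i)+T_i)$, absolutely continuous strictly increasing pieces, and constancy intervals, and verify that the exceptional boundary points form a null set. The only cosmetic difference is that the paper first normalizes to $s=0$ via $T_s(t)=T(t+s)-T(s)$, whereas you carry general $s$ through the substitution $x=y+T(s)$.
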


	With the help of Lemma \ref{lemma5}, I can simplify the exponential design problem as
	\begin{gather*}
		V= \max_{f,T_i,t_i\geq 0}  ~  \sum_i T_i e^{-t_i} + \int_0^{\infty}e^{-t}  f(t) dy  ,\\
		s.t.  \quad   \lambda \sum_{i,t_i \geq s} T_i w_-( e^{- t_i + s} ) + \lambda 	\int_{s}^{\infty} w_-( e^{-t +s }) f(t) dt \leq \theta , \quad \forall s\geq0.
	\end{gather*}

	In this representation of the problem, the objective becomes linear in the design objectives $(f,T_i)$, and I solve it with a weak-duality approach in the next subsection.

The technique that I just presented is not restricted by the specific structure $(\tau,T(t))$ of the model. Instead, it  has general applications in design problems under rank-dependent preferences.  Even if one starts with problems such as the auxiliary  problem in Lemma \ref{lemma4},\footnote{In Section 6.3, I present two other economic applications of this auxiliary problem.} where the design objective is just a random variable $T$, it could still be beneficial to introduce the decomposition purely as an artificial mathematical construct. Mathematically, decomposing $T$ into $(\tau_0,T(t))$ is a specific way of changing coordinates (and consequently changing the representation) of $T$ that works well under rank-dependent preferences. This is because when  changing $P(T>y)$ or the density $\rho(T=y)$ at $y$, both the value realization and the rank of the value realization in the whole distribution change jointly, which makes it difficult to track its impact. In contrast, when changing $T(t)$ at $t$ with fixed $\tau_0$, only the value realization changes, and the rank of the whole distribution remains unchanged.

\subsection{Optimal Exponential Design}
In this subsection, I sketch the proof for the uniquely optimal cumulative payment function $T^*(t)$ for the exponential stopping time $\tau_0$. The economic intuition for the optimal mechanism is presented in the next subsection.
\begin{proposition}
	\label{prop_expo}
	There exists a uniquely optimal $T^*(t)$ for $\tau_0$ such that
	\begin{gather*}
		T^*(t) =\frac{\theta}{\lambda}  \Big(     1_{t>t_0}   + \int_0^{\min \{ t,t_0 \}}  f(s) ds  \Big),
	\end{gather*}
where $f$ is the unique function in $L_1[0,t_0]$ that ensures that DIR is binding in $[0,t_0]$:
\begin{gather*}
	w_-(e^{-t_0 + t})  +  \int_t^{t_0} w_- (e^{-s+t}  ) f(s) ds  = 1, \quad \forall t \in [0,t_0].
\end{gather*}
The constant $t_0$ is determined by
\begin{gather*}
		t_0 = \min \{  t \in [0, - \log p_1] \mid   g_0(t) \leq 0 \}, 
\end{gather*}
where $g_0(t)$ is the unique function in  $L_1[0,-\log p_1]$ such that
	\begin{gather*}
	e^{-t} - w_-(e^{-t}) - \int_0^t  w_-(e^{-t+s}) g_0(s) ds = 0, \quad \forall t\in [0, - \log p_1].
\end{gather*} 
\end{proposition}

The optimal cumulative payment $T^*(t)$ remains constant when $t>t_0$, which is equivalent to saying that the buyer obtains the good for certain after time $t_0$. $T^*(t)$ has a jump of $\theta/\lambda$ at time $t_0$, which means that if the buyer still does not receive the good by time $t_0$, she is charged a lump sum payment of $\theta/\lambda$ to obtain the good for certain. At this final time $t_0$, the buyer is indifferent between quitting and obtaining the good at a deterministic full price. At any time $t < t_0$, the buyer will be charged a flow payment that keeps her indifferent between staying and quitting.

\noindent
\paragraph{Sketch of Proof}
	The formal proof is in the Appendix. 
	The exponential design problem is:
	\begin{gather*}
		V= \max_{f,T_i,t_i}   \sum_i T_i e^{-t_i} + \int_0^{\infty}e^{-t}  f(t) dy  ,\\
		s.t.  \quad   \lambda \sum_{i,t_i \geq s} T_i w_-( e^{- t_i + s} ) + \lambda 	\int_{s}^{\infty} w_-( e^{-t +s }) f(t) dt - \theta \leq 0, \quad \forall s\geq0.
	\end{gather*}

	To solve it, I ``construct'' a Lagrange measure $\gamma(t)$ for the DIR constraints and transform the problem into an unconstrained problem. The  Lagrange measure is supported in $[0,t_0]$, and its  ``construction'' ensures that in the relaxed problem, the linear coefficient of $f$ and $T_i$ is zero pointwise on $[0,t_0]$ and negative on $(t_0,\infty)$. Then, on the one hand, optimality requires $f$, $T_i$ to be zero in $(t_0,\infty)$ and imposes no requirement on their value in $[0,t_0]$. On the other hand, complementary slackness requires DIR  to be binding in $[0,t_0]$. These two jointly determine the unique shape of $T^*(t)$.
	I place quotation marks on ``construction'' because I do not explicitly define the Lagrange measure: I only prove it exists as the solution of a Volterra integral equation (the integral equation for $g_0$ in the proposition).

\subsection{Optimal Mechanisms}
The following theorem concludes the previous subsections.

\begin{theorem}
	\label{thm3}
	$(\tau_0, T^*(t))$ described in Proposition \ref{prop_expo} is an optimal dynamic pricing process. In addition, a dynamic pricing process $(\tau,T)$ is optimal if and only if
	\begin{gather*}
		P(\tau < \infty) = 1, \\
		P(T(\tau) \leq y) =P(T^*(\tau_0) \leq y), \quad \forall y \in \mathbb{R}^+
	\end{gather*}
\end{theorem}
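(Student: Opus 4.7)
The theorem stitches together the preceding reductions. The three formulations---the original dynamic problem (restricted to $P(\tau<\infty)=1$), the auxiliary stopping-money problem, and the exponential design problem---all depend on their choice variable only through the distribution of $T(\tau)$: the objective equals $\mathbb{E}\,T(\tau)$, and by Lemma \ref{lemma3} the DIR constraint is also distribution-only. Combined with Proposition \ref{prop3}, which shows that every feasible distribution in the auxiliary problem is attained by some exponential implementation, all three problems share the same set of attainable distributions and hence the same maximum value.

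\textbf{Argument.} Optimality of $(\tau_0, T^*(t))$ follows by chaining Propositions \ref{prop_expo} and \ref{prop3} with Lemma \ref{lemma4}: $T^*(\tau_0)$ attains the common maximum, hence solves the auxiliary stopping-money problem, hence induces an optimal mechanism in the original problem. For the ``only if'' direction, let $(\tau, T(t))$ be optimal. Lemma \ref{Lemma_eventual} forces $P(\tau<\infty)=1$; Lemma \ref{lemma3} then places $T(\tau)$ in the auxiliary feasible set, and its value---equal to the common maximum---makes $T(\tau)$ an auxiliary maximizer. Proposition \ref{prop_expo}'s uniqueness of $T^*(t)$ pins down the unique optimal distribution at the exponential level as that of $T^*(\tau_0)$, which by the distributional correspondence is also the unique optimal distribution in the auxiliary problem. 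For the ``if'' direction, given $P(\tau<\infty)=1$ and $T(\tau) \stackrel{d}{=} T^*(\tau_0)$, Lemma \ref{lemma3} yields DIR-compatibility (since its characterization depends only on the distribution of $T(\tau)$, and the distribution here coincides with the known DIR-compatible $T^*(\tau_0)$); then $T(\tau)$ attains the auxiliary maximum because the objective depends only on the distribution, and Lemma \ref{lemma4} yields optimality.

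\textbf{Main obstacle.} The one step requiring care is translating Proposition \ref{prop_expo}'s uniqueness---stated at the level of the cumulative payment function $T^*(t)$---into uniqueness of the \emph{distribution} of $T(\tau)$, which is all the theorem claims. Distinct pairs $(\tau, T(t))$ can certainly implement the same distribution of $T(\tau)$ (e.g.\ by rescaling time), so the statement is correctly distributional rather than mechanism-level. The uniform distribution-dependence of both the objective and the DIR constraint (Lemma \ref{lemma3}) makes this translation direct once made explicit, but it is the step on which the clean form of the theorem rests and should be articulated carefully in the formal write-up.
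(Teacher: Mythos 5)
Your proposal is correct and follows essentially the same route as the paper: chain Proposition \ref{prop_expo} with Proposition \ref{prop3} and Lemma \ref{lemma4} to show $T^*(\tau_0)$ solves the auxiliary stopping-money problem (hence $(\tau_0,T^*(t))$ is optimal), then use Lemmas \ref{Lemma_eventual} and \ref{lemma3} together with Proposition \ref{prop3} and the uniqueness in Proposition \ref{prop_expo} to pin down the distribution of $T(\tau)$ for any optimal mechanism. Your explicit treatment of the ``if'' direction (DIR-compatibility and optimality of any mechanism matching the distribution, via the distribution-only form of the constraints) is a point the paper leaves implicit, but it is the same underlying argument.
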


\begin{figure}[h]
	\centering
	\begin{subfigure}{.47\textwidth}
	\centering
\includegraphics[width=8cm]{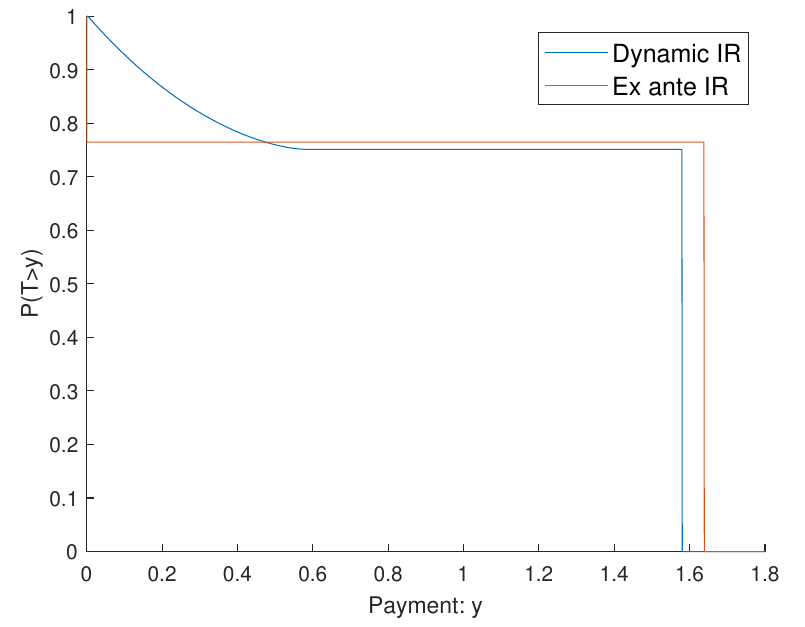}
\caption{Tail Distributions of Total Payment} 	\label{pic7}
	\end{subfigure}
	\begin{subfigure}{.47\textwidth}
	\centering
\includegraphics[width=8cm]{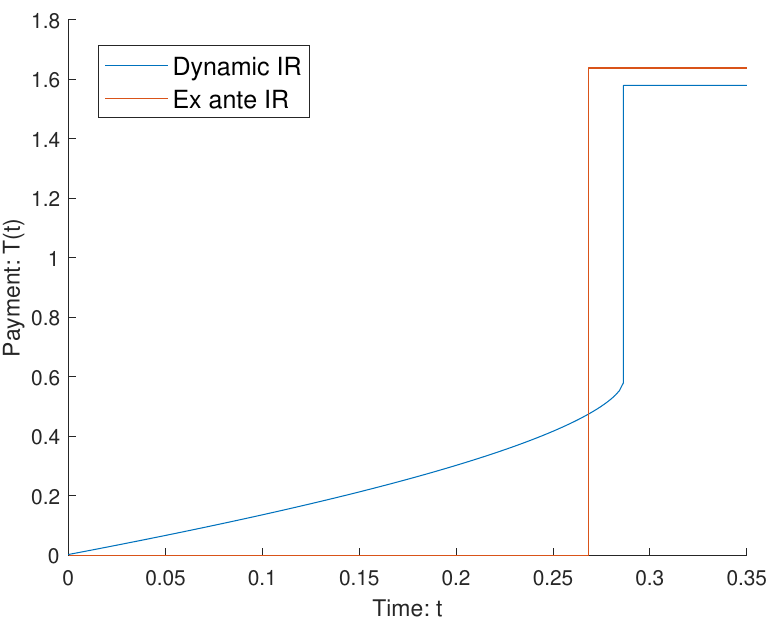}
		\caption{Payment Function $T(t)$ under $\tau_0$} 	\label{pic11}
	\end{subfigure}
	\caption{Optimal Prices with Ex Ante/Dynamic IR Constraints ($\gamma=0.65$).}
	\label{fig:CommitNonCommit}
\end{figure}

According to Theorem \ref{thm3}, the distribution of total payment $T(\tau)$ for any optimal dynamic pricing process $(\tau,T(t))$ is unique, so I can plot it in Figure \ref{pic7} and compare it with the unique optimal distribution with ex ante IR. The red curve  represents the tail distribution of the binary distribution characterized in Proposition \ref{prop1}, and the blue curve represents the tail distribution of $T^*(\tau_0)$ in Theorem \ref{thm3}. The blue curve has a flat region, which  represents the final lump sum payment that the buyer has to pay if her purchase process does not end earlier.

To deliver some intuition about the structure of the mechanism, it is helpful to focus again on the exponential stopping time $\tau_0$.
Figure \ref{pic11} plots the optimal cumulative payment function for the two different IR constraints, under the exponential stopping time $\tau_0$. The blue curves represent the optimal payment function with DIR constraints, characterized in Proposition \ref{prop_expo}. The red curve  represents the optimal payment function with the ex ante constraint, characterized in Proposition \ref{prop1}. To replicate the binary distribution in Proposition \ref{prop1}, the red curve simply charges a lump sum payment $T^* = \frac{\mu^* \theta}{\lambda p^*}$ at time $t^* = -\log p^*$. From an ex ante perspective, this lump sum payment happens with probability $P(\tau_0>t^*) = p^*$.

From the previous subsection, I know that with exponential stopping time, the expected revenue can be expressed as
\begin{gather*}
	 \sum_i T_i e^{-t_i} + \int_0^{\infty}e^{-t}  f(t) dy, 
\end{gather*}
and the ex ante IR can be expressed as
\begin{gather*}
 \lambda \sum_{i,t_i \geq 0} T_i w_-( e^{- t_i } ) + \lambda 	\int_{s}^{\infty} w_-( e^{-t }) f(t) dt \leq \theta.
\end{gather*}
From an ex ante perspective, charging one unit of money at time $t$ increases the expected revenue by $e^{-t}$ but incurs a loss of $\lambda w_-(e^{-t})$ for the buyer. Intuitively, it is more efficient for the seller to charge the buyer at time $t$ where the ratio $\frac{w_-(e^{-t})}{e^{-t}}$ is small. This ratio is minimized at $t=t^*$. Thus, if only the ex ante IR condition is concerned, the seller allocates all the payment at time $t^*$, as characterized in the red curve.

How would the seller adjust the payment function in the presence of DIR constraints? For intuition, suppose that seller can only charge the buyer over time $[0,t^*]$. Note that the ratio $\frac{w_-(e^{-t})}{e^{-t}}$ is monotonically decreasing over $[0,t^*]$ because of the curvature of $w_-$. Therefore,  for any $t_1 < t_2 \leq t^*$, the seller always benefits from moving the payment from $t_1$ to $t_2$ as long as the participation constraints are not violated. Intuitively, the seller should charge as much as possible at $t^*$, so he charges a lump sum payment equal to the buyer's value. Then, starting from $t^*$ to $0$, he should allocate as much payment to $t$ that is closer to $t^*$ as he can. This intuitively explains why the DIR constraints are always binding at every time. 

In the optimal solution, the final lump sum payment is not charged exactly at $t^*$ but rather at some $t$ slightly larger than $t^*$. This is because the optimal mechanism necessarily contains both the final lump-sum payment and the earlier flow payment. Shifting the whole tail distribution curve to the right of $t^*$ can allow for a better ratio for the flow payment, at the cost that the lump sum payment enjoys a worse ratio. The benefit of such a right shift is of first order, but the cost is of second order at $t^*$.

\subsection{Pricing Processes in Practice}
\paragraph{Gacha Games Industry}

The industry of gacha games displays a distinct trajectory in its pricing practices. Initially, these games employed a stationary pricing process: both the cost of each loot box and the probability of obtaining the desired product from it remained constant in the entire process. However, in 2016, a notable shift occurred when Granblue Fantasy introduced the "hard pity system." Under this system, the pricing for the loot box remains constant for the initial N purchases (300, in this game's case). If the buyer hasn't obtained the desired product after these 
N purchases, they are guaranteed the product at no additional cost. Most gacha games that followed incorporated this system into their pricing strategy.

Figure \ref{pic9} plots the tail distribution of the total payment in three different  mechanisms that are optimal within their classes. The yellow curve represents the stationary mechanism (without the pity system), the blue curve represents the hard pity mechanism, and the red curve represents the optimal mechanism. The area below the curve represents the expected profit for the seller.

If one assumes that the convergence of mechanisms indicates that consumers  prefer the latter mechanism, and firms use it to extract more surplus. 
Then, the evolution of the pricing processes, in which the total payment moves from a geometric distribution to a truncated geometric distribution, indicates that consumers are not simply risk-loving. They prefer a certain form of randomness but do not want to be exposed to a negatively skewed payment.This observed behavior aligns with the expectations set by probability weighting.

In addition to the qualitative change, there is also an evolution in tuning parameters. For example, in newer gacha games, the probability of reaching the maximum number of purchases gradually increases.
\begin{figure}[htpb]
	\centering
	\begin{subfigure}{.47\textwidth}
		\centering
	\includegraphics[width=8cm]{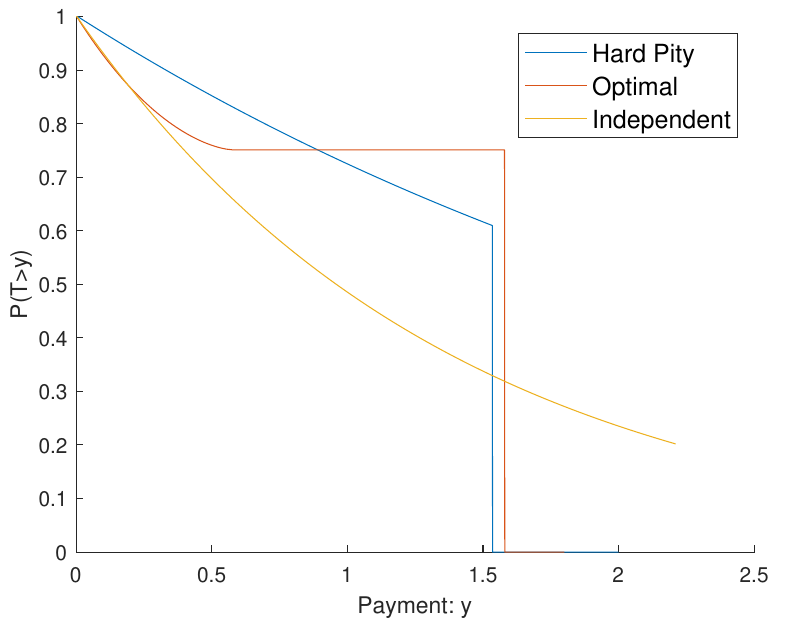}
	\caption{Optimal Mechanisms within Classes} 	\label{pic9}
		\hspace{0.1cm}
	\end{subfigure}
	\begin{subfigure}{.45\textwidth}
		\centering
\includegraphics[width=8cm]{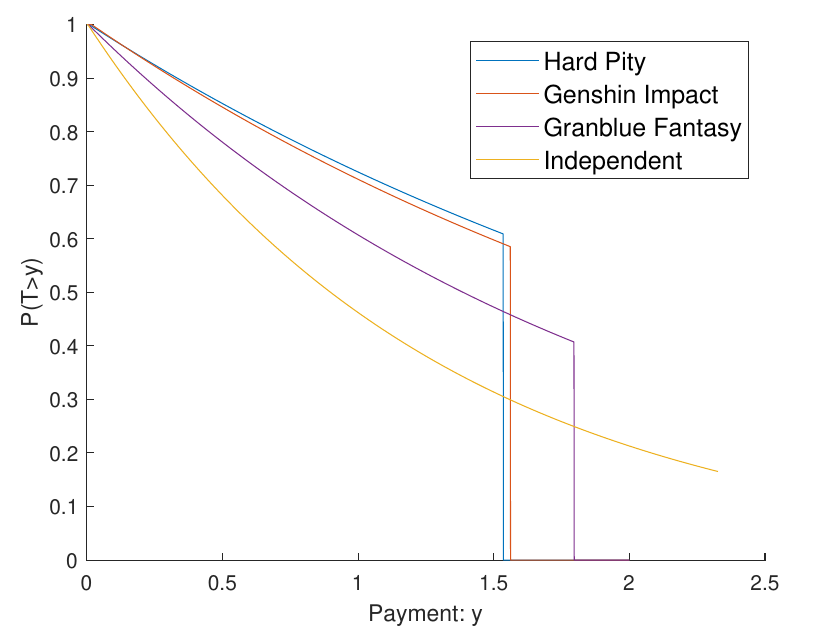}
\caption{Evolution in Practice} 	\label{pic10}
\hspace{0.1cm}
	\end{subfigure}
	%\vspace{1cm}

	\caption{Tail Distributions of Various Mechanisms $(\gamma=0.65)$.}
	\label{fig:optimal}
\end{figure}
 Figure \ref{pic10} plots the tail distribution of the total payment in Granblue Fantasy (2016) and Genshin Impact (2020) and compares them to the optimal hard pity mechanism and the optimal stationary mechanism (the mechanism adopted before 2016). As the figure illustrates, over time, the mechanisms adopted in practice seem to converge to the optimally tuned hard-pity mechanism.\footnote{Due to the norm of the industry, it is almost impossible for one game to change its monetization mechanism in its lifetime, so the modification has to be done in newer games.}

I also numerically calculate the maximum profit within the three classes of mechanisms with different parameter values $\gamma$ (of the parameterized probability weighting function). The results are summarized in Table \ref{table1},
where I normalize $\theta/\lambda =1$ so that with deterministic pricing the seller obtains a profit of 1. 
As Table \ref{table1} suggests, the skewness of the geometric distribution harms the buyer to such an extent that the performance of independent chests is even worse than the deterministic price. On the contrary, the truncated geometric distribution induced by the hard pity system captures 90\% of the increase in profit compared to the optimal mechanism. This might seem surprising to some readers because ``charge full price at the end'' and ``charge nothing at the end'' appear to be totally different. To understand this, note that when forgoing the full price charge at the end, it is optimal for the seller to both increase the payment for each earlier loot box and reduce the probability that each box will deliver. After optimally tuning the parameters, both mechanisms assign   intermediate probability numbers, which are concentrated around $p^*$, to  the majority of their payment realizations. It is true that the tail probability in the optimal mechanism is more concentrated at $p^*$ than the hard pity mechanism, but the effect of such reduced dispersion around $p^*$ is of second order.

\bigskip

\begin{table}[h!]
	\centering
	\begin{tabular}{|c|c|c|c|c|c|c|}
		\hline
		$\gamma$	& 0.50 & 0.55 & 0.6 & 0.65 & 0.70 & 0.75  \\
		\hline
		Independent	& 0.8024 & 0.8152 & 0.8302 &  0.8417  & 0.8656 &  0.8855\\
		Hard Pity	& 1.5211 & 1.3841 & 1.2847 &  1.2106  & 1.1541 &  1.1106\\
		Optimal	& 1.5758 & 1.4265 &1.3178  & 1.2364 &  1.1740   & 1.1256\\
		\hline
	\end{tabular}
	\caption{Profit of Optimal Mechanisms within Each Class}
	\label{table1}
\end{table}

There are many different forms of simple discretized mechanisms that can better approximate the optimal mechanism. For example, Figure \ref{pic8} represents the tail distribution of a simple modification of the hard pity system where at the end, the product is sold at full price instead of given for free. However, from my communication with practitioners in the industry, there are other considerations such as marketing and public relationships when the pricing process is designed. As a leading example, the relationship between players and game companies is highly  sensitive and intense with respect to monetization, as the player community tends to interpret any modification to the pricing mechanism as an attempt at further exploitation. This might explain why the transition from stationary loot boxes to the hard pity system was welcomed in reality since the concept of a free product can be naturally interpreted as generosity. This also implies that if one attempts to asses the implications of the optimal mechanism in reality, the choice of an appropriate approximation format must be delegated to experts in the industry.

\begin{figure}[h!]
	\centering
	\includegraphics[width=8cm]{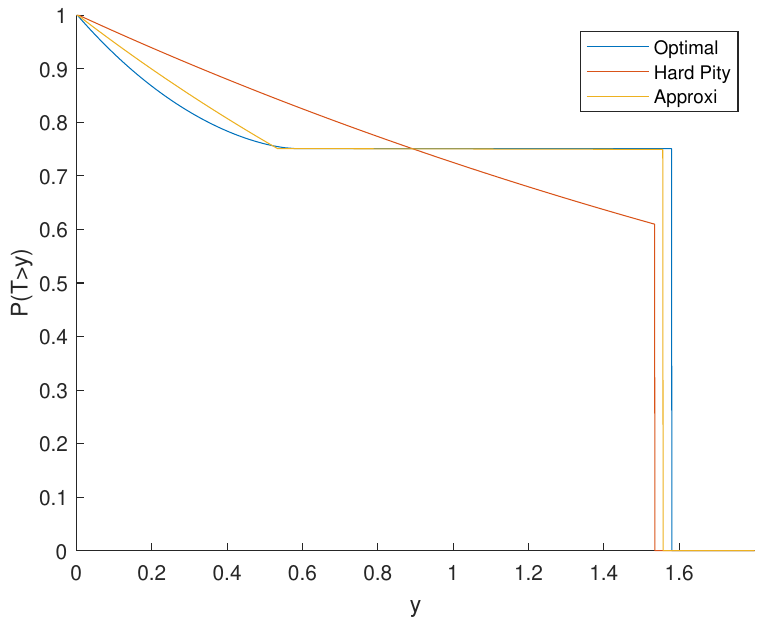}
	\caption{Tail Distributions of Three Mechanisms $(\gamma=0.65)$.} 	\label{pic8}
\end{figure}

\paragraph{Random Prices in Other Industries} 
Similar pricing processes appear in other entertainment industries such as  card collection,  crane machines, and blind boxes. Penny auction,  as a new online-auction format, also presents similar feature to loot boxes.
Why, then, aren't random prices more widespread? Firstly, many consumers maintain a firm moral opposition to random payments, viewing them as akin to gambling and an exploitation of human vulnerabilities. In contrast, consumers in entertainment sectors display a greater tolerance for stochastic pricing. This perhaps elucidates why such pricing predominates in the entertainment domain. Secondly, implementing a random pricing scheme demands substantial commitment from sellers. In the realm of gacha games, legal regulations in many countries mandate that companies transparently disclose their probability designs. Furthermore, authorities have the power to detect and penalize any discrepancies.\footnote{For example, in 2018, three South Korean developers faced significant fines from the country’s Fair Trade Commission for misleading players about the odds in their in-game loot boxes. In 2022, Paradise M Taiwan’s representative Game Juzi incurred a penalty of 2 million yuan for manipulating probabilities.} Additionally, third-party applications can aggregate data from vast player bases to cross-check the declared probability designs.\footnote{For instance, \url{https://github.com/xunkong/KeqingNiuza} offers an application for Genshin Impact that serves this purpose.}

%However, why they are not used in many more situations? First, many consumers hold a strong moral stance against random payment, which is associated with gambling and the exploitation of human weaknesses. This might explain why the majority of the applications are in the entertainment industry. Second, the complicated random pricing process demands a huge commitment power from the seller. In gacha games, the algorithm can be checked by the government, and data from a large player base can be collected to verify the probability design.\footnote{For example, https://github.com/xunkong/KeqingNiuza provides an app for Genshin Impact, which collect players' purchase history and generate probability distribution using the data. There are also instances in the industry where games got fined by the government or suffered  huge losses in the number of players due to faking probability.} In contrast, it is hard for a grocery store in the neighborhood to establish such a commitment device.

\section{Modeling Choices and Robustness}

\subsection{Robustness: Rank-Dependent Expected Utility}
Prospect theory is not the only decision-theoretic model that embeds probability weighting. A more ``conventional'' framework is the RDEU of \cite{QUIGGIN1982}, in which the buyer's payoff is
\begin{align*}
	\text{RDEU}(X,T,\theta) &= \int (\theta X -T) ~ d(w\circ F) \\
	&= \int_0^{\infty} w(P( \theta X - T > y  )) dy + \int_{-\infty}^0 (w(P( \theta X -T  >y  )) -1 )dy,
\end{align*}
where $F$ is the cumulative distribution function of the ex post payoff $\theta X -T$. When $w(p)=p$, RDEU reduces to the standard expected utility. Compared to CPT, the RDEU framework is more ``conventional'' because there is no framing effect. The agent only cares about the absolute level of utility instead of calculating  gains and losses relative to a reference point.

It turns out that the only driving force in my framework is the probability weighting. Proposition \ref{prop1} and Theorem \ref{thm3}  also characterize the optimal mechanisms under RDEU with an appropriate normalization.
\begin{proposition}
	\label{robust_RankDependent}
	Define $w_-(p) =1-w(1-p)$, and let $\lambda=1$; then,  Proposition \ref{prop1} and Theorem \ref{thm3} characterize the optimal mechanisms in the RDEU framework. %%EDITOR'S NOTE: Please ensure that the intended meaning has been maintained in this edit.
\end{proposition}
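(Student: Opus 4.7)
The plan is to reduce the RDEU design problem to the CPT design problem already solved, by showing that when the good is delivered for sure the two buyer payoffs coincide term-by-term under the specified normalization. The key identity is: for any nonnegative random payment $T$ and any $\theta>0$,
\begin{align*}
RDEU(1,T,\theta) &= \int_0^{\infty} w(P(\theta - T > y))\,dy + \int_{-\infty}^0 \bigl(w(P(\theta - T > y)) - 1\bigr)\,dy \\
&= \theta - \int_0^{\infty} w_-(P(T > z))\,dz.
\end{align*}
This follows from the change of variable $z = \theta - y$, using $T \geq 0$ to kill the tail on $\{z<0\}$, then substituting $w(1-p) = 1 - w_-(p)$ at $p = P(T \geq z)$; the difference between $P(T \geq z)$ and $P(T > z)$ is supported on a countable atom set and so is irrelevant inside the integral. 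The right-hand side is exactly the CPT payoff used in the paper when $X = 1$, $\lambda = 1$, and $w_+(1) = 1$.

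For Proposition \ref{prop1}, it remains to verify that an RDEU-optimal mechanism sets $P(X=1)=1$. This is immediate from first-order stochastic dominance: for every fixed $T$, the ex post payoff $\theta X - T$ with $X \equiv 1$ FOSD-dominates the corresponding payoff with $P(X=1) < 1$, and RDEU respects FOSD. Since the seller's revenue $\int_0^{\infty} P(T > y)\,dy$ depends only on the marginal distribution of $T$ and not on preferences, the RDEU design problem collapses to program \ref{program1}, and the unique optimum is the binary tail $P^*(y)$ identified in Proposition \ref{prop1}.

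The dynamic case for Theorem \ref{thm3} is analogous, because the identity above applies verbatim to conditional distributions. Whenever $P(\tau<\infty)=1$, the buyer's continuation RDEU at any time $s$ with $P(\tau > s)>0$ equals $\theta - \int_0^{\infty} w_-\bigl(P(T(\tau)-T(s) > y \mid \tau>s)\bigr)\,dy$, so DIR under RDEU is pointwise identical to the paper's DIR with the substituted $w_-$. The analog of Lemma \ref{Lemma_eventual} continues to hold in the RDEU setting by the same FOSD-plus-revenue comparison: a mechanism with $P(\tau<\infty)<1$ leaves strict slack in DIR that can be spent on a far-future deterministic delivery which strictly raises expected revenue. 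Because the objective, the constraint set, and the consequences of binding DIR all match the CPT problem exactly, the stopping-money reduction (Lemmas \ref{Lemma_eventual}--\ref{lemma4}), the reduction to the exponential stopping time $\tau_0$ (Proposition \ref{prop3}), and the Lagrangian uniqueness argument of Proposition \ref{prop_expo} transfer without any modification, delivering Theorem \ref{thm3} for RDEU.

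The main obstacle is cosmetic rather than substantive: one must handle atoms of $T$ and of $\tau$ carefully when equating $P(T \geq z)$ with $P(T>z)$ and when justifying the change of variable in the presence of jump discontinuities of $T(t)$. These are dispatched by working with the right-continuous tail and noting that the integrands agree outside a countable, hence Lebesgue-null, set. The only nontrivial conceptual step is the FOSD-based verification that full delivery and eventual delivery remain optimal under RDEU, but this is a standard property of rank-dependent utilities.
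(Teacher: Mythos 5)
Your proposal is correct and follows essentially the same route as the paper: both rest on the change-of-variable identity $RDEU(1,T,\theta)=\theta-\int_0^\infty w_-(P(T>z))\,dz=CPT(1,T,\theta)$ under the normalization $w_-(p)=1-w(1-p)$, $\lambda=1$, with the countable-atom caveat for $P(T\geq z)$ versus $P(T>z)$, after which every step of Proposition \ref{prop1} and Theorem \ref{thm3} transfers verbatim. The only cosmetic difference is that you justify $P(X=1)=1$ by first-order stochastic dominance while the paper raises the payment alongside the delivery probability to hold the ex post payoff fixed; both are valid, and your extra care in tracking the dynamic DIR constraints is more explicit than the paper's ``all results go through.''
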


\subsection{The Modeling Choice within Cumulative Prospect Theory}
In CPT, outcomes are evaluated by a utility function $U$  relative to a reference
point that separates all outcomes into gains and losses.
In addition, there are weighting functions  distorting the cumulative probability. The CPT utility of general random variables $Y$ is defined as:
\begin{gather*}
	CPT(Y) = \int_{\mathbb{R}^+} w_+ (P(U(Y)>y) dy - \int_{R^-} w_- ( P(  U(Y)<y ) ) dy.
\end{gather*}

For example, suppose that $Y$ is the money that a gambler gains from a lottery; \cite{TVERSKYKAHNEMAN92} assumed the following functional form for $U$:
\begin{gather*}
	U(y) = \begin{cases}
		y^{\alpha}  \text{ if } y \geq 0,\\
		-\lambda (-y)^{\beta}   \text{ if } y<0. 
	\end{cases}    
\end{gather*}
For tractability and to provide a closed link to the standard literature in pricing and mechanism design, this paper assumes a linear monetary utility function by letting $\alpha=\beta=1$.

In prospect theory literature, there is more than one way to define the reference point. In this paper, I choose the ex ante condition as the reference point. In the dynamic model, since the utility of money is linear and I assume separate accounting (to be discussed below), whether the reference point is the initial condition at time $0$ or the initial condition at the time of the current decision is irrelevant.

Another modeling choice to make is the following: does the buyer
experience gain-loss utility separately
from money and from the product being sold
(separate evaluation) or jointly from the net
utility of the transaction (net evaluation). As \cite{2014Botond} argues,
``in commodity auctions---where the payment is monetary and the product is nonmonetary---the former assumption seems more appropriate, while in induced-value laboratory
experiments---where both the payment
and the “product” are monetary---the latter
assumption seems to apply.''
This paper chooses the first assumption that the buyer uses two separate accounts (the robustness of the results under joint accounting appears in the next subsection).  Under this assumption, the product account only has gains while the payment account only has losses.

In conclusion, denote $\theta$ as the good's value for the buyer. The payoff of a menu $(X,T)$ for a buyer of type $\theta$ can be expressed as
\begin{align*}
	CPT(\theta,X,T) &= \int_{0}^{\infty} w_+(P(\theta X > y)) dy - \int_{-\infty}^0 w_-( P( -\lambda T < y )  ) dy,\\
	&=\theta w_+ (P(X  =1 )  ) - \lambda \int_{0}^{\infty} w_-( P(  T > y )  ) dy.
\end{align*}

\subsection{Robustness: Joint Accounting}
One alternative modeling choice is to assume that the CPT buyer evaluates gains and losses in one joint account. Under this specification, the payoff is
\begin{gather*}
	\overline{CPT}(\theta,X,T) = \int_{0}^{\infty} w_+(P(\theta X -T > y)) dy - \int_{-\infty}^0 w_-( P( \lambda (\theta X -T) < y )  ) dy.
\end{gather*}
The most significant difference between the two specifications is that the correlation between $X$ and $T$ matters under joint accounting but not under separate accounting. However, this difference is irrelevant to the main results of this paper because the seller always sells the good to the buyer eventually, namely $X=1$, in which case there is no further role for correlation.\footnote{I thank Philipp Strack for this insight. The main results refer to Proposition \ref{prop1} and Theorem \ref{thm3}. The detailed characterization of Theorem \ref{thm2} is not robust, but the optimal pricing process will still be stationary.}
Formally, I can establish the following robustness result:
\begin{proposition}
	\label{prop5}
	Suppose that $\lambda=1$, $w_-(p)= w_+(p)=w(p)$ and $w(p) + w(1-p)=1$. Then, Proposition \ref{prop1} and Theorem \ref{thm3} also hold under joint accounting.
\end{proposition}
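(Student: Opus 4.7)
The plan is to reduce the joint-accounting design problem to the separate-accounting one with $\lambda=1$, then invoke Proposition \ref{prop1} and Theorem \ref{thm3} directly. The reduction proceeds in three steps, and the main obstacle is establishing that the optimal mechanism still delivers the good for sure under joint accounting.

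First, under $w_+=w_-=w$, $\lambda=1$, and $w(p)+w(1-p)=1$, the joint-accounting valuation $\overline{CPT}$ coincides with the Choquet expectation of the net payoff: applying the symmetry identity $w(P(Y<y))=1-w(P(Y>y))$ inside the loss integral yields
\begin{gather*}
\overline{CPT}(\theta, X, T) = \int_0^\infty w(P(\theta X - T > y))\,dy + \int_{-\infty}^0 \big[w(P(\theta X - T > y))-1\big]\,dy.
\end{gather*}
As a Choquet expectation, this functional is monotone in first-order stochastic dominance of $\theta X - T$. Conditional on $X=1$, a change of variable $z=\theta-y$ in the positive integral, combined with the same symmetry identity, collapses $\overline{CPT}$ to
\begin{gather*}
\overline{CPT}(\theta, 1, T) = \theta - \int_0^\infty w(P(T>y))\,dy,
\end{gather*}
which is exactly the separate-accounting $CPT(\theta, 1, T)$ with $\lambda=1$. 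Applying the same reduction to conditional distributions given $\{\tau>s\}$ (using $P(\tau<\infty\mid\tau>s)=1$) shows that every DIR constraint under joint accounting coincides with the corresponding DIR under separate accounting with $\lambda=1$, and the outside option of quitting yields $0$ under both specifications since the sunk payment $T(s)$ does not enter either gain-loss valuation.

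Second, I establish that restricting attention to mechanisms with $P(X=1)=1$ (static case) or $P(\tau<\infty)=1$ (dynamic case) is without loss. In the static case, replacing $X$ by $X'\equiv 1$ while keeping the marginal of $T$ fixed strictly FOSD-improves $\theta X - T$, so by the Choquet monotonicity established above the ex ante IR constraint strictly relaxes, allowing the seller to strictly increase $E[T]$. In the dynamic case, the argument of Lemma \ref{Lemma_eventual} extends: given any $(\tau,T(t))$ with $P(\tau=\infty)>0$, schedule a deterministic delivery of the good on the event $\{\tau=\infty\}$ at some sufficiently late time with bounded additional payment; each continuation distribution of $\theta\,1_{\tau<\infty}-(T(\tau)-T(s))$ is FOSD-improved, so every DIR constraint is weakly relaxed while expected revenue strictly increases.

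Third, since the joint-accounting design problem now has the same feasible set and the same objective as the separate-accounting problem with $\lambda=1$, Proposition \ref{prop1} and Theorem \ref{thm3} directly characterize the optimal mechanisms under joint accounting. The main obstacle lies in the monotonicity-based reduction of step two: under general joint accounting, pure-loss branches interact through the weighting function in a way that breaks naive FOSD comparisons, and it is precisely the symmetry assumption $w(p)+w(1-p)=1$ that produces the Choquet representation needed to make the perturbation argument clean. Without this symmetry, the joint-accounting problem need not admit the separate-accounting solution.
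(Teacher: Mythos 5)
Your proposal is correct and follows essentially the same route as the paper: the heart of both arguments is the change-of-variables identity showing that, once $X\equiv 1$, the symmetry $w(p)+w(1-p)=1$ collapses $\overline{CPT}(\theta,1,T)$ to $\theta-\int_0^\infty w(P(T>y))\,dy = CPT(\theta,1,T)$, after which the two design problems coincide. The only (minor) difference is how you justify restricting to sure delivery — you use FOSD-monotonicity of the Choquet functional plus an extension of Lemma \ref{Lemma_eventual}, whereas the paper increases the payment in lockstep with the delivery probability so as to keep the ex-post payoff distribution unchanged; both are valid.
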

Proposition \ref{prop5} states that the optimal mechanisms under joint accounting \emph{are exactly the same as} the optimal mechanisms under separate accounting (characterized in Proposition \ref{prop1} and Theorem \ref{thm3}), when there is no loss aversion, the probability weighting of gains and losses is the same, and probability weighting functions are symmetric.

Of course, from an empirical perspective, the assumptions in Proposition \ref{prop5} are not very realistic. However, I regard this result as a very strong signal suggesting that in the framework of this paper, the difference between the two specifications is of second order.

\section{Naive Buyer}
As the previous sections indicate, the stationary mechanism performs poorly when the buyer is sophisticated. Then, is there any rationale for using the stationary mechanism in the early stage of gacha games?
In this section, I consider a discrete-time model and show that the optimal dynamic mechanism takes a stationary form when the buyer is naive. 

\paragraph{Discrete-Time Model}
There are discrete periods $t=1,2,...,\infty$. The seller designs the probability of delivering the good and how much to charge in each period. Formally, he designs $(X_t,T_t)$, where the random variable $X_t \in \{0,1\}$ indicates whether the buyer obtains the good in period $t$, and $T_t$ is the random price that the buyer has to pay in period $t$. The distribution of $(X_t,T_t)$ can be history dependent.

\paragraph{Timing}
In each period $t$, the buyer first decides whether to purchase, $A_t \in \{ 0,1\}$. Then, both $T_t$ and $X_t$ realize such that  the buyer pays $A_t T_t$  and receives the good if $A_t X_t =1$. I use $\tau = \min\{t| A_t X_t =1 \}$ to denote the time that the buyer receives the good. The game ends at $\tau$, and for notational simplicity I denote  $T_{t} = 0$ for any $t > \tau$.

%Note that I assume consumer can commit to pay any  random price $T_t$ in period $t$. Thus, this model only deviates from the benchmark model in that the buyer is naive. 

\paragraph{History and Strategies}
$H_t = \{ h_t | h_t = (T_0,X_0,A_0...,T_{t-1},X_{t-1},A_{t-1}  )  \}$ is the set of all histories up to period $t$. $\sigma^t_s(h_t,h_s): H_t \times H_s \to \{0,1\}$ is the action that the buyer believes that she will take  if she faces $h_s$ in future period $s\geq t$, when she will actually face $h_t$. A strategy $\sigma^t (h_t) = (\sigma^t_s(h_t, \cdot ) )_{s\geq t}  \in S^t$ is the strategy that the buyer believes she will play thereafter at the start of period $t$.  For notational simplicity, I use $\sigma^t_s(h_s)$ to denote $\sigma^t_s(h_t,h_s)$ when there is no confusion. Naturally, the action the buyer actually takes in period $t$ is $A_t = \sigma^t_t(h_t)$.

\paragraph{Dynamically Inconsistent Preferences}
At the start of each period $t$, the buyer will evaluate a strategy $\sigma^t$ according to her updated belief of $(X_s,T_s)_{s\geq t}$:
\begin{gather*}
\int_0^{\infty} w_+ (  P( \delta^{\tau-t} \theta > y| h_t, \sigma^t ) ) dy - \lambda \int_{0}^{\infty} w_- ( P (  \sum_{s=t}^{\tau} \delta^{s-t} \sigma^t_s(h_s) T_s > y | h_t, \sigma^t ) ) dy.
\end{gather*}
The specification, which builds on separate evaluation and quasilinear utility, is the same as in the previous section. $\delta^{\tau-t} \theta$ represents the discounted gains from the good, while $ \sum_{s=t}^{\tau} \delta^{s-t} \sigma^t_s(h_s) T_s$ is the discounted loss from payment.

\paragraph{Design Problem}
The design problem can be written as:
\begin{gather}
	\max_{(X_t,T_t),\sigma} \quad \mathbb{E}  \sum_{t=1}^{\infty}  \delta^t     T_t \sigma_t^t(h_t), \\
		s.t. \nonumber \\
		\text{ (SIR) } \int_0^{\infty} w_+ (  P( \delta^{\tau-t} \theta > y| h_t, \sigma^t ) ) dy - \lambda \int_{0}^{\infty} w_- ( P (  \sum_{i=t}^{\tau} \delta^{s-t} \sigma^t_s(h_s) T_s > y | h_t, \sigma^t ) ) dy \geq 0 ~ \forall h_t, \nonumber\\
			\int_0^{\infty} w_+ (  P( \delta^{\tau-t} \theta > y| h_t, \sigma^t ) ) dy - \lambda \int_{0}^{\infty} w_- ( P (  \sum_{i=t}^{\tau} \delta^{s-t} \sigma^t_s(h_s) T_s > y | h_t, \sigma^t ) ) dy  \nonumber \\
			\geq 		\int_0^{\infty} w_+ (  P( \delta^{\tau-t} \theta > y| h_t, \bar{\sigma}^t ) ) dy - \lambda \int_{0}^{\infty} w_- ( P (  \sum_{i=t}^{\tau} \delta^{s-t} \bar{\sigma}^t_s(h_s) T_s > y | h_t, \bar{\sigma}^t ) ) dy,  ~\forall h_t,~ \bar{\sigma}^{t}.	\nonumber
\end{gather}
Here $\sigma = (\sigma^t)_t$ denotes a sequence of the buyer's strategy $\sigma^t$. At each time period $t$, the buyer believes that she will play strategy $\sigma^t$ from period t onward. The incentive constraints require that, given her updated belief in period $t$,  the  strategy $\sigma^t$ should be optimal if it can indeed be implemented by her future self. In particular, $\sigma^t$ is better than immediately quitting the market forever. It turns out that this constraint is sufficient to derive the optimal mechanism, so I separate it out with a special notation SIR (sequential individual rationality).

I use the term SIR to distinguish it from the DIR used in Section 3. In the setting of this section, buyer can commit to make a random payment within each period, and the incentive constraints in each period only need to hold before the random outcome realizes. Thus, if the buyer is sophisticated, the static mechanism in the benchmark solution (Proposition \ref{prop1}) is optimal. The dynamic design is purely intended to exploit the naivety of the buyer.

To present the result, denote the set of on-path histories as $H^* = \{h_t | \tau\geq t,  A_s = 1 ~ \forall  s <t\}$. $H^*$ is the set of histories where the buyer continued purchasing in the past but has not received the good.
\begin{theorem}
	\label{thm2}
	The optimal profit is
	\begin{gather*}
	V^* = \max_{x\in [0,1]} \frac{\mu^* \theta}{\lambda} \frac{w_+(x)}{1-(1-x)\delta}.
	\end{gather*}
	The following dynamic pricing process $(X_t^*,T_t^*)$ is optimal. 
	\begin{gather*}
		\begin{cases}
				P(X_t^*=1 |h_t \in H^*) &= x^*\in \argmax  \frac{w_+(x)}{1-(1-x)\delta}, \\
				P(X_t^*=1 |h_t \not\in H^*) &= 0.
		\end{cases} \\
	\begin{cases}
		P(T_t^* = \frac{\mu^*}{\lambda p^*} \theta w_+(x^*) | h_t \in H^*) &= p^*, \\
		P(T_t^* = 0 | h_t \in H^*) &= 1- p^*,\\
			P(T_t^* = 0 | h_t \not \in H^*) &= 1.
	\end{cases}
	\end{gather*}
	The buyer always naively believes that she will purchase one last time in each period before obtaining the good. The optimal mechanism is generically unique in $H^*$.\footnote{For any probability weighting function $w_+$, there is at most one $\delta$ where the solution is not unique on path (in $H^*$).}
\end{theorem}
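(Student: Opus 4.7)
The plan is to establish three claims: (a) the proposed stationary mechanism is feasible and yields revenue $V^*$; (b) no feasible mechanism exceeds $V^*$; (c) on $H^*$ the optimal mechanism is generically unique in $\delta$.

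For (a), I verify that at every on-path history $h_t \in H^*$ the naive ``one last try'' strategy $\sigma^t$ (with $\sigma^t_t=1$ and $\sigma^t_s=0$ for all $s>t$) makes SIR bind. Under this belief the believed CPT gain is $\theta w_+(x^*)$ and the believed CPT loss is $\lambda w_-(p^*)\cdot \frac{\mu^*\theta w_+(x^*)}{\lambda p^*}=\theta w_+(x^*)$ by Lemma \ref{lemm1} (using $\mu^* w_-(p^*)=p^*$); the two cancel. So ``one last try'' gives believed value zero, is weakly preferred to quitting immediately, and the naive buyer plays in every on-path period. Summing the resulting geometric series of discounted expected per-period payments gives $\frac{\mu^*\theta w_+(x^*)}{\lambda(1-\delta(1-x^*))}=V^*$.

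For (b), I use (backward) induction to show that the actual discounted revenue $V_t$ from any on-path $h_t$ is at most $V^*$. Let $\sigma^t$ be the buyer's optimal naive belief, and let $\pi^b_t$ and $\tilde T^b_t$ denote the believed probability of eventually obtaining the good and the believed total discounted payment under $\sigma^t$. Assuming $\sigma^t$ has finite believed horizon $k$ (she believes she quits after $t+k-1$), the actual revenue decomposes as $V_t = \mathbb{E}[\tilde T^b_t] + \delta^k(1-\pi^b_t)V_{t+k}$ because per-period extraction depends only on the mechanism (not on beliefs) whenever the buyer in fact plays. SIR at $h_t$ combined with Proposition \ref{prop1} applied to $\tilde T^b_t$ yields $\mathbb{E}[\tilde T^b_t] \le \frac{\mu^*\theta w_+(\pi^b_t)}{\lambda}$, since the CPT gain under $\sigma^t$ is bounded by $\theta w_+(\pi^b_t)$. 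By the inductive hypothesis $V_{t+k} \le V^*$ and the definition $V^* \ge \frac{\mu^*\theta w_+(\pi^b_t)}{\lambda(1-\delta(1-\pi^b_t))}$, the inequality $\delta^k\le\delta$ gives $\frac{\mu^*\theta w_+(\pi^b_t)}{\lambda} \le V^*\bigl(1-\delta^k(1-\pi^b_t)\bigr)$, and combining yields $V_t \le V^*\bigl(1-\delta^k(1-\pi^b_t)\bigr) + \delta^k(1-\pi^b_t)V^* = V^*$.

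For (c), uniqueness of the binary random price in Proposition \ref{prop1} pins down the distribution of $T_t$ given $x_t$. Uniqueness of $x^* \in \arg\max \frac{w_+(x)}{1-\delta(1-x)}$ fails only at values of $\delta$ at which two distinct maximizers coexist; by continuity of the objective in $\delta$, this can happen at at most one value, giving the generic qualifier. The main obstacle I anticipate is handling general naive beliefs $\sigma^t$ in (b) with infinite horizon or irregular support (the buyer believes she plays in some but not all future periods). The infinite-horizon case $\pi^b_t \to 1$, $k\to\infty$ is handled directly since the leftover term $\delta^k(1-\pi^b_t)V_{t+k}$ vanishes and $\frac{\mu^*\theta w_+(1)}{\lambda} \le V^*$ by evaluating the defining max at $x=1$. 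Irregular support requires verifying that beliefs of the form ``play in a proper subset of future periods'' are weakly dominated by ``play in every period up to some $k$'' for the buyer, a property that should follow from discounting and the monotone structure of SIR but needs careful checking.
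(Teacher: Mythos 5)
Your part (b) is essentially the paper's upper bound recast as a recursion: blocking revenue by the buyer's believed purchase horizon, applying Proposition \ref{prop1} with SIR to bound each block by $\mu^*\theta w_+(\pi^b_t)/\lambda$, and using $\delta^k\le\delta$ (the paper's $\hat{\tau}_k\ge k$) to close the argument. That part is sound modulo two technical points the paper treats more carefully: the decomposition $V_t = \mathbb{E}[\tilde T^b_t]+\delta^k(1-\pi^b_t)V_{t+k}$ should be an inequality, since the buyer need not follow through on her period-$t$ plan (the paper explicitly bounds by \emph{assuming} she does); and beliefs that prescribe purchasing on a proper subset of future periods are not handled by a domination argument in the paper at all --- the block construction $(\hat\tau_k,\tau_k)$ simply covers all actual purchase periods and uses $\hat\tau_k\ge k$, so your anticipated ``weak dominance'' step is unnecessary and, as stated, unproven.

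The genuine gap is in part (a). You verify only that the one-period plan $\sigma_0$ gives believed payoff zero and weakly beats quitting, and conclude the buyer purchases. But SIR/IC require the believed strategy to be optimal against \emph{all} alternatives, and the theorem asserts the buyer's perceived strategy is ``one last try.'' You must rule out that some multi-period plan $\sigma_n$ ($n\ge 1$: purchase for $n+1$ periods, then quit) yields strictly positive believed payoff. This is exactly the step the paper flags as ``far from trivial because of the probability weighting'': under $\sigma_n$ the believed arrival time of the good has support $\{\delta^k\theta\}_{k\le n}$ and the believed total payment is a sum of $n+1$ binary prices, and whether the CPT evaluation of this compound lottery is negative depends delicately on the curvature of $w_+$. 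The paper's verification occupies most of the appendix: it writes the believed surplus of $\sigma_n$ as $D(\delta)=\sum_k a_k\delta^k$, shows via the mean value theorem and $w'''_+>0$ that the $a_k$ change sign once, derives the differential inequality $D'(\mu)\ge k_0 D(\mu)$ to conclude $D(1)>0$, and then contradicts the optimality of $x^*\in\argmax w_+(x)/(1-(1-x)\delta)$. None of this is present or substituted for in your proposal, so the implementability of the proposed mechanism --- and the belief claim in the theorem --- is not established. Relatedly, your genericity argument in (c) (``by continuity, at most one $\delta$'') does not actually prove uniqueness of the maximizer off an exceptional $\delta$; the paper's argument goes through the concavification of $w_+$ and the strict monotonicity of $V^*$ in $\delta$.
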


In the proposed optimal process, if the buyer ever chooses not to purchase, the seller ends the entire process. Otherwise, in each period, the seller sells a ``loot box'', which provides the buyer the good with probability $x^*$, and charges a binary random price. In each period before receiving the good, the buyer purchases the loot box and naively believes that her future self will stop purchasing. The optimal mechanism is generically unique only on the path: multiple solutions exist because there is some freedom in what happens if the buyer refuses to purchase.

According to Proposition \ref{prop1}, the optimal revenue when the buyer is sophisticated is $\mu^* \theta/\lambda $.  By optimally manipulating the naivety of the buyer, the seller adds an additional multiplier to the revenue:
\begin{gather*}
\max_{x\in[0,1]}  \frac{w_+(x)}{1-(1-x)\delta}.
\end{gather*}

For a better understanding of how the seller exploits the naivety of the buyer, take $\delta =1$, and the factor becomes:
\begin{gather*}
\max_{x\in[0,1]}  \frac{w_+(x)}{x}.
\end{gather*}
When the buyer is sophisticated, she knows the eventual probability of obtaining the product. Any decrease in this probability is an efficiency loss since $w(p)$ is increasing in $p$. Thus, the seller has no choice but to set it to 1. However, when the buyer is naive, she pays in each period for a probability $x^*$ of receiving the product in the optimal mechanism. Even if the product is not delivered in one period, there is no efficiency loss since the seller can sell it in the next period (the buyer in this period does not realize this). Consequently, the seller is able to repeatedly choose the most ``efficient'' probability that maximizes the ratio, without ``wasting'' the remaining probability. If $\delta=1$ and $w_+'(0)=\infty$, the seller can even obtain infinite revenue. However, when $\delta<1$, there is a tradeoff between the efficiency of the probability exchange rate and the delayed payment, and the optimal profit is bounded.

\paragraph{Challenge and Sketch of Proof}
Technically, the main challenge of the proof is the disconnection between the payment and  the ``factual'' probability of delivering the good. In each period, the agent might be willing to pay a large amount of money with a low probability of obtaining the good, under the naive belief that her future decisions can induce a random payment distribution that is favorable under probability weighting. However, because of dynamic inconsistency, the original plan might become unattractive, and the agent might deviate to another plan and voluntarily forgo the initial ``promise''. 

To establish a proper upper bound of the total expected revenue as a function of the  factual probability of delivering the good in each period, I partition the whole time horizon according to the agent's perceived strategy.
Specifically, fix a given $(X_t,T_t)_t$ and an IC-compatible strategy $\sigma$. Recursively define the following stopping time:
\begin{gather*}
	\tau_{-1} =0,\\
	\hat{\tau}_{k+1} = \begin{cases}
		\tau_k,  &\text{ if } \sigma_{\tau_k}^{\tau_k}( h_{\tau_k}) = 1, \\
		\min\{ t > \tau_k|  \sigma^{t}_t(h_t) = 1  \},  &\text{ if } \sigma_{\tau_k}^{\tau_k}( h_{\tau_k}) = 0.
	\end{cases}	\\
	\tau_{k} = 
	\min\{ t > \hat{\tau}_k|  \sigma^{\hat{\tau}_k}_t(h_t) = 0  \}.
\end{gather*}
To illustrate the meaning of the construction, let us consider the initialization.
\begin{itemize}
	\item If the buyer chooses to purchase in period $0=\tau_{-1}$, then $\hat{\tau}_0 = \tau_{-1}$, and $\tau_0$ is the first time the buyer thinks (in period $\hat{\tau}_0$) that she will temporarily stop buying.
	\item If she chooses not to purchase in period $0=\tau_{-1}$, then $\hat{\tau}_0$ is the first time that  she \emph{actually} returns to purchase the good. Furthermore, $\tau_0$ in this case is the first time that the buyer thinks (in period $\hat{\tau}_0$) that she will temporarily stop buying. 
\end{itemize}
By construction, in period $\hat{\tau}_k$, the buyer always thinks that she will keep purchasing during periods $[\hat{\tau}_k,\tau_k)$. Additionally, if $t \not \in [\hat{\tau}_k,\tau_k)$ for all $k$, then $A_t = 0$. That is, the buyer will only buy within some $[\hat{\tau}_k,\tau_k)$. It is unclear whether the buyer actually maintains her plan in period  $\hat{\tau}_k$ and purchases in every period within $[\hat{\tau}_k,\tau_k)$. However, by assuming that she actually insists on purchasing, I can obtain a upper bound on the profit.
\begin{align*}
	\Pi &= \mathbb{E}  \sum_{t=0}^{\infty}  \delta^t     T_t \sigma_t^t(h_t) =  \mathbb{E} \sum_{k=0}^{\infty} \sum_{t= \hat{\tau}_k}^{\tau_k-1}  \delta^t     T_t \sigma_t^t(h_t),\\
	&\leq  \mathbb{E} \sum_{k=0}^{\infty} \sum_{t= \hat{\tau}_k}^{\tau_k-1}  \delta^t     T_t  =  \mathbb{E} \sum_{k=0}^{\infty} \delta^{\hat{\tau}_k} \sum_{t= \hat{\tau}_k}^{\tau_k-1}  \delta^{t-\hat{\tau}_k}     T_t\\
	& = \mathbb{E} \sum_{k=0}^{\infty} \delta^{\hat{\tau}_k} 1_{\tau\geq \hat{\tau}_k} \sum_{t= \hat{\tau}_k}^{\tau_k-1}  \delta^{t-\hat{\tau}_k}     T_t.
\end{align*}

The last line holds because $T_t = 0$ for any $t > \tau$.
Note that by construction, $\tau_k > \hat{\tau}_k \geq \tau_{k-1}$. Therefore, $\hat{\tau}_k \geq k$, and
\begin{align*}
	\Pi &\leq  \mathbb{E} \sum_{k=0}^{\infty} \delta^{k} 1_{\tau\geq \hat{\tau}_k}  \sum_{t= \hat{\tau}_k}^{\tau_k-1}  \delta^{t-\hat{\tau}_k}     T_t,\\
	&= \mathbb{E} \sum_{k=0}^{\infty} \delta^{k}  1_{\tau\geq \hat{\tau}_k}  \mathbb{E} [ \sum_{t= \hat{\tau}_k}^{\tau_k-1}  \delta^{t-\hat{\tau}_k}     T_t | h_{\hat{\tau}_k}, \tau\geq \hat{\tau}_k ]
\end{align*}

Importantly, because continuing to purchase within  $[\hat{\tau}_k,\tau_k)$ is the strategy that the buyer believes that she will use when she faces $h_{\hat{\tau}_k}$ in period $\hat{\tau}_k$, SIR must hold, and I can use Proposition \ref{prop1} to place an upper bound on the discounted transfer:
\begin{align*}
	\mathbb{E} [ \sum_{t= \hat{\tau}_k}^{\tau_k-1}  \delta^{t-\hat{\tau}_k}     T_t | h_{\hat{\tau}_k}, \tau\geq \hat{\tau}_k ] &\leq  \frac{\mu^*}{\lambda}  \int_0^{\infty} w_+ (  P( 1_{\hat{\tau}_k\leq \tau < \tau_k} \delta^{\tau-\hat{\tau}_k} \theta > y|h_{\hat{\tau}_k}, \tau\geq \hat{\tau}_k  ) ) dy,\\
	&\leq \frac{\mu^*}{\lambda}  \int_0^{\infty} w_+ (  P( 1_{\hat{\tau}_k\leq\tau < \tau_k}  \theta > y|h_{\hat{\tau}_k}, \tau\geq \hat{\tau}_k  ) ) dy,\\
	&= \frac{\mu^* \theta}{\lambda}  w_+( P ( \hat{\tau}_k \leq \tau < \tau_k |h_{\hat{\tau}_k}, \tau\geq \hat{\tau}_k) ).
\end{align*}

Therefore, I obtain the following upper bound on the total profit:
\begin{gather*}
	\Pi \leq \mathbb{E} \sum_{k=0}^{\infty} \delta^{k} \frac{\mu^* \theta}{\lambda} 1_{\tau\geq \hat{\tau}_k}   w_+( P ( \hat{\tau}_k \leq \tau < \tau_k  |h_{\hat{\tau}_k},  \tau\geq \hat{\tau}_k ) ).
\end{gather*}

Define scales $\mu_k$, random variable $Y_k$ and (conditional) probability $\bar{F}_k$ as:
\begin{align*}
	Y_k &=  1_{\hat{\tau}_k \leq \tau < \tau_k } , \\
	\bar{F}_k ( \cdot,  h_{\hat{\tau}_k}) &=  P( \cdot |h_{\hat{\tau}_k},  \tau\geq \hat{\tau}_k).
\end{align*}

The upper bound can be expressed as: 
\begin{gather*}
	\Pi \leq \mathbb{E}  \sum_{k=0}^{\infty}  \frac{\mu^* \theta}{\lambda}  \delta^{k} \prod_{i=0}^{k-1}(1-Y_i) \int  w_+(  Y_k  )  d \bar{F}_k.
\end{gather*}

Now, consider a further relaxed problem that maximizes the upper bound over all $(Y_k,\bar{F}_k)$. The maximization problem is  a simple stationary optimization problem, whose optimal value is solved by:
\begin{gather*}
	V =  \max_{F\in \Delta[0,1]} \frac{\mu^* \theta}{\lambda} \int w_+(x) dF + (1-\int x d F) \delta V.
\end{gather*}

Thus far, I have   provided an upper bound on the profit that the seller can collect. This upper bound can be achieved by the proposed loot box mechanism if the buyer continues buying until she obtains the good. The second step is to verify the proposed strategy, namely ``unless having received the good already, always plan to purchase one last time and quit thereafter'' is indeed the buyer's perceived strategy. This part is not as easy as in standard models because of probability weighting, and I relegate it  to the Online Appendix.

\section{Discussion and Extensions}
\subsection{Policy Discussion on the Gacha Games Industry}
As proved in Proposition \ref{prop1}, Theorem \ref{thm3} and Theorem \ref{thm2},  the maximum profit for a stochastic pricing process (under the nonnegativity constraint) is bounded. In contrast, the maximum profit for gambling (without the nonnegativity constraint) is infinite. This distinction provides one rationale for potentially differentiated regulation of stochastic pricing processes, especially if the regulator believes that the probability weighting is the authentic preference of buyers that should be respected.

There is one caveat about the validity of the distinction. Simply preventing the seller from giving the buyer money is insufficient to ensure that the nonnegativity constraint is satisfied because the buyer might be able to resell the product in the second-hand market. What is worse, sellers can secretly repurchase the product in the second-hand market to keep the resale price as high as they want.
There are many possible policies to completely shut down this channel in video games or any virtual industry. One example is to require that any products sold by stochastic processes cannot be transferred  among accounts. In contrast, it is much more difficult to regulate sellers that sell physical goods, such as Pop Mart, which gained a \$6.9 billion  market cap by selling fashion toys via blind boxes. This is because it is almost impossible to ban offline second-hand markets for physical goods.

The above policy suggestion is consistent with developing legal practice. For example, in April of 2018, the Netherlands Gaming Authority conducted a study of 10 unnamed games and concluded that four were in violation of Netherlands laws concerning gambling. Specifically, the study said (via PC Gamer), ``that the content of these loot boxes is determined by chance and that the prizes to be won can be traded outside of the game: the prizes have a market value.'' To sell such items in the Netherlands, a license is required, but given the current laws, no license can be given to game companies, so ``these loot boxes (were) prohibited.'' The loot boxes used in the other games were deemed legal because they lack ``market value.'' According to the study, those loot boxes whose prizes would not be traded constituted a low risk for gambling addiction, being akin to ``small-scale bingo.'' The marketable loot boxes, however, which are banned in the country, ``have integral elements that are similar to slot machines.''\footnote{This paragraph is cited from  \url{https://screenrant.com/lootbox-gambling-microtransactions-illegal-japan-china-belgium-netherlands/}.}

In addition to using regulation to prohibit negative prices, it might also be appropriate for regulators to prevent the exploitation of buyers' naivety regarding their dynamically inconsistent preferences because this kind of exploitation damages social welfare even if the probability weighting is fully respected. 
Recall that the optimal price is ex post bounded for sophisticated buyers but ex post unbounded for naive buyers. Therefore, it might be beneficial to impose some finiteness restriction on prices. For example, the 2016 Chinese regulation  requires game companies to give players an idea of the maximum number (which must be finite) of loot boxes that they would need to  buy to ensure that they obtain a certain item, and it also introduced caps on the number of loot boxes that can be bought in a certain day.

\bigskip

\subsection{Salience Theory}
Apart from prospect theory and RDEU, which feature rank-dependent probability weighting, salience theory is another important and increasingly popular decision-theoretic framework that involves nonlinear probability weighting. As \cite{Shleifer12} noted in their seminar paper, the predictions of salience theory look very similar to those of
prospect theory in certain situations, but in other situations—for instance when small probabilities are not attached to salient payoffs or when lotteries are
correlated—they are very different. The aim of this subsection is to argue that both models have similar predictions in the specific framework of this paper.

To do so, I explore the optimal static mechanism with ex ante IR as in Section 2. I assume that the buyer behaves according to the formulation of \cite{Lanzani22} and assume joint evaluation. The principal designs the joint distribution $F$ of allocation $X$ and nonnegative random payment $T$ that solve
\begin{gather*}
    \max_{X,T\geq 0} \int T ~d F \\
    \int (\theta 1_{X=1} - T ) \sigma(\theta 1_{X=1} - T,0) ~ dF \geq 0,
\end{gather*}
where $\sigma(x,y) > 0$ is the salience function. In this specification, the buyer jointly evaluates the utility from consumption and money and contrast the random distribution $F$ with $0$ (the outsider option). Because of the ordering property of $\sigma$, $\sigma(x,0)$ is decreasing in $x$ if $x<0$ and increasing if $x>0$. Consequently, it is optimal to let $X=1$ for certain.\footnote{Thus, I believe that joint evaluation does not matter, since it will be optimal to set $X=1$ for certain with sensible specifications of multiple variants of salience theory.} In the Online Appendix, I prove %%EDITOR'S NOTE: Please note that some text appears to be missing here. Please add any missing information.

\begin{proposition}
\label{prop_salience}
Suppose that $x \sigma(x,0)$ is strictly convex when $x>0$ and strictly concave when $x<0$ and that the left derivative of $x \sigma(x,0)$ at $x=0$ is weakly smaller than the right derivative. Then, the unique optimal solution is to deliver the product for certain and charge a binary random price, which is either 0 or a large price that exceeds the value $\theta$.
\end{proposition}
The condition in Proposition \ref{prop_salience} is satisfied for the leading example of \cite{Shleifer12}: \[\sigma(x,y) = \frac{|x-y|}{|x+y|+a}, \quad a>0.\] If the random price can be arbitrarily negative, the optimal solution does not exist since the seller always finds it profitable to introduce arbitrarily rare and negative (salient) payment.

Proposition \ref{prop_salience} states that the optimal static mechanism with ex ante IR takes a similar form as in Proposition \ref{prop1}. Intuitively, since it is always optimal to deliver the product for certain, the only randomness comes from the payment. In this case, the payoff has extreme tail probability if and only if it is salient. Additionally, the correlation does not play a role. I conjecture that the optimal dynamic design remains similar under a proper specification of  salience theory in a dynamic framework.

\subsection{Private Values}
This paper focus on cases where the value of the buyer is known. It is of course natural to seek generalization to private values. As a first step, this subsection generalizes  the static design problem with a sophisticated buyer and ex ante IR. Generalizing the results of dynamic pricing to private value is more challenging because  the buyer can find profitable double deviation in both the report and the dynamic strategy, even if there is no profitable single deviation. Thus, the classical envelopment approach fails. 

Now, suppose that the value of the buyer $\theta \in [\underline{\theta},\overline{\theta}]$ is her private information. The common prior of $\theta$ is $F$ with positive density $f$. The seller can commit to the probability of delivering the good $x(\theta)$ and the distribution of the random price $T(\theta)$ characterized by $P_{\theta}(y)= P(T(\theta)>y)$.
The design problem is:
\begin{gather*}
	\max_{x,P}   \int_{\underline{\theta}}^{\overline{\theta}}  \int_0^{\infty} P(y) dy   ~ f(\theta) d\theta , \\
	s.t. \qquad
	CPT(\theta,x(\theta),P_{\theta}) \geq  CPT(\theta,x(\hat{\theta}),P_{\hat{\theta}}), \\
	v(\theta) = CPT(\theta,x(\theta),P_{\theta}) \geq 0.
\end{gather*}

Although the envelope theorem still holds as in standard framework, the revenue equivalence theorem, which states that the allocation rule and $v(0)$ will determine the revenue, does not hold because the whole distribution of the randomized price matters. However, for any fixed allocation rule, I can characterize the optimal random price according to Proposition \ref{prop1}.
\begin{lemma}
	\label{lemm3}
	For any incentive-compatible mechanism $(x,P)$, the mechanism $(x,P^*)$ is incentive compatible and yields higher expected profit for the seller, where:
	\begin{gather*}
		P^*_{\theta}(y)  = \begin{cases}
			0   \quad  &\text{ if }y \geq  \frac{\mu^*}{\lambda p^*}  \Big( \theta w_+ (x(\theta) ) - v(\theta) \Big),\\
			p^* &\text{ if }y <   \frac{\mu^*}{\lambda p^*}  \Big( \theta w_+ (x(\theta) ) - v(\theta) \Big) ,
		\end{cases}\\
		s.t. \quad 
		v(\theta)= CPT(\theta,x(\theta),P_{\theta}).
	\end{gather*}
\end{lemma}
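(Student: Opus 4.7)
The plan is to improve the mechanism $(x, P)$ type by type by applying Proposition \ref{prop1} pointwise. For each $\theta$, I would fix the allocation $x(\theta)$ and replace $P_\theta$ with the revenue-maximizing distribution that holds type $\theta$'s truth-telling utility at exactly $v(\theta)$. Under separate accounting the payoff decomposes as $CPT(\theta, x(\theta), P_\theta) = \theta w_+(x(\theta)) - \lambda \int_0^\infty w_-(P_\theta(y))\,dy$, so this subproblem is exactly to maximize $\int_0^\infty P'_\theta(y)\,dy$ subject to $\lambda \int_0^\infty w_-(P'_\theta(y))\,dy \leq \theta w_+(x(\theta)) - v(\theta)$. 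Substituting $\theta w_+(x(\theta)) - v(\theta)$ for $\theta$ in Proposition \ref{prop1} then immediately delivers the binary form claimed for $P^*_\theta$.

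For incentive compatibility I would exploit the fact that the deviation payoff of type $\theta$ reporting $\hat{\theta}$ is $\theta w_+(x(\hat{\theta})) - \lambda \int_0^\infty w_-(P^*_{\hat{\theta}}(y))\,dy$, whose loss component depends only on $\hat{\theta}$. By construction the loss component of $P^*_{\hat{\theta}}$ matches that of $P_{\hat{\theta}}$: both distributions deliver the same utility $v(\hat{\theta})$ to the truth-telling type $\hat{\theta}$ paired with the common allocation $x(\hat{\theta})$, which pins the loss component to the identical value $\hat{\theta} w_+(x(\hat{\theta})) - v(\hat{\theta})$. Hence the on-path payoff $v(\theta)$ and every off-path deviation payoff coincide across $(x, P)$ and $(x, P^*)$, so IC and IR carry over verbatim.

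The one point that needs checking is $\theta w_+(x(\theta)) - v(\theta) \geq 0$, which is needed to invoke Proposition \ref{prop1}; this follows automatically from $\lambda \int_0^\infty w_-(P_\theta(y))\,dy = \theta w_+(x(\theta)) - v(\theta)$ together with the nonnegativity of the left-hand side. Given this, the revenue improvement is pointwise: $P_\theta$ itself is feasible for the type-$\theta$ subproblem, so Proposition \ref{prop1} yields $\int_0^\infty P^*_\theta(y)\,dy \geq \int_0^\infty P_\theta(y)\,dy$, and integrating against $f(\theta)\,d\theta$ delivers the claim. The main obstacle, which is really bookkeeping, is handling the degenerate boundary case $v(\theta) = \theta w_+(x(\theta))$, where $P^*_\theta \equiv 0$ and both the subproblem and its optimum collapse to zero.
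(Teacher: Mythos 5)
Your proposal is correct and follows essentially the same route as the paper's proof: you fix the allocation rule, observe that the loss component of $P^*_{\hat\theta}$ is constructed to equal that of $P_{\hat\theta}$ (so all on-path and deviation payoffs are unchanged and IC/IR carry over), and then invoke Proposition \ref{prop1} type by type with $\theta$ replaced by $\theta w_+(x(\theta))-v(\theta)$ to get the pointwise revenue improvement. Your additional checks — nonnegativity of $\theta w_+(x(\theta))-v(\theta)$ and the degenerate case where it vanishes — are correct and slightly more careful than the paper's own write-up.
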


Lemma \ref{lemm3} relies on the observation that all different types of buyer have the same preference over the random price. The seller can always replace an ``inefficient'' price with the binary random price that leads to the same perceived cost but more revenue.\footnote{However, this argument does not hold with a dynamic mechanism as discussed in Sections 3 and 4, in which case different types with the same report might play different quitting strategies.}

Once focusing on the optimal random price, the revenue equivalence theorem is effectively restored in the sense that the seller can focus on designing the allocation rule $x(\theta)$. I can then follow the Myerson approach to obtain Proposition \ref{prop4} (all proofs are in the Online Appendix). The optimal mechanism is to  post a binary random price for the good.
\begin{proposition}
	\label{prop4}
	For any $\theta^* \in \argmax_{\theta_0}\theta (1-F(\theta ))$, the following post (random) price mechanism $(x^*,P^*)$  is optimal:
	\begin{gather*}
		P^*_{\theta} (y) = 1_{\theta \geq \theta^*} P'(y), \quad x^*(\theta) = 1_{\theta \geq \theta^*}, \\
		\text{where } 	P'(y)  = \begin{cases}
			0   \quad &\text{ if }y \geq \frac{\mu^*\theta^*}{\lambda p^* } ,\\
			p^* &\text{ if }y <   \frac{\mu^*\theta^*}{\lambda p^* } .
		\end{cases} \qquad
	\end{gather*}
\end{proposition}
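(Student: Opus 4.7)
}
The plan is to convert the private-value problem into a standard Myerson-type maximization in three steps. First, invoke Lemma \ref{lemm3} to restrict attention to binary random prices (this is available because, as Lemma \ref{lemm3} relies on, all types share the same preference over the distribution of the random payment). Second, apply an envelope argument to rewrite expected revenue as a virtual-welfare functional in the \emph{weighted} allocation $w_+(x(\theta))$. Third, maximize pointwise and verify that the resulting cutoff coincides with $\argmax_{\theta_0}\theta(1-F(\theta))$.

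\textbf{Reduction and envelope.} By Lemma \ref{lemm3}, it is without loss to replace $P_\theta$ by the binary random price of the stated form. The revenue extracted from a truthful type $\theta$ then equals $\int_0^\infty P^*_\theta(y)\,dy = \frac{\mu^*}{\lambda}\bigl(\theta w_+(x(\theta)) - v(\theta)\bigr)$, where $v(\theta) = CPT(\theta, x(\theta), P_\theta)$ is the indirect utility. A type-$\theta$ buyer reporting $\hat\theta$ obtains $\theta w_+(x(\hat\theta)) - \lambda\int_0^\infty w_-(P_{\hat\theta}(y))\,dy$, and since only the first term depends on $\theta$, the standard envelope theorem gives $v'(\theta) = w_+(x(\theta))$, so $v(\theta) = v(\underline{\theta}) + \int_{\underline{\theta}}^\theta w_+(x(s))\,ds$.

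\textbf{Virtual welfare and pointwise optimization.} Substituting $v$ into expected revenue and swapping the order of integration yields
\begin{gather*}
\mathbb{E}[\text{Rev}] \;=\; \frac{\mu^*}{\lambda}\int_{\underline{\theta}}^{\overline{\theta}} w_+(x(\theta))\,\Bigl(\theta - \tfrac{1-F(\theta)}{f(\theta)}\Bigr)\,f(\theta)\,d\theta \;-\; \frac{\mu^*}{\lambda}\,v(\underline{\theta}).
\end{gather*}
IR forces $v(\underline{\theta})\geq 0$, so the optimum sets $v(\underline{\theta})=0$, and IC forces $x(\theta)$, hence $W(\theta):=w_+(x(\theta))$, to be weakly increasing in $\theta$. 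Because $w_+:[0,1]\to[0,1]$ is continuous, increasing, and surjective, one can treat $W\in[0,1]$ as the effective allocation. Pointwise maximization then gives $W(\theta)=1$ on $\{\theta f(\theta)>1-F(\theta)\}$ and $W(\theta)=0$ elsewhere; under the usual regularity assumption the boundary is a unique cutoff $\theta^*$ satisfying $\theta^* f(\theta^*)=1-F(\theta^*)$, which is precisely the first-order condition characterizing $\theta^*\in\argmax_{\theta_0}\theta(1-F(\theta))$. One final integration by parts shows that the resulting revenue equals $\frac{\mu^*}{\lambda}\theta^*(1-F(\theta^*))$, attained exactly by the posted-random-price mechanism in the statement. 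The main obstacle is verifying that the pointwise maximizer meets the monotonicity required by IC: this is automatic under regularity of the virtual value, and otherwise would require a standard Myerson-style ironing argument.
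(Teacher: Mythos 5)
Your proposal is correct and follows essentially the same route as the paper: reduce to the optimal binary random price via Lemma \ref{lemm3}, apply the envelope theorem to get $v(\theta)=v(\underline{\theta})+\int_{\underline{\theta}}^{\theta}w_+(x(s))\,ds$, integrate by parts into a virtual-welfare objective in $w_+(x(\theta))$, and maximize pointwise (with ironing in the non-regular case). The only cosmetic difference is that you make explicit the change of variable to the effective allocation $W=w_+(x(\theta))\in[0,1]$ and the identification of the cutoff with the first-order condition for $\argmax_{\theta_0}\theta(1-F(\theta))$, both of which the paper leaves implicit.
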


\subsection{Other Economic Applications}
This paper focuses on the gacha game industry to maximize the tightness of the application. However, the theoretic model can also shed light on other economic applications with different interpretations. Denote the optimal solution in Proposition \ref{prop_expo} as $(\tau_0,T^*(t))$. According to Lemma \ref{lemma4} and Proposition \ref{prop3}, $T^*(\tau_0)$ solves
\begin{gather*}
	max_{T \in \mathcal{T}}  \int_0^{\infty} P(T>y) dy, \\
	s.t. \quad \lambda \int_0^{\infty} w_-(  P(T-s > y) | T> s)  dy \leq \theta, \quad \forall s, ~  s.t. ~ P(T > s) > 0.
\end{gather*}
Next, I discuss two economic applications that can be mapped to this design problem. 
\paragraph{Use Promotion to Incentivize Work}  A firm wants to use promotion (a binary action $p=0,1$) to incentivize a worker to work. The promotion has value $\theta$ to the worker. At each time $t$, the worker can decide whether to shirk or work at a flow cost of $c$ (normalized to 1). Suppose that the firm wants to maximize the expected effort and can design an arbitrary random arrival time of the promotion $T$ that depends on the history of the worker's action. 
\paragraph{Random Answering Time in Queuing System} A queuing system wants to design a random answering time to maximize the expected waiting time for each agent while keeping the agent waiting. The service offered by the system has value $\theta$ to the agent, while waiting in line incurs a flow cost of 1. 

Admittedly, the above specifications are highly stylized in their own settings. For example, it is unrealistic to assume that a promotion can depend on the entire history of a worker's action, and it is difficult for small queuing systems to arbitrarily control the answering time. However, the benchmark solution could shed some light on a more complicated design model with prospect theory agents. In the promotion example, it suggests a tenure system with a bounded length. In the queuing system example, it suggests the use of service-in-random-order (SIRO) with the weight of the waiting agents increasing with their waiting time.

\bibliography{optimal}
\bibliographystyle{chicago}

\section{Appendix}
\begin{proof}[Proof of Lemma \ref{lemm1}]
	First, since $w_-$ is strictly concave in $[0,p_-].$ I know $p/w(p)$ is strictly increasing in $[0,p_-]$. Thus $\mu^* = \sup_{p\in [p_-,1]}w_-(p)$. Because $w_-$ is continuous the optimal $p^*$ exists. Additionally, since $w_-'(1)>1$ I know $\mu^*>1$ and $p^*<1$. 
	
	Then I prove that such $p^*$ is unique. Suppose there is another $\hat{p}$ achieves the bound. We already know it is clear that $p^*,\hat{p} \geq p_-$. Note that by definition the affine function $y(p) = w_-(p^*) + \mu^* (p-p^*) = \mu^* p \leq w(p)$ is a supporting plane for $w_-(p)$ at $w_-(p^*)$. Since $w_-$ is differentiable, the supporting plane at $w_-(p^*)$ is unique and thus every other points in the strictly convex region are strictly above the supporting plane: $w_-(p) > \mu^* p,~ \forall  p^*\neq p \in [p_-,1]$. Therefore $\hat{p} = p^*$.
\end{proof}

\bigskip

\noindent
\begin{proof}[Proof of Proposition \ref{prop1}]
	In the main text I already show that
	\begin{align*}
		\int_0^{\infty} P(y) dy  &=  \int_0^{\infty} w_-(P(y))  \frac{P(y)}{w_-(P(y))} dy\\
		&\leq  \int_0^{\infty} w_-(P(y))  \mu^* dy  \leq \frac{\mu^*\theta}{\lambda }.
	\end{align*}
	Clearly $P^*$ can achieve this upper bound:
	\begin{align*}
		\int_0^{\infty} P^*(y) dy  &=  \int_0^{\frac{\mu^* \theta}{ p^* \lambda}} p^*   dy
		= \frac{\mu^* \theta}{ p^* \lambda} p^*  = \frac{\mu^*\theta}{\lambda }.
	\end{align*}
	$P^*$ is also feasible:
	\begin{gather*}
		\lambda \int_0^{\infty} w_-(P^*(y)) dy = \lambda  \int_0^{\frac{\mu^* \theta}{ p^* \lambda}} w_-(p^*) dy = 
		\lambda \frac{\mu^* \theta}{ p^* \lambda} \frac{p^*}{\mu^*} = \theta.
	\end{gather*}
	Therefore, $P^*$ is an optimal solution. 
	
	As for the uniqueness, from Lemma \ref{lemm1}, the upper bound is achieved only if for a.e. $y>0$, either $P(y) = p^*$  or $P(y) = 0$. Now suppose $P'(y)$ is an optimal solution. Because $P'(y)$ is a decreasing function in $y$, it must take the following form:
	\begin{gather*}
		P'(y) = \begin{cases}
			p^* \quad  &\text{ if } \quad   y < y_0,\\
			0 &\text{ if }  \quad   y\geq y_0.
		\end{cases}
	\end{gather*}
	If $y_0 < {\frac{\mu^* \theta}{ p^* \lambda}}$, the expectation of $P'$ is smaller than the optimal value, which contradicts with the optimality. If $y_0 >  {\frac{\mu^* \theta}{ p^* \lambda}}$, the ex ante IR is violated. In conclusion $P' = P^*$ so the uniqueness is proved.
\end{proof}

\bigskip

\noindent
\begin{proof}[Proof of Lemma \ref{Lemma_eventual}]
	For any $(\tau,T(t))$ such that $P(\tau<\infty)= p < 1$. According to Proposition \ref{prop1}, I know the profit is bounded by $\mu^*\theta/\lambda$, thus for any $\delta$ there exists a time $t_0$ such that
	\begin{gather*}
		\int_{T(t_0)}^{\infty} P(T(\tau)) dy < \delta.
	\end{gather*} 
	Now  construct a new mechanism $(\tau_1,T_1(t))$ such that
	\begin{gather*}
		T_1(t) = T(t) + ( 1 - w_+(p)) \frac{\theta}{\lambda}, \quad \forall t> 0,\\
		\tau_1 = \min \{ \tau , t_0\}.
	\end{gather*}
	In the new payment function $T_1$ I charge one additional lump sum payment at time $0$ and keeps other payment unchanged. By construction I have
	\begin{gather*}
		1_{\tau_1>s}(T(\tau_1) - T(s) )  \leq 1_{\tau > s} ( T(\tau) - T(s)  ),\quad \forall s > 0;\\
		P(\tau_1 < \infty | \tau_1 >s ) = 1 \geq P(\tau < \infty | \tau >s ) ,\quad \forall s > 0.
	\end{gather*}
	Therefore $(\tau_1,T_1(t))$ is DIR compatible for $t> 0$. As for time $0$ DIR condition, I have
	\begin{gather*}
		\theta   w_+(P(\tau_1 < \infty ) - \lambda \int_{0}^{\infty} w_-( P( T_1 (\tau_1) > y   )  ) dy ,\\
		= \theta -  ( 1 - w_+(p))\theta  - \int_0^{\infty} w_-( P( T (\tau_1) > y   )  ) dy,\\
		\geq  \theta w_+(  P(\tau <\infty)  ) - \int_0^{\infty} w_-( P( T (\tau) > y   )  ) dy \geq 0.
	\end{gather*}
	Thus $(\tau_1,T_1(t))$ is indeed DIR compatible. On the other hand, the total revenue for the new mechanism is:
	\begin{gather*}
		\int_0^{\infty} P(T_1(\tau_1 ) > y)  dy = ( 1 - w_+(p)) \frac{\theta}{\lambda} + \int_0^{\infty} P(T(\tau_1 ) > y)  dy,\\
		\geq ( 1 - w_+(p)) \frac{\theta}{\lambda} + \int_0^{T(t_0)} P(T(\tau_1 ) > y)  dy,\\
		\geq ( 1 - w_+(p)) \frac{\theta}{\lambda} + \int_0^{\infty} P(T(\tau ) > y)  dy - \delta.
	\end{gather*}
	Because $p<1$ and $w(p)<1$, I can get strict improvement on total revenue for small $\delta$.
\end{proof}

\bigskip

According to Lemma \ref{Lemma_eventual}, it is without loss of generality to focus on mechanisms where $P(\tau<\infty)=1$. At any time $t$ before the buyer receives the good, she knows that  she will eventually obtain the good for certain if she continues purchasing: $P(\tau<\infty | \tau > t)= 1$. Thus DIR (inequality  \ref{eq_IC}) becomes:
\begin{gather*}
	\lambda \int_{0}^{\infty} w_-( P( T (\tau) - T(s) > y  | \tau >s )  ) dy \leq \theta, \quad   \forall s,~ s.t. ~ P(\tau > s) > 0.
\end{gather*}
In fact, I can further rewrite DIR as:
\begin{lemma}
	\label{lemma3}
	Suppose that $P(\tau<\infty)=1$. Then, $(\tau,T(t))$ is DIR compatible if and only if
	\begin{gather*}
		\lambda \int_{0}^{\infty} w_-( P( T (\tau) - s > y  | T(\tau) >s )  ) dy \leq \theta, \quad   \forall s,~ s.t. ~ P(T(\tau) > s) > 0.
	\end{gather*}
\end{lemma}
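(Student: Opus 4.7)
The plan is to treat the deterministic, increasing, left-continuous cumulative-payment function $T(\cdot)$ as a change-of-variables between the ``time'' parametrization of DIR in (\ref{eq_IC}) and the ``money'' parametrization appearing in the lemma, and to show the two families of constraints coincide. The whole argument pivots on the elementary observation that, because $T$ is increasing and left-continuous, $T(\tau)>T(s)$ forces $\tau>s$, so the conditioning events $\{\tau>s\}$ and $\{T(\tau)>m\}$ match whenever $s$ and $m$ are chosen consistently.

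Step one: money-version implies DIR. For any time $s$ with $P(\tau>s)>0$, I would set $m:=T(s)$. If $P(T(\tau)>m)=0$, then conditional on $\tau>s$ one has $T(\tau)=T(s)$ a.s., and the DIR integrand vanishes for $y>0$. Otherwise monotonicity of $T$ gives $\{T(\tau)>m+y\}\subseteq\{\tau>s\}$ for every $y\geq 0$, whence
\[
P\bigl(T(\tau)-T(s)>y \,\big|\, \tau>s\bigr) \;=\; \frac{P(T(\tau)>m+y)}{P(\tau>s)} \;\leq\; \frac{P(T(\tau)>m+y)}{P(T(\tau)>m)} \;=\; P\bigl(T(\tau)-m>y \,\big|\, T(\tau)>m\bigr),
\]
using $P(T(\tau)>m)\leq P(\tau>s)$. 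Monotonicity of $w_{-}$ and integration in $y$ then deliver DIR at $s$ from the money version at $m$.

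Step two: DIR implies the money version. For any $m$ with $P(T(\tau)>m)>0$, I would set $s:=\inf\{t\geq 0: T(t)>m\}$. Left-continuity of $T$ forces $T(s)\leq m$, while the definition of the infimum combined with monotonicity yields $T(t)>m$ for every $t>s$. From these two facts I would extract the set identity $\{\tau>s\}=\{T(\tau)>m\}$, which in particular gives $P(\tau>s)>0$. Since $T(s)\leq m$, for every $y>0$ the inclusion $\{T(\tau)>T(s)+y\}\supseteq\{T(\tau)>m+y\}$ holds, and conditioning on the common event produces the pointwise inequality opposite to Step one. Applying $w_{-}$ and integrating then yields the money version at $m$ from DIR at $s$.

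The principal technical subtlety is the treatment of jumps and flat pieces of $T$. A naive choice $s=T^{-1}(m)$ is ill-posed at such points, but the infimum-based definition together with the left-continuity convention $T(t)=T(t-)$ makes both set identities go through cleanly: at a jump across level $m$ the buyer is credited with having committed only to $T(s)$, not $T(s+)$, at the moment she considers quitting, which is exactly what preserves $\{T(\tau)>m\}\subseteq\{\tau>s\}$ in Step one and $\{\tau>s\}=\{T(\tau)>m\}$ in Step two. A minor cosmetic point is that the $y=0$ fiber has Lebesgue measure zero in the integrals, so strict-versus-weak distinctions there are immaterial.
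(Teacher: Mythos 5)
Your proof is correct. Both arguments rest on the same elementary facts (monotonicity and left-continuity of $T$, so that $\{T(\tau)>T(s)\}\subseteq\{\tau>s\}$, plus the observation that enlarging the conditioning event can only shrink the conditional tail probabilities), but the organization differs from the paper's. The paper first reduces each family of constraints to an ``essential'' subfamily --- the set $S_1$ of times after which further payment actually occurs, and its image $T_1=T(S_1)$, which is the support of $T(\tau)$ --- shows that constraints off these sets are dominated by constraints on them, and then exhibits an exact one-to-one identification of the two essential subfamilies via $\{T(\tau)>T(s)\}=\{\tau>s\}$. You instead prove mutual domination directly: every time-constraint at $s$ is implied by the money-constraint at $m=T(s)$ (with the degenerate case $P(T(\tau)>T(s))=0$ handled separately), and every money-constraint at $m$ is implied by the time-constraint at $s=\inf\{t:T(t)>m\}$, where left-continuity gives $T(s)\le m$ and the set identity $\{\tau>s\}=\{T(\tau)>m\}$. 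Your route avoids introducing the auxiliary sets $S_1,S_2,T_1,T_2$ and is somewhat more streamlined; the paper's route makes explicit which constraints are binding-relevant (the support of $T(\tau)$), which is convenient for the later analysis. Both are complete; your handling of jumps and flat pieces via the infimum and the convention $T(t)=T(t-)$ is exactly the point where a naive inverse would fail, and you address it correctly.
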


Here $T= T(\tau)$ is a random variable representing the random cumulative payment that the buyer has to pay if she continues purchasing until she receives the good. This motivates me to reduce the joint design of $(\tau,T(t))$ to  the design of ``stopping money'' $T(\tau)$.

\noindent
\begin{proof}[Proof of Lemma \ref{lemma3}]
	We already know that $(\tau,T(t))$ is DIR compatible if and only if the following conditions holds: 
	\begin{gather*}
		\lambda \int_{0}^{\infty} w_-( P( T (\tau) - T(s) > y  | \tau >s )  ) dy \leq \theta, \quad   \forall s,~ s.t. ~ P(\tau > s) > 0.
	\end{gather*}
	We want to prove it is equivalent to T-DIR conditions holds:
	\begin{gather*}
		\lambda \int_{0}^{\infty} w_-( P( T (\tau) - s > y  | T(\tau) >s )  ) dy \leq \theta, \quad   \forall s,~ s.t. ~ P(T(\tau) > s) > 0.
	\end{gather*}
	Denote
	\begin{gather*}
		S_1 = \{ s |  P(\tau > s) > 0, ~ T(t)>T(s), ~ \forall t~ s.t.~ P(t>\tau>s)> 0  \},\\
		S_2 =  \{ s |  P(\tau > s) > 0,  ~ \exists t> s ~ s.t.~ T(t)=T(s)\},\\
		T_1  = T(S_1)= \{ T(s) | s \in S_1  \}, \quad  T_2 = \{ s\in \mathbb{R}^+ | P(T(\tau) > s)> 0\} \slash T_1.
	\end{gather*}
	
	I first show that DIR holds for $\forall s,~ s.t. ~ P(\tau > s) > 0$ if and only if DIR holds for $\forall s \in S_1$.		To see this, denote for $s\in S_2$ 
	\begin{gather*}
		I(s) = \{  t\geq s |  T(t) = T(s) \}.
	\end{gather*}
	If there is no $t > s$ such that $t\in S_1$, which means there is no further payment, DIR holds trivially. Otherwise, because $T$ is increasing left-continuous function, $I(s) = [s,t_0]$, $t_0 \in S_1$ and 
	\begin{align*}
		P( T (\tau) - T(s) > y  | \tau >s )  &= \frac{P( T (\tau) - T(t_0) > y)}{P(\tau > s)} \\
		&\leq \frac{P( T (\tau) - T(t_0) > y)}{P(\tau > t_0)} = P( T (\tau) - T(t_0) > y  | \tau >t_0 ).
	\end{align*}
	Thus, DIR holds in $s$ if it holds in $t_0$. This means if DIR holds for any $s\in S_1$, it also holds for any $s\in S_2$.
	
	Second, I show that T-DIR holds for  $\forall s,~ s.t. ~ P(T(\tau) > s) > 0$ if and only if T-DIR holds for any $s \in T_1$. It is easy to see $T(S_2) \subseteq T(S_1)$ so $T_1 = supp T(\tau)$. Since $0 \in T_1$ and
	\begin{gather*}
		\{ s\in \mathbb{R}^+ | P(T(\tau) > s)> 0\} \subseteq [0,  \max suppT(\tau)],
	\end{gather*}
	For any $s$ such that $ P(T(\tau) > s)> 0$ but $s \not \in T_1$, there exists a $s_0 \in T_1$ such that $s_0<s$ and $(s_0,s) \cap T_1 =\emptyset$, namely, $P(s_0< T(\tau) \leq s) = 0$. Consequently 
	\begin{gather*}
		P(T(\tau) - s> y | T(\tau) > s ) = P( T(\tau) - s >y |  T(\tau) > s_0 ) \leq P( T(\tau) - s_0 >y |  T(\tau) > s_0 )
	\end{gather*}
	Thus, T-DIR holds at $s$ if it holds at $s_0 \in T_1$.

	Third, it is immediate to see the equivalence between DIR on $S_1$ and T-DIR on $T_1$ because $T$ is a one to one mapping between $s\in S_1$ and $T(s) \in T_1$ such that
	\begin{gather*}
		\{T(\tau) >T(s) \}     =  \{\tau >s \} .  
	\end{gather*}
	Combining the three steps, the lemma is proved.
\end{proof}

\bigskip

\noindent
\begin{proof}[Proof of Lemma \ref{lemma4}]
	We prove by contradiction. Suppose there exists $(\tau_1,T_1(t))$ that is DIR compatible and leads to more profit than $(\tau,T(t))$. According to Lemma \ref{lemma3}, $T_1(\tau_1)$ is feasible in the auxiliary   problem. However, this contradicts with the optimality of $T(\tau)$.
\end{proof}

\bigskip

\noindent
\begin{proof}[Proof of Proposition \ref{prop3}]
	I only need to prove for any non-negative random variable $T \in \mathcal{T}$, there exists $T(t)$ such that 
	\begin{gather*}
		P( T \leq t )  =  P( T(\tau_0) \leq t) .
	\end{gather*}
	According to  Lemma \ref{lemma3},  $(\tau_0,T)$ satisfies DIR.
	
	Since $T \in \mathcal{T}$, by definition I can write the cumulative function as:
	\begin{gather*}
		P(T \leq y) = \int_0^y \rho(s) ds + \sum_i 1_{y_i \leq y}  Y_i.
	\end{gather*}
	Define $P(y)$ and $T(t)$ as:
	\begin{align*}
		P(y) &= P(T < y) = \int_0^y \rho(s) ds + \sum_i 1_{y_i < y}  Y_i, \quad \forall y \in \mathbb{R}^+,\\
		T(t) &= \max \{  y \in \mathbb{R}^+  | 1- P(y) \geq e^{-t}  \}  \quad \forall t>0.
	\end{align*}
	Note that $P(0)=0$ and $P(y)$ is left-continuous and increasing, so $T(t)$ is well-defined,  increasing and left-continuous. In addition,
	\begin{gather*}
		T(t) \geq y   \Leftrightarrow   e^{-t} \leq   1- P(y).
	\end{gather*}
	Thus,
	\begin{gather*}
		P(T(\tau)  \geq y ) = P( e^{-\tau} \leq   1- P(y)  ) = 1- P(y) = P(T \geq y).
	\end{gather*}
	The proof that $T$ admits regular a decomposition is not of great interest economically and I relegate it to the Online Appendix.
	
\end{proof}

\bigskip

\bigskip

\noindent
\begin{proof}[Proof of  Proposition \ref{prop_expo}]
	We first use the following lemma to ensures the existence and uniqueness of $f$ described in the proposition.
	\begin{lemma}
		\label{lemma7}
		For any constant $T_1 > 0$, there exists a unique $f$ in $L_1[0,T_1]$ such that
		\begin{gather*}
			w_-(e^{-T_1 + t})  +  \int_t^{T_1} w_- (e^{-s+t}  ) f(s) ds  = 1, \quad \forall t \in [0,T_1].
		\end{gather*}
	\end{lemma}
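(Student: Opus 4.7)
\textbf{Proof proposal for Lemma \ref{lemma7}.} The plan is to convert this Volterra integral equation of the first kind into one of the second kind by differentiation, then invoke the standard Neumann-series existence and uniqueness result.

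First, reverse the time direction via $t \mapsto T_1 - t$, $s \mapsto T_1 - s$, and write $\tilde f(u) := f(T_1 - u)$. The equation becomes
$$w_-(e^{-t}) + \int_0^t w_-(e^{s-t}) \tilde f(s)\, ds = 1, \qquad t \in [0, T_1].$$
At $t = 0$ this reduces to $w_-(1) = 1$, which holds by assumption. This is a Volterra equation of the first kind, but since $w_-(1)=1$ the ``diagonal'' value of the kernel is nonzero, so differentiation turns it into a second-kind equation.

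Next, using that $w_-$ is $C^1$ on $[0,1]$ (guaranteed by the reverse-S assumptions $w_-'(0), w_-'(1) > 1$ and $w_-''' > 0$) and that $w_-(1) = 1$, differentiate both sides in $t$:
$$\tilde f(t) = e^{-t} w_-'(e^{-t}) + \int_0^t e^{s-t}\, w_-'(e^{s-t})\, \tilde f(s)\, ds, \qquad t \in [0, T_1].$$
The kernel $H(t,s) := e^{s-t} w_-'(e^{s-t})$ is continuous and bounded on $\{0 \le s \le t \le T_1\}$ by some constant $M < \infty$, and the forcing term $t \mapsto e^{-t} w_-'(e^{-t})$ is continuous on $[0, T_1]$.

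Now apply the classical Volterra argument: the operator $(S\phi)(t) := \int_0^t H(t,s) \phi(s)\, ds$ satisfies $\|S^n\|_{L_1 \to L_1} \le M^n T_1^n / n!$, so the Neumann series $\sum_{n \ge 0} S^n$ converges in operator norm and produces the unique solution $\tilde f \in L_1[0, T_1]$ (in fact continuous). It remains to check that this $\tilde f$ solves the original first-kind equation: defining $G(t) := w_-(e^{-t}) + \int_0^t w_-(e^{s-t}) \tilde f(s)\, ds$, one gets $G(0) = w_-(1) = 1$ and $G'(t) = 0$ on $[0, T_1]$ by the differentiated equation, so $G \equiv 1$. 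Undoing the change of variable gives the unique $f \in L_1[0, T_1]$ asserted in the lemma.

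The main (minor) obstacle is the regularity of $w_-$ used when differentiating, but the standing smoothness hypothesis in Section 2 suffices; the core of the argument is simply the standard second-kind Volterra existence theorem applied after a routine change of variables.
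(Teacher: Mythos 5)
Your proof is correct and follows essentially the same route as the paper: both reduce the first-kind equation to its differentiated second-kind (``differential'') form and then solve that equation by a fixed-point/iteration argument on $L_1[0,T_1]$. The only difference is technical---you sum the Neumann series via the factorial bound $M^n T_1^n/n!$, whereas the paper gets a one-step contraction with constant $1-w_-(e^{-T_1})<1$ by integrating the kernel exactly; both are standard and both are valid here (and both treatments, yours and the paper's, leave implicit the reverse implication that any $L_1$ solution of the first-kind equation must satisfy the differentiated equation, which is what uniqueness formally requires).
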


	The proof of Lemma \ref{lemma7} is in the Online Appendix. Now  I continue the analysis of proof sketch in the main text. Recall that
	with the help of Lemma \ref{lemma5}, I can rewrite our exponential design problem as
	\begin{gather*}
		V= \max_{f,T_i,t_i}   \sum_i T_i e^{-t_i} + \int_0^{\infty}e^{-t}  f(t) dy  ,\\
		s.t.  \quad   \lambda \sum_{i,t_i \geq s} T_i w_-( e^{- t_i + s} ) + \lambda 	\int_{s}^{\infty} w_-( e^{-t +s }) f(t) dt - \theta \leq 0, \quad \forall s\geq0.
	\end{gather*}
	To solve it, I  use  the duality approach. Consider the following relaxed problem:
	\begin{gather*}
		V(\gamma) = \max_{f,T_i,t_i} \quad  \sum_i T_i e^{-t_i} + \int_0^{\infty}e^{-t}  f(t) dy -  \Big(   \sum_i T_i w_-(e^{-t_i}) + \int_0^{\infty} w_-(e^{-t})  f(t) dy - \frac{\theta}{\lambda} \Big) \\
		- \int_0^{\infty} \gamma(s)  \Big(  \sum_{i,t_i \geq s} T_i w_-( e^{- t_i + s} ) +  	\int_{s}^{\infty} w_-( e^{-t +s }) f(t) dt - \frac{\theta}{\lambda}  \Big) ds.
	\end{gather*}
	
	Suppose the Lagrange density\footnote{Note that besides the Lagrange density $\gamma(s)$ whose value is to be determined later, I also explicitly add one unit mass of Lagrange measure on the DIR condition for time $0$.} $\gamma(s)$ is non-negative. Then any feasible $(f,T_i,t_i)$ in the exponential design problem  has higher objective value  than in the relaxed problem. Thus, the optimal value of the relaxed problem (if exits) is weakly higher than the optimal value of the exponential design problem:  $V(\lambda) \geq V$.
	
	Now using integration by parts, I can transform the  relaxed problem as
	\begin{gather*}
		V(\gamma) = \max_{f,T_i,t_i} \quad \int_0^{\infty} f(t) \Big( e^{-t} - w_-(e^{-t}) - \int_0^t  w_-(e^{-t+s}) \gamma(s) ds  \Big) dt + \\
		\sum_i T_i  \Big(  e^{-t_i} - w_-(e^{-t_i})  - \int_0^{t_i}  w_-(e^{-t+s}) \gamma(s) ds   \Big) +V_0 \\
		\text{where} \quad V_0 = \frac{\theta}{\lambda} ( 1 + \int_0^{\infty} \gamma(s) ds ).
	\end{gather*}
	
	To proceed, I use the Lagrange density $\gamma(t)$ with the following property:
	\begin{lemma}
		\label{lemma6}
		There exists $\gamma(t)$ and $t_0>0$ such that
		\begin{itemize}
			\item  $\gamma(t) > 0$ for $t \in [0,t_0)$, and $\gamma(t) = 0$ for $t \in [t_0,\infty)$.
			\item $e^{-t} - w_-(e^{-t}) - \int_0^t  w_-(e^{-t+s}) \gamma(s) ds = 0$ for $t\in [0,t_0]$;
			\item  $e^{-t} - w_-(e^{-t}) - \int_0^t  w_-(e^{-t+s}) \gamma(s) ds < 0$ for $t\in (t_0,\infty)$.
		\end{itemize}
	\end{lemma}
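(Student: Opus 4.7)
The plan is to build $(\gamma, t_0)$ by first constructing a continuous ``candidate density'' $g_0 : [0, \infty) \to \mathbb{R}$ via a Volterra integral equation and then taking $t_0$ to be its first zero. Let $\phi(t) := e^{-t} - w_-(e^{-t})$, so that the second bullet reads $\int_0^t w_-(e^{-(t-s)}) \gamma(s) ds = \phi(t)$. Observe that $\phi(0) = 0$, $\phi'(0) = w_-'(1) - 1 > 0$ by the reverse-S-shape assumption, while $\phi(t) < 0$ for all $t$ sufficiently large, because $e^{-t}$ eventually falls into the strictly concave portion of $w_-$ where $w_-(p) > p$.

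I first establish a unique continuous $g_0$ on all of $[0, \infty)$ satisfying the first-kind equation $\int_0^t w_-(e^{-(t-s)}) g_0(s) ds = \phi(t)$. Since $w_-(1) = 1$, differentiating in $t$ reduces this to the second-kind Volterra equation
\begin{equation*}
g_0(t) = \phi'(t) + \int_0^t e^{-(t-s)} w_-'(e^{-(t-s)}) g_0(s) ds,
\end{equation*}
whose kernel is continuous and bounded on compacts; Picard iteration (or a contraction-mapping argument in a weighted sup-norm) yields a unique continuous solution on $[0, \infty)$, and integrating back recovers the first-kind equation. Next set $t_0 := \inf\{t > 0 : g_0(t) \leq 0\}$ and $\gamma(t) := g_0(t) \mathbf{1}_{[0, t_0)}(t)$. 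Since $g_0(0) = \phi'(0) > 0$ and $g_0$ is continuous, $t_0 > 0$; and $t_0 < \infty$ follows by contradiction: if $g_0 \geq 0$ throughout, then the first-kind equation forces $\phi(t) \geq 0$ for all $t$, contradicting $\phi(t) < 0$ for large $t$. Thus $\gamma > 0$ on $[0, t_0)$ and vanishes on $[t_0, \infty)$, and the integral identity in the second bullet holds on $[0, t_0]$ by construction.

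The third bullet is the main obstacle. Using the global first-kind identity, for $t > t_0$ one has
\begin{equation*}
h(t) := \phi(t) - \int_0^{t_0} w_-(e^{-(t-s)}) \gamma(s) ds = \int_{t_0}^t w_-(e^{-(t-s)}) g_0(s) ds.
\end{equation*}
Evaluating the second-kind equation at $t_0$ gives $\phi'(t_0) = -\int_0^{t_0} e^{-(t_0-s)} w_-'(e^{-(t_0-s)}) g_0(s) ds \leq 0$, so $e^{-t_0}$ lies in the middle region where $w_-' \leq 1$; an analogous computation of $g_0'(t_0)$ shows $g_0$ is strictly decreasing through zero at $t_0$, which immediately yields $h(t) < 0$ in a right-neighbourhood of $t_0$. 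The delicate point is ruling out that $g_0$ later swings back to positive values strong enough to push $h$ back up to zero. I expect this to follow from a comparison argument exploiting the curvature structure of $w_-$: once $e^{-t}$ passes the larger root of $w_-'(p) = 1$, the sign of $\phi''$ combined with the monotone-in-$s$ kernel $w_-(e^{-(t-s)})$ should trap the Volterra solution in the non-positive regime for all subsequent $t$. Making this global sign-control of $g_0$ past $t_0$ rigorous---plausibly by an induction over sub-intervals separated by the critical points of $w_-'$, or by bounding $g_0$ above by an explicit negative sub-solution of the second-kind equation---is what I anticipate to be the technical core of the lemma.
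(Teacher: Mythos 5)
Your construction of $g_0$ and of $t_0$ is essentially the paper's argument: differentiate the first-kind equation to a second-kind Volterra equation, solve by contraction/Picard, take $t_0$ to be the first zero, and get $t_0>0$ from $g_0(0)=w_-'(1)-1>0$ and $t_0<\infty$ from the sign of $\phi$. (The paper works only on $[0,t_1]$ with $t_1=-\log p_1$, $p_1$ the interior fixed point of $w_-$, which is all that is needed since $\phi(t_1)=0$ forces a zero of $g_0$ before $t_1$.) Up to that point you are fine.

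The genuine gap is the third bullet, which you explicitly leave open, and the route you sketch for it is not the one that works. You propose to control the sign of $g_0$ on $(t_0,\infty)$ so as to make $\int_{t_0}^t w_-(e^{-(t-s)})g_0(s)\,ds$ negative; but nothing in the lemma requires knowing $g_0$ beyond $t_0$, and establishing $g_0\le 0$ globally past $t_0$ is neither obviously true nor what the paper does. The paper instead analyzes $h(t)=e^{-t}-w_-(e^{-t})-\int_0^{t_0}w_-(e^{-t+s})\gamma(s)\,ds$ directly. For $t\ge t_1$ the conclusion is immediate: $w_-(e^{-t})>e^{-t}$ there and the subtracted integral is positive, so $h(t)<0$. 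On the remaining interval $(t_0,t_1)$ the paper sets $h_1=e^t h'$, $h_2=e^t h_1'$ and computes $h_2'(t)=w_-'''(e^{-t})e^{-t}+\int_0^{t_0}w_-'''(e^{-t+s})e^{-t+3s}\gamma(s)\,ds>0$, using the assumption $w_-'''>0$ and $\gamma\ge 0$. Hence $h_1'$ changes sign at most once (from negative to positive), so $h$ is concave-then-convex on $(t_0,t_1)$; combined with the boundary data $h(t_0)=0$, $h'(t_0)\le 0$ (your computation of $\phi'(t_0)$ gives exactly this) and $h(t_1)<0$, this forces $h\le 0$ on $(t_0,t_1)$. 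This curvature argument is the technical core you identified as missing, and it is where the hypothesis $w_-'''>0$ enters; without it (or some substitute) the lemma does not go through.
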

	The proof of Lemma \ref{lemma6} is in the Online Appendix. 
	Crucially, the existence of such Lagrange density relies on the curvature of the probability weighting function: this is where I need the positive third derivatives. 
	
	With the Lagrange density $\gamma(t)$ constructed in Lemma \ref{lemma6}, the optimal value of the relaxed problem $V(\gamma) = V_0$. Thus, I know $V\leq V(\gamma) = V_0$. On the other hand, consider the cumulative payment function described in Proposition \ref{prop_expo}: 
	\begin{gather*}
		T^*(t) =\frac{\theta}{\lambda}  \Big(     1_{t>t_0}   + \int_0^{\min \{ t,t_0 \}}  f(s) ds  \Big).
	\end{gather*}
	$T^*(t)$ has value $V_0$ in the relaxed problem because it stays constant in $(t_0,\infty]$ (where $\gamma(s)<0$). It also has value $V_0$ in the original exponential design problem because its DIR is binding for $t\in[0,t_0]$,  which makes sure the complementary slackness holds:
	\begin{gather*}
		\gamma(s) \Big( 1_{t_0\geq s} w_-( e^{- T + s} ) \frac{\theta}{\lambda}  +  	\int_{s}^{t_0} w_-( e^{-t +s }) \frac{\theta}{\lambda}  w_-'(e^{-t_0+t})e^{-t_0+t}  dt - \frac{\theta}{\lambda}  \Big) = 0 \quad \forall s \geq 0.
	\end{gather*}
	Since $T^*(t)$ is DIR compatible by construction, it is indeed the optimal cumulative payment function.

	Finally, I prove the uniqueness. Suppose 
	\begin{gather*}
		T^1(t) = \int_0^t f(s) ds + \sum_{i=0}^{\infty} 1_{t_i<t} T_i
	\end{gather*} is also an optimal cumulative payment function. Since I already know $V = V(\lambda) = V_0$, $T^1(t)$ must have value $V_0$ in the relaxed problem. This means  all the jump points $t_i \leq t_0$ and $f(s)$ is almost everywhere 0 in $[t_0,\infty]$, and consequently 
	\begin{gather*}
		T^1(t) = T^1(t_0+) \quad \forall t > t_0.
	\end{gather*}
	
	Because $T^1(t)$ has the same value in the exponential design problem and  the relaxed problem, Complementary Slackness condition must hold almost everywhere.
	This joint with $\gamma(s)>0$ for $s\in[0,t_0]$ implies DIR must be binding almost everywhere for $s \in [0,t_0]$. In addition, note that the expression of DIR:
	\begin{gather*}
		h(s) = \sum_{i,t_i \geq s} T_i w_-( e^{- t_i + s} ) +  	\int_{s}^{\infty} w_-( e^{-t +s }) f(t) dt - \frac{\theta}{\lambda}  
	\end{gather*}
	is left continuous in $s$. If DIR is not binding at some time $s_0\in(0,t_0]$, then DIR is not binding for all $s \in (s_0-\delta,s_0]$. This is a contradiction.  
	
	Now that DIR is binding for any $t \in(0,t_0]$, it must also be binding at time $0$. This is because otherwise the seller can charge an additional lump-sum payment at time 0, which contradicts with the optimality of $T^1$.  In conclusion, $T^1(t) = T^*(t)$.
	
\end{proof}

\bigskip

\noindent
\begin{proof}[Proof of Theorem \ref{thm3}]
	From Proposition \ref{prop_expo} I know that $T^*(t)$ is the optimal cumulative payment function for the exponential stopping time $\tau_0$. Next, I claim $T^*(\tau_0)$ solves the  the auxiliary   problem:
	\begin{gather*}
		T^*(\tau_0) \in  \argmax_{T \in \mathcal{T}}  \int_0^{\infty} P(T>y) dy, \\
		s.t. \quad \lambda \int_0^{\infty} w_-(  P(T-s > y) | T> s)  dy \leq \theta, \quad \forall s, ~  s.t. ~ P(T > s) > 0,
	\end{gather*}
	
	Suppose for the purpose of contradiction that there exists $T_1$ that yields a higher expected profit than $T^*(\tau_0)$. According to Proposition \ref{prop3}, there exists $T_1(t)$ such that $(\tau_0,T_1(t))$ is DIR compatible in the original design problem and yields the same expected  profit as $T_1$, which is higher  than $(\tau_0,T^*(t))$. This contradicts  the optimality of $T^*(t)$.
	
	Since $T^*(\tau_0)$ solves the auxiliary   problem, according to Lemma \ref{lemma4}, $(\tau_0,T^*(t))$ is the optimal dynamic pricing process in the original problem. 
	
	For the uniqueness part, suppose $(\tau,T)$ is a optimal dynamic pricing process. According to Lemma \ref{lemma3}, $T(\tau)$ is feasible in the auxiliary   problem. Then according to Proposition \ref{prop3}, there exists a cumulative payment function $T_2(t)$ such that $(\tau_0,T_2)$ is DIR compatible and
	\begin{gather*}
		P(T(\tau) \leq y) =P(T_2(\tau_0) \leq y), \quad \forall y \in \mathbb{R}^+.
	\end{gather*}
	Since $T^*(\tau_0)$ solves the  the auxiliary   problem, so does $T(\tau)$  and
	$T_2(\tau_0)$ (they have the same expected profit). Consequently, $(\tau_0,T_2)$ is an optimal dynamic pricing process. The uniqueness part of Proposition \ref{prop_expo} implies $T_2 = T^*$ so 
	\begin{gather*}
		P(T(\tau) \leq y) =P(T_2(\tau_0) \leq y)  =P(T^*(\tau_0) \leq y), \quad \forall y \in \mathbb{R}^+.
	\end{gather*}
	
\end{proof}

\bigskip

\noindent
\begin{proof}[Proof of Proposition \ref{robust_RankDependent}]
	To prove the proposition, first note that in both design problems (for Proposition \ref{prop1} and Theorem \ref{thm3}) it is optimal to set $X=1$. This is because the seller can always increase the payment, as he increases the probability of delivering, to keep the ex-post payoff unchanged. Then, the payoff of a rank-dependent utility buyer is
	\begin{align*}
		RDEU(1,T,\theta) &= \int_0^{\infty} w(P( \theta  - T > y  )) dy+ \int_{-\infty}^0 (w(P( \theta  -T  >y  )) -1 )dy\\
		&=  \int_0^{\theta} w(P( \theta  - T > y  )) dy+ \int_{-\infty}^0 (w(P( \theta  -T  >y  )) -1 )dy \\
		&=\int_0^{\theta} w( 1 -P(   T \geq \theta - y  )) dy+ \int_{-\infty}^0 (w( 1 -P(   T \geq \theta - y  ) ) -1 )dy \\
		&= \int_0^{\theta}  w( 1 -P(   T \geq y  )) dy + \int_{\theta}^{\infty} (w( 1 -P(   T \geq y  ) ) -1 )dy \\
		&= \theta - \int_0^{\theta} w_-( P(T \geq y) ) dy -  \int_{\theta}^{\infty} w_-( P(T \geq y) ) dy \\
		&= \theta - \int_0^{\infty} w_-( P(T > y) ) dy = CPT(1,T,\theta).
	\end{align*}
	The last equation holds because there is at most countable many $y$ such that $P(T\geq y) \neq P(T>y)$. Now because the payoff of the agent for any random payment $T$ is the same, all results go through.
\end{proof}
\bigskip

\noindent
\begin{proof}[Proof of Proposition \ref{prop5}]
	The proof is similar to the proof of  Proposition \ref{robust_RankDependent} and I relegate it to Online Appendix.
\end{proof}

\newpage 

\section{Online Appendix (For online publication only)}

\subsection{Remaining Proof of Proposition \ref{prop3}}
\begin{proof}
	In the appendix, I construct $T(t)$ that is used for replication. The remaining task is to prove $T$ admits a regular decomposition. Denote
	\begin{align*}
		\mathbb{I} &= \{  (a,b) | ~ P(y) \text{  is absolute continuous and strictly increasing in} (a,b)  \},\\
		I &= \cup_{(a,b)\in \mathbb{I}} (a,b);\\ 
		\mathbb{J} &= \{   (c,d) |~  P(y) \text{ is constant over } (c,d)    \}, \\
		J &= \cup_{(c,d) \in \mathbb{J}} (c,d).
	\end{align*}	
	$I$ and $J$ can be expressed as a countable union of disjoint open intervals. and I have the following partition of $\mathbb{R}^+$:
	\begin{gather*}
		\mathbb{R}^+  = I \cup J \cup A.
	\end{gather*}
	Because $P$ admits regular decomposition, The point in $A$ is either the end point of those disjoint countable intervals in $I$ and $J$, or one of the countable jump points of $P$. Therefore, $A$ is a countable set.
	
	Now I want to prove $T(t)$ admits regular decomposition by proving $T(t)$ is absolute continuous over $R^+$ except a countable set. To do so, I decompose the $R^+$ as
	\begin{gather}
		T^{-1}(I) \cup T^{-1}(J) \cup T^{-1}(A).
	\end{gather}
	First, for any $t\in R^+$ such that $T(t) \in I$, the following equation holds in its neighbor,
	\begin{gather*}
		1-P(T(t)) = e^{-t} \implies T(t) = P^{-1} ( 1- e^{-t} ).
	\end{gather*}
	Since $P$ is absolute continuous and strictly increasing in $I$, $P^{-1}$ is absolute continuous in the neighborhood and so is $T(t)$. 
	
	Second, denote $J= \cup_i (c_i,d_i)_i$ where $(c_i,d_i)$ is disjoint with other. Now I claim that, there is no $t$ such that $T(t) \in (c_i,d_i)$. Thus, the set $T^{-1}(J) = \{ t \in R^+| T(t) \in J \}$ is empty. To prove the claim, recall $P$ is constant over $(c_i,d_i)$ so $	P(c_i)  = P(d_i) = p_i.$ 
	Now suppose there exists $i$ and $t$ such that $T(t) \in (c_i,d_i)$. By definition of $T(t)$, for any $\varepsilon>0$, 
	\begin{align*}
		1-P(T(t)) &\geq e^{-t}, \\
		1-P(T(t)+ \varepsilon) & < e^{-t}.
	\end{align*}
	From the monotonicity of $P$, for sufficiently small $\varepsilon$,
	\begin{gather*}
		1-P(c_i) \geq 1-P(T(t)) \geq e^{-t} > 	1-P(T(t)+ \varepsilon) \geq 1-P(d_i) = 1-P(c_i).
	\end{gather*}
	This is a contradiction.

	Third, I consider $T^{-1}(A)$. Because $T(t)$ is increasing,  for any $y \in T(R^+)$, $T^{-1}(y) = \{ t|   T(t) = y  \}$ is either one point or an interval.  If $T^{-1}(y)$ is an interval then $T(t)$ is absolute continuous in  it.   Because $A$ is countable, the remaining set is at most countable.

	In conclusion, $T(t)$ is absolute continuous except a countable set. Therefore, $T$ has regular decomposition.
\end{proof}

\subsection{Proof of Lemma \ref{lemma5}, \ref{lemma7} and  \ref{lemma6}}
\noindent
\begin{proof}[Proof of Lemma \ref{lemma5}]
	First note that by defining $T_s(t) = T(t+s)- T(s)$ and $t^s_i = t_i -s$, I can normalize the equation in Lemma \ref{lemma5}  as
	\begin{gather*}
		\int_{0}^{\infty} w_-( e^{-T_s^{-1}( y )}    ) dy  =  \sum_{i,t_i^s \geq 0} T_i w_-( e^{- t_i^s} ) + 	\int_{0}^{\infty} w_-( e^{-t }) T'_s(t) dt , \quad \forall s\geq 0.
	\end{gather*}
	Therefore, it is without loss of generality to prove the lemma only for $s= 0$.
	
	To be more reader friendly, let me first review the definition of $T^{-1}$.
	\begin{gather*}
		T^{-1}(x) = \sup \{ t \in \mathbb{R}^+ |T(t) \leq x   \}.
	\end{gather*}
	Since $T(t)$ jumps from $T(t_i)$ to $T(t_i+) = T(t_i) + T_i$ at time $t_i$,  $T^{-1} (x) = t_i$ for $x \in [T(t_i),T(t_i) +T_i)$. Denote $\mathbb{I} = \mathbb{R}^+ /\ (\cup_i [T(t_i),T(t_i)+T_i)  )$. $\mathbb{I}$ is a countable union of  interval in which $T$ is absolutely continuous.
	\begin{align*}
		\int_{0}^{\infty} w_-( e^{-T^{-1}( y ) }    ) dy  
		&=  \sum_{i} \int_{T(t_i),T(t_i)+T_i} w_-( e^{-T^{-1}( y ) }    ) dy +  \int_{\mathbb{I}} w_-( e^{-T^{-1}( y ) }    ) dy \\
		&=  \sum_i T_i w_i ( e^{-t_i}) +  \int_{\mathbb{I}} w_-( e^{-T^{-1}( y ) }    ) dy.
	\end{align*}
	Within $\mathbb{I}$, I need to worry about intervals in which $T(t)$ stays constant. Define
	\begin{align*}
		\mathcal{M} &=  \{  [s_1,s_2]   ~ | ~ T(s_1) = T(s_2)     \},\\
		\mathbb{J}  &=  \cup_{J \in \mathcal{M}} ~ J.
	\end{align*}
	The interior of $\mathbb{J}$ is clearly a countable union of disjoint interval: $\mathbb{J}^{\circ} = \cup_i (a_i,b_i) $. We will focus on the interior set here and in the following, because I do not need to worry about (countable) boundary points when doing integration with respect to Lebesgue measure.  
	
	By construction $T(a_i+) = T(b_i) = y_i$, and  $(\mathbb{R}^+/ \mathbb{J})^{\circ} = (0,a_1) \cup_i (b_i,a_{i+1}) \cup (\sup b_i, \infty)$ is a countable union of disjoint interval, which  I simply denote as $\cup_i (b_i,a_{i+1})$. These two facts imply that up to countable points,
	\begin{gather*}
		T(\mathbb{R}^+) = \cup_i (T(b_i),T(a_{i+1}) ).
	\end{gather*}

	Next, I take a close look into the interval $(T(b_i),T(a_{i+1}) )$ and discuss the possible confusion caused by jump points. The set $((b_i,a_{i+1}) / \{ t_k \}_k)^{\circ}$ is also a countable union of intervals (recall that $\{t_k\}_k$ is the countable set of jump points of $T$). Denote it as $\cup_j (c^i_j,d^i_j)$, I have
	\begin{gather*}
		(	(T(b_i), T(a_{i+1})) \cap \mathbb{I} )^{\circ} = \cup_j ( T(c^i_j) , T(d^i_{j})  ).
	\end{gather*}
	
	By construction $T$ is absolute continuous in $(c^i_j,d^i_j)$, its derivative $f$ is almost everywhere positive. Thus, $T$ is strictly increasing and $T^{-1}$ coincides with its inverse. $T^{-1}$ is also absolute continuous and has positive derivative almost everywhere. In addition, $T^{-1}(T(c^i_j)) = c^i_j$ and $T^{-1}(T(d^i_j)) = d^i_j$. Therefore 
	\begin{align*}
		\int_{(T(b_i),T(a_{i+1})) \cap \mathbb{I}} w_-( e^{-T^{-1}( y ) }    ) dy 
		&= \sum_j \int_{T(c_j^i)}^{T(d_j^i)} w_-( e^{-T^{-1}( y ) }    ) dy\\
		&= \sum_j \int_{c^i_j}^{d^i_j} w_-( e^{-t  }) T'(t) dt
		=    \sum_j \int_{c^i_j}^{d^i_j} w_-( e^{-t  }) f(t) dt \\
		&= \int_{b_i}^{a_{i+1}} w_-( e^{-t  }) f(t) dt.
	\end{align*}
	Now since $f(t) = T'(t)$ is almost everywhere $0$ in $\mathbb{J}$, I can conclude:
	\begin{align*}
		\int_{0}^{\infty} w_-( e^{-t }) f(t) dt 
		&= \int_{\mathbb{J}}  w_-( e^{-t  }) f(t) dt +  \sum_i  \int_{b_i}^{a_{i+1}} w_-( e^{-t  }) T'(t) dt\\
		&=  \sum_i  \int_{b_i}^{a_{i+1}} w_-( e^{-t }) T'(t) dt\\
		&= \sum_i \int_{(T(b_i),T(a_{i+1})) \cap \mathbb{I}} w_-( e^{-T^{-1}( y ) }    ) dy \\
		&= \int_{\mathbb{I}}  w_-( e^{-T^{-1}( y ) }    ) dy .
	\end{align*}
	In summary, I have proved:
	\begin{align*}
		\int_{0}^{\infty}  w_-( e^{-T^{-1}( y ) }    ) dy  &=  \sum_{i} T_i w_-( e^{- t_i} ) + \int_{\mathbb{I}}  w_-( e^{-T^{-1}( y ) }    ) dy \\
		&= \sum_{i} T_i w_-( e^{- t_i} ) + 	\int_{0}^{\infty} w_-( e^{-t  }) f(t) dt .
	\end{align*}
\end{proof}

\bigskip

\noindent
\begin{proof}[Proof of Lemma \ref{lemma7}]
	It is sufficient to prove the differential form, namely, to prove there exists a unique $f$ in $L_1[0,T_1]$ such that
	\begin{gather*}
		f(t) = w'_-(e^{-T_1+t}) e^{-T_1+t} + \int_t^{T_1} w_-'(e^{-s+t}) e^{-s+t} f(s) ds, \quad \forall t \in [0,T_1].
	\end{gather*}
	Consider the following self-mapping on $L_1[0,T_1]$, $\Gamma$:
	\begin{gather*}
		\Gamma(f) (t) =  w'_-(e^{-T_1+t}) e^{-T_1+t} + \int_t^{T_1} w_-'(e^{-s+t}) e^{-s+t} f(s) ds, \quad \forall t \in [0,T_1].
	\end{gather*}
	First I show that this is indeed a self-mapping. Integration by parts I have
	\begin{align*}
		\int_0^{T_1} |	\Gamma(f) (t) |  dt &= 	\int_0^{T_1} \Big| w'_-(e^{-T_1+t}) e^{-T_1+t} + \int_t^{T_1} w_-'(e^{-s+t}) e^{-s+t} f(s) ds \Big| dt \\
		&\leq 1 - w_-(e^{-T_1}) + \int_0^{T_1}\int_0^{s} w_-'(e^{-s+t}) e^{-s+t} | f(s) | dt ds   \\
		&=  1 - w_-(e^{-T_1}) +  \int_0^{T_1} ( 1- w_-(e^{-s}))  |f(s)| ds < \infty.
	\end{align*}
	Next I prove $\Gamma$ is a contracting mapping, using the same integration by parts one can show
	\begin{align*}
		\|\Gamma(f_1) - \Gamma(f_2) \|_1 &=	\int_0^{T_1} |	\Gamma(f_1) (t) - \Gamma(f_2) (t) |  dt\\
		&\leq \int_0^{T_1} ( 1- w_-(e^{-s}))  |f_1(s) - f_2(s)| ds \\
		&\leq (1- w_-(e^{-T_1})) \| f_1 - f_2\| _1.
	\end{align*}
	Thus I can apply contracting mapping theorem to prove the existence and uniqueness of $f$, which is the fixed point of $\Gamma$.
\end{proof}

\bigskip

\noindent
\begin{proof}[Proof of Lemma \ref{lemma6} ]
	Recall the model assumes $w_-'(0),w_-'(1)>1$ and $w_-'''(p)>0$. This implies $w_-$ is strictly concave in $[0,p_-]$, strictly convex in $[p_-,1]$. Additionally, it is easy to see there exists a unique $p_1\in (0,1)$ such that $w_-(p_1) = p_1$, $w_-(p) \leq p$ in $[0,p_1]$, and $w_-(p) \geq p$ in $[p_1,1]$. 
	
	I first prove that there exists a function $g_0$ defined on $[0,t_1] = [0, - \log p_1]$ such that
	\begin{gather*}
		e^{-t} - w_-(e^{-t}) - \int_0^t  w_-(e^{-t+s}) g_0(s) ds = 0, \quad \forall t\in [0,t_1].
	\end{gather*}
	It is sufficient to prove with its differential form, namely, to prove there exists $g_0(t)$ such that
	\begin{gather*}
		g_0(t) = 	-e^{-t} + w'_-(e^{-t}) e^{-t} - \int_0^t w_-'(e^{-t+s})  e^{-t+s} g_0(s) ds, \quad \forall t\in [0,t_1].
	\end{gather*}
	Consider the functional space $L_1[0,t_1]$ (which is a Banach space).  Define the following self-mapping $\Gamma$ on  $L_1[0,t_1]$ :
	\begin{gather*}
		\Gamma(g) (t) = -e^{-t} + w'_-(e^{-t}) e^{-t} - \int_0^t w_-'(e^{-t+s})  e^{-t+s} g(s) ds.
	\end{gather*}
	To ensure the my statement is well-defined, I verify that $\Gamma(g) \in L_1[0,t_1]$. 
	\begin{gather*}
		\int_0^{t_1} |\Gamma(g) (t)| dt = \int_0^{t_1} \Big| -e^{-t} + w'_-(e^{-t}) e^{-t} - \int_0^t w_-'(e^{-t+s})  e^{-t+s} g(s) ds \Big| dt\\
		\leq 2 -e^{-t_1} - w_-(e^{-t_1}) + \int_0^{t_1} |g(s) |\int_s^{t_1} w_-'(e^{-t+s}) e^{-t+s} dt  ds\\
		=    2 -e^{-t_1} - w_-(e^{-t_1}) + \int_0^{t_1} |g(s)| ( 1- w_-(e^{-t_1 + s})  )  ds < \infty.
	\end{gather*} 
	Next, I prove that $\Gamma$ is a contracting mapping. For any $g_1, g_2 \in L_1[0,t_1]$
	\begin{align*}
		\| \Gamma(g_1) - \Gamma(g_2)\|_1 &=  \int_0^{t_1}  | \Gamma(g_1)(t) - \Gamma(g_2) (t) |  dt\\  
		&\leq \int_0^{t_1}  \int_0^{t} w_-'(e^{-t+s}   )e^{-t+s}   |g_1(s)  -g_2(s)  | ds  dt\\
		&=  \int_0^{t_1} |g_1(s)  -g_2(s)  |  \int_s^{t_1} w_-'(e^{-t+s}   )e^{-t+s}      dt ds \\
		&= \int_{0}^{t_1} ( 1- w_-(e^{-t_1 + s})  )     | g_1(s)  -g_2(s)  | ds\\
		&\leq( 1- w_-(e^{-t_1})  )    \int_{0}^{t_1}    | g_1(s)  -g_2(s)  | ds =  ( 1- w_-(e^{-t_1})  )    \| g_1- g_2 \|_1.
	\end{align*}
	Therefore, according to contracting mapping theorem (or Banach fixed point theorem), there exists a unique fixed point $g_0$ of $\Gamma$. This completes my first step.
	
	Second, define $t_0$ as:
	\begin{gather*}
		t_0 = \min \{  t \in [0,t_1]  |   g_0(t) \leq 0 \}.
	\end{gather*} 
	$t_0$ is well-defined, because $g_0$ is clearly continuous in $[0,t_1]$ and if $g_0(t) > 0$ for all $t \in [0,t_1]$, then the following contradiction arises:
	\begin{gather*}
		0=	e^{-t_1} - w_-(e^{-t_1}) - \int_0^{t_1}  w_-(e^{-t_1+s}) g_0(s) ds = - \int_0^{t_1}  w_-(e^{-t_1+s}) g_0(s) ds <0.
	\end{gather*}
	The first equality comes from the definition of $g_0$, and the second equality comes from the definition of $t_1$. Additionally, $t_0>0$ because it is easy to see $g_0(0) = w_-'(1)-1 > 0$.
	
	Now I construct the Lagrange density $\gamma(t)$ as:
	\begin{gather*}
		\gamma(t) = \begin{cases}
			g_0(t)  \quad &\forall t \in [0,t_0],\\
			0 \quad  &\forall t \in [t_0,\infty).  
		\end{cases}
	\end{gather*}
	What remains to be proved is
	\begin{align*}
		h(t) =& e^{-t} - w_-(e^{-t}) - \int_0^t  w_-(e^{-t+s}) \gamma(s) ds \\
		=&  e^{-t} - w_-(e^{-t}) - \int_0^{t_0}  w_-(e^{-t+s}) \gamma(s) ds	< 0, \quad \forall t \in (t_0,\infty).
	\end{align*}
	First note that since $w_-(p) > p$ for any $p \in (0,p_1)$,
	\begin{gather*}
		h(t) < - \int_0^{t_0}  w_-(e^{-t+s}) \gamma(s) ds	 < 0, \quad \forall t \geq t_1.
	\end{gather*}
	Thus, I only need to worry about $t\in (t_0,t_1)$. By construction I know $h(t_0) = 0$ and $h'(t_0) \leq 0$. Define
	\begin{align*}
		h_1(t) &= h'(t) e^t= -1 + w_-'(e^{-t}) + \int_0^{t_0} w_-'(e^{-t+s}) \gamma(s) e^s
		ds,\\
		h_1'(t) &= -w_-''(e^{-t}) e^{-t}  -  \int_0^{t_0} w_-''(e^{-t+s}) e^{-t+2s} \gamma(s) ds,\\
		h_2(t) &= e^t h_1'(t) =  -w_-''(e^{-t})  -  \int_0^{t_0} w_-''(e^{-t+s}) e^{2s} \gamma(s) ds,\\
		h_2'(t) &=  w_-'''(e^{-t})e^{-t}  +  \int_0^{t_0} w_-'''(e^{-t+s}) e^{-t+3s} \gamma(s) ds.
	\end{align*}
	
	Because $w_-'''(p) > 0$, $h_2'(t)>0$ and $h_2$ is increasing in $t$. Thus, $h_2$ and $h_1'$ can be  always positive, always negative, or first negative then positive over $(t_0,t_1)$. This implies $h_1$ and $h'$ can be always increasing, always decreasing, or first decreasing then increasing over $(t_0,t_1)$. This further implies $h$ can be always convex, always concave, or first concave then convex over $(t_0,t_1)$.  
	This final statement, jointly with the fact that $h(t_0)=0$, $h'(t_0)\leq0$  and $h(t_1)<0$, implies that $h(t) \leq 0$ over $(t_0,t_1)$.
	
\end{proof}

\subsection{Proof of Proposition \ref{prop5}}
\noindent
\begin{proof}
	First notice in both design problems (for Proposition \ref{prop1} and Theorem \ref{thm3}) it is optimal to set $X=1$. This is because the seller can always increases the payment to keep the ex-post payoff unchanged. Then, the payoff of a joint-accounting buyer is
	\begin{align*}
		\overline{CPT}(\theta,1,T) 
		&=  \int_{0}^{\infty} w_+(P(\theta  -T > y)) dy - \int_{-\infty}^0 w_-( P( \theta  -T < y )  ) dy \\
		&=  \int_{0}^{\theta} w(P(\theta  -y > T)) dy - \int_{0}^{\infty} w( P( T- \theta > y )  ) dy \\
		&= \int_{0}^{\theta} w(P(T<y)) dy - \int_{0}^{\infty} w( P( T > \theta+ y )  ) dy \\
		&=  \int_{0}^{\theta} (1-w(P(T\geq y))  dy - \int_{\theta}^{\infty} w( P( T >  y )  ) dy \\
		&= \theta -  \int_{0}^{\theta} w(P(T\geq y))  dy -  \int_{\theta}^{\infty} w( P( T >  y )  ) dy \\
		&= \theta - \int_{0}^{\infty} w( P( T >  y )  ) dy = CPT(\theta,1,T).
	\end{align*}
	The last equation holds because there is at most countable many $y$ such that $P(T\geq y) \neq P(T>y)$. Now because the buyer has the same payoff for any random payment $T$, all results go through.
\end{proof}

\subsection{Proof of Theorem \ref{thm2}}

\noindent
\begin{proof}
	I first prove the optimality of the proposed dynamic pricing. Consider the following program mentioned in the main text:
	\begin{gather*}
		V = \max_{\mu_k\in [0,1],F_k\in \Delta[0,1]} \quad   \sum_{k=0}^{\infty}  \frac{\mu^* \theta}{\lambda}  \delta^{k} \prod_{i=0}^{k-1}(1-\mu_i) \int  w_+(  x  )  d F_k \\
		s.t.  \quad 
		\mu_k =  \int x  d F_k, \quad \forall k.
	\end{gather*}
	Since the set of feasible $\mu_k,F_k$ is compact for any $k$, the set of feasible sequence $(\mu_k,F_k)$ is also compact due to Tychonoff's Theorem. The objective function is clearly continuous with respect to the multiple topology, thus, I know the optimal solution $(F_{k}^*)$ and optimal value $V$ exists. I then prove $V$ equals $V^*$ described in the Theorem \ref{thm2}. Note that the optimization problem of $V$ is recursive:
	\begin{gather*}
		\sum_{k=0}^{\infty}  \frac{\mu^* \theta}{\lambda}  \delta^{k} \prod_{i=0}^{k-1}(1-\mu_i) \int  w_+(  x  )  d F_k \\
		= \frac{\mu^* \theta}{\lambda} \int  w_+(  x  )  d F_0 + \delta (1-\mu_0) \Large(   	 \sum_{k=1}^{\infty}  \frac{\mu^* \theta}{\lambda}  \delta^{k} \prod_{i=1}^{k-1}(1-\mu_i) \int  w_+(  x  )  d F_k	\Large)
	\end{gather*}
	On the one hand,
	\begin{gather*}
		V = \frac{\mu^* \theta}{\lambda} \int  w_+(  x  )  d F^*_1 + \delta (1-\mu_1) V = \max_{F} \frac{\mu^* \theta}{\lambda} \int  w_+(  x  )  d F + \delta (1-\mu_1) V, \\
		\implies V = \max_F \frac{\mu^* \theta}{\lambda}   \frac{\int  w_+(  x  )  d F  }{1- \delta( 1 - \int x dF  )  }
		\geq	 \max_{x\in [0,1]} \frac{\mu^* \theta}{\lambda} \frac{w_+(x)}{1-(1-x)\delta} = V^*.
	\end{gather*}
	On the other hand, since
	\begin{gather*}
		w_+(x) - \frac{\lambda V^*}{\mu^* \theta} (1- \delta(1-x)) \leq 0, \quad \forall x \in [0,1],
	\end{gather*}
	I know
	\begin{gather*}
		\int	w_+(x) dF - \frac{\lambda V^*}{\mu^* \theta} (1- \delta(1-\int x dF)) \leq 0, \quad \forall F \in \Delta [0,1],
	\end{gather*}
	which implies $V \leq V^*$. Thus $V=V^*$. According to the analysis in the main text, I know $V=V^*$ is an upper bound of the expected revenue the seller can gain.

	The proposed dynamic pricing process $(X_t^*,T^*_t)$ will bring profit $V^*$ to the seller if the buyer keeps purchasing until she finally gets the good. Thus, the remaining task is to verify the buyer's strategy. Quitting immediate always leads to a payoff of $0$. Now consider strategy $\sigma_n$ where the buyer purchases at time $0,1,...n$ and quit thereafter (n can be $\infty$). $\sigma_0$ means purchasing in the current period and quit thereafter. By the construction of the proposed mechanism $\sigma_0$ leads to a payoff of $0$. I will show that for any $n>0$, $\sigma_n$ leads to negative payoff. 
	
	Under strategy $\sigma_n$, $\delta^{\tau} \theta$ has a discrete support: $\{ \delta^{k} \theta| k=0,1,...,n \}$ and
	\begin{gather*}
		P( \delta^{\tau} \theta > \delta^{k} \theta  )  = P( \text{Get the good before time k} ) = 1-(1-x^*)^k.
	\end{gather*}
	The gain from the good account under this strategy is:
	\begin{gather*}
		\int_0^{\infty} w_+ (  P( \delta^{\tau} \theta > y ) ) dy
		= \sum_{k=0}^{n} \Big( w_+(1-(1-x^*)^{k+1}) - w_+(1-(1-x^*)^{k}) \Big) \delta^{k} \theta.
	\end{gather*}
	Suppose by contradiction that there exists $n>0$ such that $\sigma_n$ brings non-negative payoff to the buyer. Proposition \ref{prop1} says the expected discounted payment the buyer actually pays should be strictly smaller than the upper bound:
	\begin{gather*}
		\frac{\mu^*}{\lambda } \int_0^{\infty} w_+ (  P( \delta^{\tau} \theta > y ) ) dy 
		-  \sum_{k=0}^{n}  (1-x^*)^{k}  \delta^{k} \frac{\mu^* \theta}{\lambda}  w_+(x^*)> 0.
	\end{gather*}
	Define the difference as a function of $\delta$ (normalizing $\frac{\mu^*\theta}{\lambda}=1$ to simplify notation):
	\begin{align*}
		0< D(\delta) &= \sum_{k=0}^{n} \Big( w_+(1-(1-x^*)^{k+1}) - w_+(1-(1-x^*)^{k}) \Big) \delta^{k}  -  \sum_{k=0}^{n}  (1-x^*)^{k}    w_+(x^*) \delta^{k}\\
		&= \sum_{k=0}^{n} \Big( w_+(1-(1-x^*)^{k+1}) - w_+(1-(1-x^*)^{k}) - (1-x^*)^{k}   w_+(x^*) \Big) \delta^{k}\\
		&= \sum_{k=0}^{n} a_k \delta^{k}.
	\end{align*}
	Now let me analysis the sign of $a_k$:
	\begin{align*}
		\frac{a_k}{(1-x^*)^{k} x^*} &= \frac{ w_+(1-(1-x^*)^{k+1}) - w_+(1-(1-x^*)^{k}) - (1-x^*)^{k}   w_+(x^*)}{(1-x^*)^{k} x^*}  \\
		&=  \frac{ w_+(1-(1-x^*)^{k+1}) - w_+(1-(1-x^*)^{k}) }{1-(1-x^*)^{k+1}-1+(1-x^*)^{k}} - \frac{w_+(x^*)}{x^*}.
	\end{align*}
	By mean value theorem, there exists $x_k \in [1-(1-x^*)^{k},1-(1-x^*)^{k+1}  ]$ such that
	\begin{gather*}
		\frac{ w_+(1-(1-x^*)^{k+1}) - w_+(1-(1-x^*)^{k}) }{1-(1-x^*)^{k+1}-1+(1-x^*)^{k}}= w_+'(x_k). 
	\end{gather*}
	Because $w_+$ is first concave and then convex, $a_k$ is first decreasing then increasing. Additionally, note that $a_0 = w_+(x^*) - 0 - w_+(x^*) = 0$. Therefore, I know there exists a number $k_0<n$ (if $k_0 \geq n$ then $D(\delta) \leq  0$, a contradiction) such that:
	\begin{gather*}
		a_k \leq 0, \text{ if }k \leq k_0; \quad a_k > 0 \text{ if }k> k_0.
	\end{gather*}
	If $n<\infty$, I can calculate $D'(\mu)$ for $\mu \in [\delta,1]$ by taking derivative by terms:
	\begin{align*}
		D'(\mu) &= \sum_k a_k  k   \mu^{k-1}  =
		\sum_{k\leq k_0}a_k  k   \mu^{k-1}  + \sum_{k> k_0}a_k  k   \mu^{k-1} \nonumber\\
		&\geq  \sum_{k\leq k_0}a_k  k_0   \mu^{k-1}  + \sum_{k> k_0}a_k  k_0   \mu^{k-1} \\
		&= \sum_{k}a_k  k_0   \mu^{k-1}   = k_0 D(\mu) > 0. \nonumber
	\end{align*}
	Consequently $D(1) >  D(\delta) >0$. The case of $n = \infty$ can be done by taking the $n$ to the limit.  In general the weak inequality always holds: 
	\begin{gather*}
		D(1) = w_+(1-(1-x^*)^{n+1}) - \frac{w_+(x^*)}{x^*}(1-(1-x^*)^{n+1})  \geq 0.\\
		\Rightarrow \frac{w_+(x^*)}{x^*} \leq \frac{w_+(1-(1-x^*)^{n+1})}{1-(1-x^*)^{n+1}} .
	\end{gather*}
	When $n>0$, $p_n = 1-(1-x^*)^{n+1} > x^*$. This jointly with the above inequality implies
	\begin{gather*}
		\frac{\frac{w_+(x^*)}{x^*}}{\frac{1}{x^*}(1-\delta) + \delta} < \frac{\frac{w_+(p_n)}{p_n}}{\frac{1}{p_n}(1-\delta) +\delta}.
	\end{gather*}
	However, because
	\begin{gather*}
		x^*\in \argmax  \frac{w_+(x)}{1-(1-x)\delta},\\
		\Rightarrow \frac{w_+(x^*)}{1-(1-x^*)\delta} \geq \frac{w_+(p_n)}{1-(1-p_n)\delta},\\
		\Rightarrow  \frac{\frac{w_+(x^*)}{x^*}}{\frac{1}{x^*}(1-\delta) + \delta} \geq  \frac{\frac{w_+(p_n)}{p_n}}{\frac{1}{p_n}(1-\delta) +\delta} .
	\end{gather*}
	This is a contradiction.
	
	In conclusion, for any $n>0$, $\sigma_n$ gives the buyer negative payoff. Thus, in the seller's favorite equilibrium, the buyer always chooses $\sigma_0$ in each period. Namely, she purchases at the current period, naively thinking that she will quit thereafter. The proposed mechanism achieves the upper bound of the profit $V^*$ and is therefore the optimal one.
	
	\paragraph{Uniqueness}
	The uniqueness part of the theorem is proved by carefully analyzing binding constraints of all relaxations made in the main text. 
	Now I prove the essential uniqueness of the optimal pricing process. I first show that for any fixed weighting function $w_+$, except for at most one $\delta$, the maximizer of the problem
	\begin{gather*}
		y \equiv  \frac{\lambda V^*}{\mu^* \theta} =  \max_F \frac{\int  w_+(  x  )  d F  }{1- \delta( 1 - \int x dF  )  }
	\end{gather*}
	is unique.
	Denote $g(x)$ as the smallest concave function that is greater than $w_+(x)$. Since $w_+(x)$ is strictly concave in $[0,p_+]$ and strictly convex in $[p^+,1]$. We know:
	\begin{gather*}
		g(x) = \begin{cases}
			w_+(x)  & \text{ if } x \in [0,p^+],\\
			\frac{ 1-x}{1-p^{+}} w_+(p^+) + \frac{x-p^+}{1-p^+} w_+(1) & \text{ if } x \in [p^+,1].
		\end{cases}
	\end{gather*}
	By the definition of $y$,
	\begin{align*}
		w_+(  x  )   &\leq y  (1- \delta( 1 -  x   )), \quad \forall x \in [0,1], \\
		w_+(  x_0  )   &= y  (1- \delta( 1 -  x_0   )), \text{ for some } x_0 \in [0,1],
	\end{align*}
	Notice $y  (1- \delta( 1 -  x   ))$ is concave in $x$, by the definition of $g$,
	\begin{align*}
		g(  x  )   &\leq y  (1- \delta( 1 -  x   )), \quad \forall x \in [0,1], \\
		w_+(  x_0  )   &= g(x_0) = y  (1- \delta( 1 -  x_0   )).
	\end{align*}
	Thus $y  (1- \delta( 1 -  x   ))$ is a supporting plane of $g$ at $x_0$. Because $g$ is strictly concave in $[0,p^+]$ and linear in $[p^+,1]$, $x_0 \in  [p^+,1]$ if and only if the slope of $y  (1- \delta( 1 -  x   ))$ coincides with $g$:
	\begin{gather*}
		\delta y = \frac{w_+(1)-w_+(p^+)}{1-p^+}
	\end{gather*} 
	It is clear that $y$ as a function of $\delta$ is strictly increasing in $\delta$, thus there is at most one $\delta$ satisfying this equation. For all other $\delta$, $x_0 \in [0,p^+]$. Because $g$ is strictly concave in $[0,p^+]$, we know $ y  (1- \delta( 1 -  x   )) > g(x)\geq w_+(x)$ for all $x \neq x_0$. Therefore, the degenerated distribution at $x_0$ is the unique maximizer:
	\begin{gather*}
		\delta_{x_0} = \argmax_F \frac{\int  w_+(  x  )  d F  }{1- \delta( 1 - \int x dF  )  }.
	\end{gather*}
	
	From now I focus on $\delta$ where $x_0=x^*$ is the unique maximizer. Consider any mechanism $(X_t,T_t)$ and the buyer's strategy $(\sigma^t)_t$ that achieve the optimal profit $V^*$. Because $\sigma^t$ satisfies SIR conditions under $(X_t,T_t)$,
	I can repeat the same relaxation as I did in the main text. Because $(X_t,T_t)$ and $(\sigma^t)_t$ indeed achieves the upper bound, all the relaxation must be tight. This implies:
	\begin{align}
		\hat{\tau}_k &= k, \quad \tau_k =k+1,  \label{eq7} \\
		\mu_k &= P ( \hat{\tau}_k \leq \tau < \tau_k  |  h_{\hat{\tau}_k} ,\tau\geq \hat{\tau}_k)  =  x^*,  \label{eq8}  \\
		\mathbb{E} [ \sum_{t= \hat{\tau}_k}^{\tau_k-1}  \delta^{t-\hat{\tau}_k}     T_t | h_{\hat{\tau}_k},\tau\geq \hat{\tau}_k ] &=  \frac{\mu^* \theta}{\lambda}  w_+( P ( \hat{\tau}_k \leq \tau < \tau_k |  h_{\hat{\tau}_k},\tau\geq \hat{\tau}_k ) ) \label{eq9}.
	\end{align}
	Equation \ref{eq7} means the buyer actually purchases in every period (conditioned on not getting the good), so on path history is $H^*$. This jointly with equation \ref{eq8} implies 
	\begin{gather*}
		P(X_t = 1|h_t \in H^*) = x^*.
	\end{gather*}
	These jointly with equation \ref{eq9} imply
	\begin{gather*}
		\mathbb{E} [    T_{t}| h_t \in H^* ] = \frac{\mu^* \theta}{\lambda}   w_+ (   x^*  ).
	\end{gather*}
	According to Proposition \ref{prop1}, the unique $T_{t}$ that has this expected revenue and satisfies SIR condition is the binary random price described in Theorem \ref{thm2}. This completes the proof.
\end{proof}

\bigskip

\subsection{Proof for Section 6.2}
\begin{proof}[Proof of Proposition \ref{prop_salience}]
	Because it is optimal to set $X=1$ for sure, the design problem is to design the distribution of $T$. I abuse notations to let $F$ denote the distribution of $T$. The problem is
	\begin{gather*}
		\Pi^* = \max_{F} \int y d F(y) \\
		\int (\theta  - y ) \sigma(\theta  - y,0)  dF(y) \geq 0.
	\end{gather*}
	Denote the Distribution of $\theta-y$ as $G$, the problem is equivalent to
	\begin{gather*}
		\theta - \Pi^* = \min_{G\in \Delta(-\infty,\theta]  } \int x d G(x) \\
		s.t. \quad \int x \sigma(x,0)  dG(x) \geq 0.
	\end{gather*}
	Or equivalently
	\begin{gather*}
		\theta - \Pi^* = \min_{\pi} \pi \\
		s.t. \quad 0 \leq  \max_{G\in \Delta(-\infty,\theta]} \int x \sigma(x,0) dG(x) \\
		s.t.\int x dG(x) = \pi.
	\end{gather*}
	The inner problem is:
	\begin{gather*}
		V(\pi) = \max_{G\in \Delta(-\infty,\theta]} \int x \sigma(x,0) dG(x) \\
		\int x dG(x) = \pi.
	\end{gather*}
	It  can be solved by standard convexification. 
	
	Let us first consider $V(0)$. When $\pi=0$, because $x \sigma(x,0)$ is strictly convex when $x>0$ and strictly concave when $x<0$, and the left derivative of $x \sigma(x,0)$ at $x=0$ is weakly smaller than the right derivative, the unique solution of convexification is to convexify $\theta$ and some point $x<0$. This also means $V(0)>0$ and so $\Pi^* < 0$. One can also see that if the price $T$ can be arbitrarily negative so $G \in \Delta \mathbb{R}$, the optimal value $V(0)$ does not exist.
	
	For $\pi<0$,  because $x \sigma(x,0)$ is strictly convex when $x>0$ and strictly concave when $x<0$, the optimal convexification that achieves $V(\pi)$ is unique: it is either singleton or a binary distribution over $\theta$ and some $x<0$. Whenever the optimal solution of $V(\pi)$ is a singleton, $V(\pi) = \pi \sigma(\pi,0) < 0$. Because $V(\pi)$ is clearly continuous in $\pi$, 
	\begin{gather*}
		\theta-\Pi^* = \min \{ \pi | V(\pi) \geq 0  \}.
	\end{gather*}
	and the unique optimal convexification of $V(\theta-\Pi^*)$, which is supported on $\theta$ and some negative price, is the unique binary random price in the optimal mechanism.
\end{proof}

\bigskip 
\subsection{Proof for Section 6.3}
\noindent
\begin{proof}[Proof of Lemma \ref{lemm3}]
	By the construction of $P^*$,
	\begin{align*}
		\lambda \int_0^{\infty} w_-( P^*_{\theta}(y)  ) dy &= \theta w_+ (x(\theta) ) - v(\theta)\\
		&=\lambda \int_0^{\infty} w_-( P_{\theta}(y)  ) dy.
	\end{align*}
	Note that the interim payoff for type $\theta$ who reports $\hat{\theta}$ in the original mechanism is:
	\begin{gather*}
		v(\theta,\hat{\theta}) = \theta w_+(x(\theta)) - 	\lambda \int_0^{\infty} w_-( P_{\hat{\theta}}(y)  ) dy.
	\end{gather*}
	Therefore under two direct mechanisms $(x,P)$ and $(x,P^*)$, the interim payoff is the same. Thus $(x,P^*)$ automatically satisfies IR and IC.
	
	On the other hand,
	according to Proposition \ref{prop1}, for any $\theta$,
	\begin{gather*}
		\int_0^{\infty} P(T^*(\theta) > y) dy \geq \int_0^{\infty} P(T(\theta) > y) dy,
	\end{gather*}
	and the inequality is strict if $T^*(\theta) \overset{d}{\neq} T(\theta)$.  This completes the proof.
\end{proof}	

\bigskip

\noindent
\begin{proof}[Proof of Proposition \ref{prop4}]
	Recall that the design problem is:
	\begin{gather*}
		\max_{x,P}   \int_{\underline{\theta}}^{\overline{\theta}}  \int_0^{\infty} P(y) dy   ~ f(\theta) d\theta , \\
		s.t. \qquad
		CPT(\theta,x(\theta),P_{\theta}) \geq  CPT(\theta,x(\hat{\theta}),P_{\hat{\theta}}), \\
		v(\theta) = CPT(\theta,x(\theta),P_{\theta}) \geq 0.
	\end{gather*}
	
	Following the standard approach of \cite{Myerson81}, recall that the equilibrium value function $v(\theta)$ of buyer for  is defined as
	\begin{gather*}
		v(\theta) =    \theta w_+ (x(\theta) )  ) - \lambda \int_0^{\infty} w_-( P_{\theta}(y)  ) dy
	\end{gather*}
	It is straightforward to duplicate the envelop theorem, whose proof I skip.
	\begin{lemma}
		\label{lemma2}
		For any incentive compatible mechanism $(x,P)$, $v(\theta)$ is continuous and convex (and therefore absolute continuous). The derivative $v'(t)$ exists almost every where and in addition,
		\begin{gather*}
			v(\theta) = v(\underline{\theta}) +  \int_{\underline{\theta}}^{\theta}  w_+ ( x(s) ) ds.
		\end{gather*}
	\end{lemma}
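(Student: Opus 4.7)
The plan is to follow the standard envelope argument adapted to CPT preferences. The key observation is that when a type $\theta$ agent reports $\hat{\theta}$, her interim payoff is
\begin{gather*}
v(\theta,\hat{\theta}) = \theta \, w_+(x(\hat{\theta})) - \lambda \int_0^{\infty} w_-(P_{\hat{\theta}}(y)) dy,
\end{gather*}
which is \emph{affine} in $\theta$ for each fixed $\hat{\theta}$, with slope $w_+(x(\hat{\theta}))$ and intercept $-\lambda \int_0^{\infty} w_-(P_{\hat{\theta}}(y))dy$ that does not depend on $\theta$. By IC, the equilibrium value function equals $v(\theta) = \sup_{\hat{\theta}} v(\theta,\hat{\theta})$, so it is the pointwise supremum of a family of affine functions and is therefore convex on $[\underline{\theta},\overline{\theta}]$. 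Convexity delivers continuity on the interior and absolute continuity on every compact subinterval, as well as differentiability almost everywhere.

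Next I would identify the derivative at any point $\theta$ where it exists. Since $v(\theta) \geq v(\theta,\theta_0)$ for every $\theta$ and equality holds at $\theta=\theta_0$, the affine map $\theta \mapsto v(\theta,\theta_0)$ is a supporting line of the convex graph of $v$ at $\theta_0$, so its slope $w_+(x(\theta_0))$ belongs to the subdifferential $\partial v(\theta_0)$. At any differentiability point this subgradient is unique and equals $v'(\theta_0)$, which yields $v'(\theta) = w_+(x(\theta))$ almost everywhere.

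Finally, absolute continuity of $v$ on $[\underline{\theta},\overline{\theta}]$ (extending $v$ to the left endpoint by its right limit, which exists by convexity) allows me to invoke the fundamental theorem of calculus:
\begin{gather*}
v(\theta) = v(\underline{\theta}) + \int_{\underline{\theta}}^{\theta} v'(s) \, ds = v(\underline{\theta}) + \int_{\underline{\theta}}^{\theta} w_+(x(s)) \, ds,
\end{gather*}
which is the claimed formula. The only step requiring any care is the identification $v'(\theta)=w_+(x(\theta))$ a.e.; this is entirely driven by the fact that the only $\theta$-dependence in $v(\theta,\hat{\theta})$ appears linearly through $\theta \, w_+(x(\hat{\theta}))$, so the nonlinear distortion induced by $w_-$ on the payment side drops out of the envelope slope.
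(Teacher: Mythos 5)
Your proof is correct and follows exactly the standard envelope-theorem route that the paper itself invokes (the paper explicitly skips this proof as a ``straightforward duplication''): affine-in-$\theta$ reporting payoffs, convexity of the pointwise supremum, supporting lines identifying $v'(\theta)=w_+(x(\theta))$ a.e., and the fundamental theorem of calculus. The only point worth making explicit is that the supporting affine functions all have slopes $w_+(x(\hat{\theta}))\in[0,1]$, so $v$ is in fact Lipschitz on all of $[\underline{\theta},\overline{\theta}]$; this gives absolute continuity up to the endpoints and cleanly justifies integrating from $\underline{\theta}$ rather than only on compact subintervals of the interior.
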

	
	According to Lemma \ref{lemm3}, for any allocation rule $x(\theta)$ it is without loss of generality to focus on the optimal binary price
	\begin{gather*}
		P^*_{\theta}(y)  = \begin{cases}
			0   \quad  &\text{ if }y \geq \frac{\mu^*}{\lambda p^*}  \Big( \theta w_+ (x(\theta) ) - v(\theta) \Big),\\
			p^* &\text{ if }y <   \frac{\mu^*}{\lambda p^*}  \Big( \theta w_+ (x(\theta) ) - v(\theta) \Big) .
		\end{cases}
	\end{gather*}	
	Combining these two observation, the design problem reduces to
	\begin{gather*}
		\max_{x(\theta),~v(\theta)}   \int_{\Theta} \Large(  \frac{ \mu^*}{\lambda}( \theta w_+ (x(\theta)  ) - v(\theta))   \Large)  ~ f(\theta) d\theta , \\
		s.t. \\
		v(\theta) = v(0) + \int_0^{\theta} w_+ (x(s)  )  ds,  \\
		w_+ (x(\theta) ) \text{ is increasing in } \theta, \\
		v(0) \geq 0.
	\end{gather*}
	Clearly, it is optimal to set $v(0)=0$. Integrating by part I get:
	\begin{gather*}
		\max_{q(\theta)}   \int_{\Theta}   \frac{ \mu^*}{\lambda}( \theta  - \frac{1-F(\theta)}{f(\theta)} )w_{+}(\theta)   ~ f(\theta) d\theta , \\
		s.t. \quad 
		w_{+}(\theta)  \text{ is increasing in } \theta. 
	\end{gather*}
	When the common prior $F$ is regular so that the virtual value function $\phi(\theta) = \theta - (1-F(\theta))/f(\theta) $ is increasing in $\theta$, simple point-wise maximization leads to the result.  When $F$ is not regular, one can use the standard ironing technique and replace the virtual value with the ironed version $\bar{\phi}(\theta)$.
\end{proof}

\end{document}